\documentclass[aps,pra,10pt,twocolumn,longbibliography,floatfix]{revtex4-2}

\usepackage[T1]{fontenc}
\usepackage[utf8]{inputenc}
\usepackage{lmodern}
\usepackage{booktabs,microtype}
\usepackage{amsmath,amssymb,amsthm,mathtools}
\usepackage{bm}
\ExplSyntaxOn
\cs_new:Npn \SupplementLetter #1 { \int_to_Alph:n {#1} }
\ExplSyntaxOff
\usepackage{graphicx}
\usepackage{pgfplots}
\pgfplotsset{compat=1.18}
\usepackage[hidelinks]{hyperref}
\hypersetup{pdftitle={Securing quantum error correction against misleading advice from AI agents},
 pdfauthor={A. Baris Ozguler},pdfsubject={Evidence-based protection of quantum error-correction recovery updates under unreliable or malicious advice}}
\usepackage[nameinlink,noabbrev]{cleveref}
\crefname{appendix}{Supplemental Material, Sec.}{Supplemental Material, Secs.}
\Crefname{appendix}{Supplemental Material, Sec.}{Supplemental Material, Secs.}

\newtheorem{theorem}{Theorem}
\newtheorem{lemma}[theorem]{Lemma}
\newtheorem{proposition}[theorem]{Proposition}

\theoremstyle{remark}
\newtheorem{remark}[theorem]{Remark}

\newcommand{\Ftwo}{\mathbb F_2}
\newcommand{\id}{\mathrm{id}}
\newcommand{\supp}{\operatorname{supp}}
\newcommand{\row}{\operatorname{row}}
\newcommand{\Tr}{\operatorname{Tr}}
\newcommand{\TV}{\operatorname{TV}}
\newcommand{\Ad}{\operatorname{Ad}}
\newcommand{\ord}{\operatorname{ord}}
\newcommand{\sgn}{\operatorname{sgn}}
\newcommand{\cH}{\overline H}
\newcommand{\Xx}{\overline X_x}
\newcommand{\Xy}{\overline X_y}
\newcommand{\Zx}{\overline Z_x}
\newcommand{\Zy}{\overline Z_y}

\newcommand{\risk}{\mathcal R}
\newcommand{\obs}{\mathcal O}

\newcommand{\FreshGain}{97.1\%}
\newcommand{\DelayedGain}{90.8\%}

\newcommand{\IncumbentExample}{0.1323553}
\newcommand{\CandidateExample}{0.0007456}
\newcommand{\GRURisk}{0.088018}
\newcommand{\BayesRisk}{0.089296}
\newcommand{\RetainRisk}{0.225490}
\newcommand{\BayesShots}{6581}
\newcommand{\ShotReduction}{19.7\%}

\begin{document}
\raggedbottom
\title{Securing quantum error correction against misleading advice from AI agents}
\author{A. Barış Özgüler}
\email{barisozguler@geeqchiqtech.com}
\affiliation{Haas School of Business, University of California, Berkeley, California 94720, USA}
\affiliation{GeeQChiQ Technologies LLC, Berkeley, California, USA}

\date{September 16, 2026}

\begin{abstract}
Can an attacker turn influence over an artificial intelligence (AI) adviser into a harmful quantum error-correction update? We identify an ambiguity in passive syndrome records that obstructs recovery selection, then show how additional calibration measurements support certified recovery updates under uncertainty and drift. In an odd-distance square toric code with error-free preparation, syndrome measurements, and recovery operations, opposite coherent $X$ rotations produce identical passive syndrome-history distributions. Yet a fixed phase correction can help at one sign and harm at the other. A terminal logical measurement on known encoded calibration states supplies the missing sign information. A separate evaluator accepts an update only when calibration uncertainty and a justified drift bound certify improvement over the current recovery, without assuming that the adviser recommends correctly. In simulated advice attacks, calibration-confidence checks reject harmful proposals while retaining beneficial updates under honest advice. We derive sufficient limits on calibration age that require improvement through deployment. In matched simulations, a validated channel-specific bound retains more beneficial updates than the general bound after accounting for evaluation time, while preventing the tested harmful activations under the stated drift assumption. A separate surface-code experiment includes stochastic circuit faults and noise changing during acquisition. Deterministic controllers achieve at least as many beneficial updates with the same observations. Violating the drift assumption permits harmful acceptance in the toric experiment. The results identify information required for recovery selection, establish conditional guarantees against harmful updates, and quantify the recovery improvements forgone through conservative acceptance.
\end{abstract}
\maketitle
\section{Introduction}
\label{sec:intro}
Quantum error correction (QEC) uses noisy measurement records to choose a recovery operation. A controller assisted by artificial intelligence (AI) that proposes changes to that operation introduces a security question: can an attacker turn influence over its advice into a harmful recovery update? A proposed change may be harmful because advice is mistaken or malicious, or because an accurate calibration has aged as the noise changes. The relevant question is therefore whether the available observations still justify the proposed update at the time of use. Rejecting every proposal would block harmful updates while also losing every opportunity to improve recovery. We assess protection together with the benefit retained by acceptance.

We connect this decision to an exact information limit. In the odd-distance square toric instrument studied here, opposite coherent-rotation signs produce identical passive syndrome-history distributions. Yet a fixed phase correction can help at one sign and harm at the other. A signed calibration supplies the missing information. A risk evaluator separate from the adviser then checks whether the proposed recovery improves on the current recovery throughout the physical conditions consistent with the measurements, their uncertainty, and a justified drift bound. The controller proposes an operation or requests calibration; the evaluator decides whether the operation can be activated.

Learning and reinforcement learning already perform useful QEC tasks~\cite{bausch2024,sivak2026reinforcement}, while classical adaptive estimation provides an essential comparison~\cite{wagner2021optimal,bhardwaj2026adaptive}. Neural models also support hardware quantum control by predicting pulse parameters~\cite{xu2022neuralcontrol}. Baseline-relative policy improvement, predictive safety filters, and controls on untrusted text are established approaches~\cite{laroche2019,wabersich2021,camel2025}. Guatto et al.~\cite{guatto2026adaptive} combine multi-agent code discovery with bandit-controlled adaptation to drifting noise. Syndrome-driven self-calibration also has guarantees for time-dependent drift~\cite{gong2026}. We use these approaches to identify which observations and deployment-time uncertainty bounds certify improvement over the incumbent recovery despite attacker-controlled advice. Large language model (LLM) proposals test this acceptance boundary; accepted beneficial updates and their recovery gains measure the benefit retained. Figure~\ref{fig:intuition} illustrates this question in the toric example. Figure~\ref{fig:unified-chain} follows the complete obstruction--calibration--recovery chain within the same toric instrument. A separate surface-code maintenance experiment tests timed decoder updates under stochastic circuit faults (Figs.~\ref{fig:timed-main} and~\ref{fig:timed-paired}).

Trusted calibration has both a provenance requirement and a physical requirement: the evidence must belong to the experiment and recovery under consideration, and its interpretation must remain valid at deployment. Authenticity establishes the origin of a measurement; the observation model, uncertainty bound, and characterized drift establish what that measurement supports. The evaluator is the acceptance service: it owns the records and checks workload and action identifiers. It is logically separate from proposal generation and outside the modeled advice attacker's control. Its implementation, clock, measurement source, and action enforcement are trusted components; the noise model and drift bounds are physical assumptions. The software prototype tests this separation within one process. Independent outcome assessment uses the true simulation parameters, kept separate from controller inputs. In the advice attacks, the adversary controls an external note. The security endpoint is the integrity of the activated recovery update: acceptance must remain justified despite attacker-controlled advice. A harmful proposal recommends a recovery with greater simulated risk than the incumbent; a harmful activation additionally requires acceptance. Each experiment specifies the tolerance and, where outcomes are sampled, the confidence-interval criterion used to establish harm. The attacker does not control trusted measurements, the evaluator, or the operation applied after acceptance. Separate physical-premise violations test failure of the assumed calibration-to-deployment relation. Both circuit studies compare the language model with a deterministic controller given the same calibration observations, actions and budget.

\begin{figure*}[t]
\centering
\includegraphics[width=\textwidth]{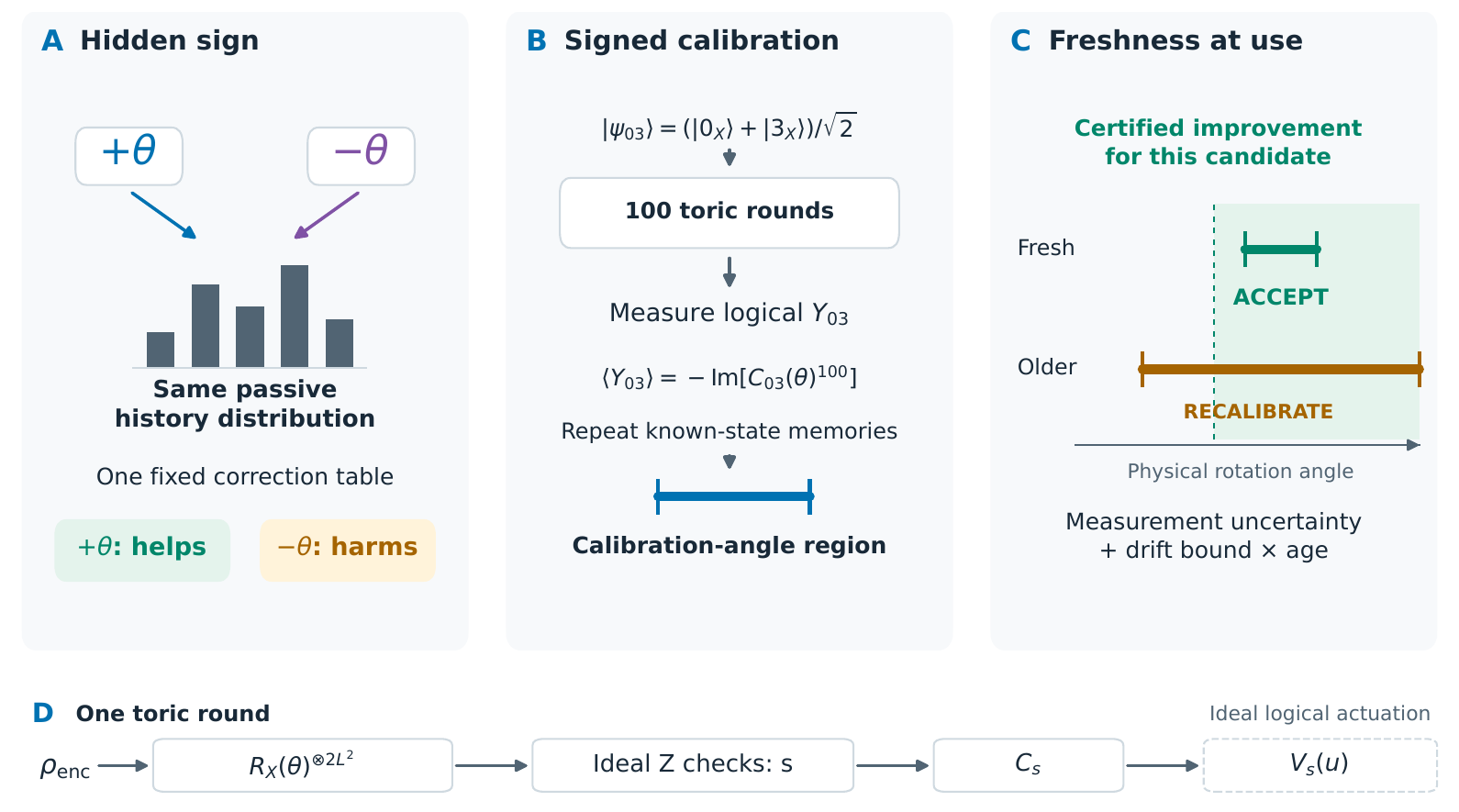}
\caption{\textbf{Trusted calibration separates advice from authority over recovery.}
(a) Opposite rotation signs give identical passive syndrome-history distributions in the odd-distance square toric instrument defined in Sec.~\ref{sec:information}, although one fixed correction table can help at one sign and harm at the other.
(b) Known encoded calibration states undergo 100 rounds of the same toric instrument and a terminal logical $Y_{03}$ measurement. Their sign-sensitive response supplies the missing observation; finite counts constrain the calibration angle.
(c) The evaluator admits a proposed update only when sampling uncertainty and a justified drift bound certify recovery improvement at deployment; otherwise, the controller retains the incumbent or requests new calibration. The adviser can propose an action, including one induced by attacker-controlled text; the independent criterion determines whether that action may be applied. The displayed intervals are schematic.
(d) The toric cycle uses ideal checks and $X$ recovery $C_s$; the dashed $V_s(u)$ applies the specified logical operation without additional error. Circuit definitions are in Supplemental Material, Sec.~\ref{sm:circuits}; the encoded calibration protocol is in Sec.~\ref{sm:unified-toric}. Figure~\ref{fig:unified-chain} gives quantitative evidence for the complete chain.}
\label{fig:intuition}

\end{figure*}

\label{sec:prior-work}
Classical decoding has several distinct complexity regimes. Bravyi, Suchara, and Vargo give efficient exact surface-code maximum-likelihood decoding for independent bit and phase flips with noiseless checks~\cite{bravyi2014efficient}. Bravyi et al. give efficient classical simulation of a restricted coherent-error model, with the chosen decoder's runtime added to the simulation cost~\cite{bravyi2018coherent}. General optimal stabilizer decoding has counting-complexity hardness results~\cite{iyer2015hardness}; surface-code decoding also has worst-case hardness when qubit-dependent Pauli probabilities are part of the input~\cite{fischer2024hardness}. Practical matching and neural decoders address structured noise models and finite latency budgets~\cite{higgott2025,senior2026scalable}. These computational questions differ from the observation obstruction studied here: a fast decoder can still use an incorrect or outdated noise model, and unlimited computation cannot distinguish conditions with identical observation laws.

Orsucci, Tiersch, and Briegel identify sign and periodicity ambiguities in individual graph-state stabilizer-outcome probabilities~\cite{orsucci2016estimation}. Our toric result treats the joint distribution of every passive syndrome history and connects its sign symmetry to different recovery consequences.

Coherent-error studies establish syndrome blindness and parity-dependent logical-input sensitivity in surface codes~\cite{bravyi2018coherent,huang2019coherent,venn2020thresholds,behrends2025beyond}. Hu, Liang, and Calderbank give generator-coefficient criteria for input dependence~\cite{hu2022designing}; toric path and partition-function calculations supply related geometric structure~\cite{iverson2020coherence,wichette2026partition,yang2026decoding}. Our two-logical-qubit specialization identifies the surviving product parity, proves its nonzero leading coefficient, and establishes exact sign symmetry of every passive history. These fixed-distance coherent results complement correctability--privacy duality~\cite{kretschmann2008complementarity} and Shen and Zhong's transcript-privacy bounds for noisy memories~\cite{shen2026transcript}.

The history law uses repeated quantum nondemolition (QND) measurement statistics~\cite{bauer2013repeated}; the syndrome-conditioned polar correction applies recovery by classical feedback~\cite{gregoratti2004feedback}. Two related results emerge from the same instrument. The parity coefficient and history bounds quantify leakage about the stored logical input. Exact noise-sign symmetry identifies a different ambiguity that changes which fixed recovery is beneficial. The confidence-set acceptance argument uses this second observation problem and a valid risk bound; it does not require the perturbative parity coefficient. Table~\ref{tab:theorem-literature} and Supplemental Material, Sec.~\ref{app:prior-comparison} separate established ingredients from the cancellation of contributions to the syndrome effects, the nonzero leading coefficient, and the attained scaling of distinguishability from complete histories derived here.

We outline the rest of the paper. Section~\ref{sec:information} establishes the toric information obstruction and its recovery consequence. Section~\ref{sec:acceptance} develops encoded calibration and acceptance guarantees. Section~\ref{sec:unified-results} evaluates the complete chain in that same instrument. Section~\ref{sec:timed-security} gives the separate timed surface-code extension, and Sec.~\ref{sec:discussion} draws the implications. The Supplemental Material contains proofs, complete protocols, all controller comparisons and additional readout and circuit validations.

\section{Toric example: missing information changes recovery}
\label{sec:information}
Consider a periodic square toric code of fixed odd distance $L\ge3$, encoding two logical qubits. Here \emph{ideal} means that preparation, projective syndrome measurement, terminal readout, and application of the specified recovery introduce no additional errors. The coherent rotations are the modeled noise. Phase discretization and actuation noise enter only the separately identified extensions. Each round applies uniform physical $R_X(\theta )=\exp(-i\theta X/2)$ rotations, extracts the complete $Z$ syndrome, and applies deterministic minimum-weight $X$ recovery. Let $K_s(\theta )$ be the corrected logical Kraus operator and $F_s(\theta )=K_s(\theta )^\dagger K_s(\theta )$ its effect. These statements concern this fixed measurement and recovery instrument. The dimensionless $\theta$ is a signed angle: $R_X(-\theta)=R_X(\theta)^\dagger$ reverses the rotation direction. Both signs describe evolution over a positive duration, not negative elapsed time.

For $P=\overline X_x\overline X_y$, every effect satisfies
\begin{align}
F_s(\theta )&=q_s(\theta )I_4+b_s(\theta )P,\nonumber\\
q_s(-\theta )&=q_s(\theta ),\qquad b_s(-\theta )=b_s(\theta ).
\label{eq:effect}
\end{align}
Thus the logical input enters the syndrome law through the single expectation $m_\rho=\operatorname{Tr}(P\rho)$. An error support can contribute to a syndrome effect only if it is a cycle contained in a cut of the dual graph. Cut cancellation and the odd-$L$ parity of logical representatives leave the scalar and product-parity classes, with coefficients that are even functions of the rotation angle.

The corrected Kraus operators commute in the logical-$X$ basis. Writing $p_s^\pm=q_s\pm b_s$, the complete $R$-round history law is
\begin{multline}
P_{\rho,\theta }^{(R)}(s_1,\ldots,s_R)
=\frac{1+m_\rho}{2}\prod_jp_{s_j}^+(\theta )\\
+\frac{1-m_\rho}{2}\prod_jp_{s_j}^-(\theta ),
\qquad P_{\rho,+\theta }^{(R)}=P_{\rho,-\theta }^{(R)}.
\label{eq:history}
\end{multline}
The sign symmetry holds for every input, angle, and history length in this instrument. A longer passive record therefore leaves the sign ambiguity intact. Sensitivity to the logical input is a distinct quantity: it first occurs at order $\theta ^{2L}$, with the following nonzero leading coefficient
\begin{equation}
[\theta ^{2L}]b_0(\theta )=-\frac{L[\binom{2L}{L}-2L]}{2^{2L-1}}.
\label{eq:coefficient}
\end{equation}
At fixed odd $L$ as $\theta \to0$, a pair of parity eigenstates attains one-round contrast in total-variation (TV) distance $\Theta_L(|\theta |^{2L})$. After the specified number of rounds, chosen to avoid a return of the accumulated logical rotation to the identity, the established order-$\theta ^L$ coherent channel term~\cite{iverson2020coherence} produces a diamond-norm distance from the identity that remains finite and nonzero, while the maximum complete-history contrast scales as $\Theta_L(|\theta |^L)$. A rare syndrome from the diagonal staircase attains this exponent: the event that it occurs at least once supplies the matching lower bound. Thus even the optimal equal-prior history classifier has vanishing advantage on this accumulation scale. These fixed-$L$ bounds have distance-dependent constants and validity neighborhoods; proofs are in the Supplemental Material.

Logical-input sensitivity and sign identifiability compare different quantities: changing the stored parity versus reversing the noise at fixed input. A record can reveal the former while carrying no sign information. More training data cannot resolve this exact ambiguity. Figure~\ref{fig:intuition}(a) illustrates the observation limit; the recovery comparison shows its operational consequence.

At $L=3$, numerical checks compare all 256 syndrome probabilities and evaluate the effect symmetry over 361 angles and four logical-$X$ sectors. The largest effect residual is $4.44\times10^{-16}$ (Supplemental Fig.~\ref{fig:support-risk}). These checks test the implementation; the coefficient derivation and history factorization establish the analytical result.

Fix two syndrome-dependent inverse-polar phase tables $V_s(u)$ at $\pm0.10$ rad. The incumbent $u_0$ retains minimum-weight recovery. The Supplemental Material describes the larger catalog. Define
\begin{align}
\mathcal E_{\theta ,u}(\rho)&=\sum_sV_s(u)K_s(\theta )\rho K_s(\theta )^\dagger V_s(u)^\dagger,\nonumber\\
\risk(\theta ,u)&=1-F_e(\mathcal E_{\theta ,u}^{H}),\qquad
D_u(\theta )=\risk(\theta ,u)-\risk(\theta ,u_0).
\label{eq:risk}
\end{align}
Here $\risk$ is the stationary $H$-round entanglement infidelity relative to the identity, and $D_u$ is the change in risk relative to the incumbent. Negative $D_u$ means improved recovery. At $L=3$, $|\theta |=0.10$, and $H=300$, the incumbent infidelity is $0.1323553$, compared with approximately $0.0007456$ for the correctly signed table. The same fixed $+0.10$ table harms recovery at the opposite sign [Supplemental Fig.~\ref{fig:analytic-toric}(b)]. Every comparator in this experiment uses the same fixed catalog; these losses exclude gate-synthesis errors and actuator overhead.

Keeping the phase table fixed while reversing $\theta $ withholds the hidden sign. The incorrectly signed table raises infidelity to approximately $0.4140468$.

The missing information matters only if it changes the decision. Let $[\theta ]_{\obs}$ contain the conditions giving the same law under an allowed observation experiment $\obs$. An update is uniformly beneficial on that class when $\sup_{\theta '\in[\theta ]_{\obs}}D_u(\theta ')<0$. Within the declared deterministic action catalog, an empty intersection of beneficial action sets excludes a uniformly beneficial choice from those observations. Enlarging the policy class, for example to randomized recovery, requires evaluating its own loss before drawing the same conclusion. Conversely, an action beneficial at both signs can be justified without identifying either sign. The encoded calibration below supplies the additional observation using the same noisy memory instrument. The Supplemental Material describes a separate unencoded-sentinel implementation and its transfer assumptions.

\section{Encoded calibration and evidence-based acceptance}
\label{sec:acceptance}
\subsection{A sign-sensitive observation in the same instrument}
The toric calibration-and-recovery demonstration uses $L=3$ and 18 data qubits. In the logical-$X$ basis, let $k_{sx}(\theta )$ be the diagonal entries of $K_s(\theta )$ and
\begin{equation}
 C_{xy}(\theta )=\sum_s k_{sx}(\theta )k_{sy}(\theta )^*.
 \label{eq:encoded-channel}
\end{equation}
The incumbent channel multiplies each matrix element $\rho_{xy}$ by $C_{xy}$. Prepare a dedicated known calibration memory in $(|0_X\rangle+|3_X\rangle)/\sqrt2$, where sectors 0 and 3 have logical-$X$ eigenvalues $(+,+)$ and $(-,-)$. Execute $H_c=100$ unchanged toric rounds and measure
\begin{equation}
 Y_{03}=-i|0_X\rangle\langle3_X|+i|3_X\rangle\langle0_X|.
 \label{eq:encoded-y}
\end{equation}
On the populated two-dimensional subspace its outcomes are $\pm1$. The terminal plus probability is
\begin{equation}
 q_Y(\theta )=\frac{1-\operatorname{Im}[C_{03}(\theta )^{H_c}]}{2},\qquad
 q_Y(-\theta )=1-q_Y(\theta ).
 \label{eq:probe}
\end{equation}
At $\theta =+0.10$ rad, $q_Y\simeq0.647136$, whereas at $-0.10$ it is $0.352864$ [Fig.~\ref{fig:unified-chain}(b)]. The passive syndrome histories still have identical laws. The terminal observation changes the available information, rather than extracting an absent sign from the passive record. Dedicated known states are used for calibration; the unknown stored state is not measured to determine the sign.

Each calibration record contains $N$ independently prepared memories with stationary angle $\theta _c$. A two-sided 99\% Clopper--Pearson interval~\cite{clopper1934} for the terminal plus count is inverted over the complete supported domain $[-0.15,0.15]$ rad. Every compatible angle interval is retained, including disconnected intervals. The encoded probe uses the memory noise directly and the preparation and readout assumptions above, so no separate sentinel-to-code transfer relation is invoked.

\subsection{A bound that includes the deployment path}
\emph{Scenario clock.} We express modeled durations in an abstract reference unit $T_0$. Clock variables such as $\tau$ and $A$ carry units $T_0$, the angular rate $v$ has units rad/$T_0$, and the surface-model rate $w$ has units $T_0^{-1}$. Equivalently, $\tau/T_0$, $A/T_0$, $vT_0$ and $wT_0$ give the corresponding normalized quantities. The signed rotation angle $\theta$ is distinct from the clock variable $\tau$. In the central toric and timed surface studies, one measured second of inference is assigned one $T_0$. The multistep maintenance extension instead assigns 60 measured seconds to one $T_0$; Table~\ref{tab:clock-conventions} lists the separate scenarios. These choices stipulate inference-to-device timing ratios. The adviser performs maintenance while the incumbent remains available; it is not required to respond once per quantum round.
The numerical experiment fixes 3 actions before evaluation: the incumbent and the two existing inverse-polar tables at $\pm0.10$ rad. Let $\mathcal C$ be the angle region compatible with the calibration count, and $D_u(\theta )$ the stationary 300-round excess infidelity in Eq.~\eqref{eq:risk}. A grid of 60001 points includes bounds on variation between grid points for both inference of compatible angles and maximization of recovery risk. Denote the resulting bound by $U_u^{\rm cal}\ge\sup_{\theta \in\mathcal C}D_u(\theta )$. The bound on variation between grid points follows analytically; the floating-point implementation adds a tested numerical tolerance. The guarantee below requires a valid enclosure. Supplemental Material, Sec.~\ref{sm:unified-toric} describes the original implementation; Sec.~\ref{sm:validated-timing} supplies an outward-rounded toric evaluator and a matched timing comparison.

In this toric scenario, noise changes only after the stationary acquisition. Write $\boldsymbol \theta $ for the angle sequence during the $H_d=300$ deployment rounds and $D_u(\boldsymbol \theta )$ for the corresponding composed-channel excess infidelity. If $|\dot \theta |\le v$, we compare deployment with the calibrated channel one round at a time. Summing the resulting channel-distance bounds gives
\begin{align}
 U_u(A)&=U_u^{\rm cal}+L_DvA,\nonumber\\
 L_D&=2nH_d=10800,\qquad n=18,\nonumber\\
 U_u(A)&\ge\sup_{\substack{\theta _c\in\mathcal C\\\text{allowed paths from }\theta _c}}
 D_u(\boldsymbol \theta ).
 \label{eq:deployment}
\end{align}
Here $A$ is the elapsed time from the start of the earliest calibration round to the end of deployment and includes acquisition, measured inference latency and delivery delay. The drift-rate bound supplied to the evaluator is $v=10^{-6}$ rad/$T_0$. The simulated quantum round lasts $10^{-6}T_0$; additional preparation, readout and logical-actuation durations are set to zero. Including the complete acquisition duration is conservative because the acquisition is stationary in this experiment. The coefficient and grid allowances are derived in Supplemental Material, Sec.~\ref{sm:unified-toric}.

The calibration record must correspond to the workload and recovery under consideration. The evaluator then admits the proposed recovery only when
\begin{equation}
 U_u(A)\le-\delta,\qquad\delta=0.001.
 \label{eq:accept}
\end{equation}
Otherwise the incumbent is retained. The rule certifies improvement throughout the allowed path family; it can reject an operation that happens to improve the executed path. More accurate observations can narrow sampling uncertainty, whereas an unsupported physical drift bound requires new physical information.

For a declared family of acquisitions $j$, let $E_j$ mean that the true deployment condition lies in the corresponding confidence region. Suppose $\Pr(E_j^c)\le\alpha_j$, $\sum_j\alpha_j\le\alpha$, the applicable preparation, readout, transfer, and drift premises hold, and each computed bound is valid for every allowed action. If the activated operation is the one evaluated, the following probability bound holds for any adaptive proposal strategy
\begin{equation}
\Pr\{\exists\text{ accepted }(j,u):D_u(\boldsymbol \theta _j)>-\delta\}\le\alpha.
\label{eq:guarantee}
\end{equation}
On $E_j$, Eq.~\eqref{eq:accept} guarantees the margin for every admitted action, including one selected after observing the data. A violation therefore requires an $E_j^c$, and the union bound gives Eq.~\eqref{eq:guarantee}. Proposer accuracy is absent from these assumptions. The probability is taken over the specified acquisition family, without conditioning on acceptance. The encoded-calibration intervals have 99\% coverage per acquisition. The 95\% deployment intervals assess sampled outcomes within an instance. Neither percentage is a study-wide safety probability; such a statement requires the explicit family allocation in Eq.~\eqref{eq:guarantee}. Absolute logical accuracy and workflow completion are separate endpoints.

Repeated proposals using one region consume no additional statistical allowance. Each new calibration block is assigned part of the total failure-probability budget, even when its measurements cannot support an update. If earlier observations determine a new acquisition policy, coverage must hold conditionally on that history or simultaneously over all permitted choices. The experiments use predetermined looks or budgets chosen before independent newly acquired measurements. This allocation also governs retries after an inconclusive calibration.

\emph{Calibration age} is elapsed time; evidence remains valid for a proposed update when its uncertainty propagated to deployment still certifies the proposed improvement. Recent noisy observations can be insufficient, while older precise observations can still certify improvement under slow drift. Supplemental Fig.~\ref{fig:analytic-toric}(d) evaluates this boundary for 3 nested shot budgets; these curves bound validity for the displayed observations and drift assumptions; they are not acceptance probabilities.

The same acceptance argument applies to other observations when their confidence regions and bounds on how physical noise changes affect recovery risk are justified. Supplemental Material, Sec.~\ref{sm:additional-validation} reports the separate sentinel, measured-readout, operation-identity and learned-proposal studies. The encoded demonstration below supplies the central chain without a change of code or noise model. The subsequent surface-code extension changes both the observation and the loss explicitly and allows noise to vary during acquisition.

\section{The complete chain in one toric circuit}
\label{sec:unified-results}
\begin{figure*}[t]
\centering
\includegraphics[width=\textwidth]{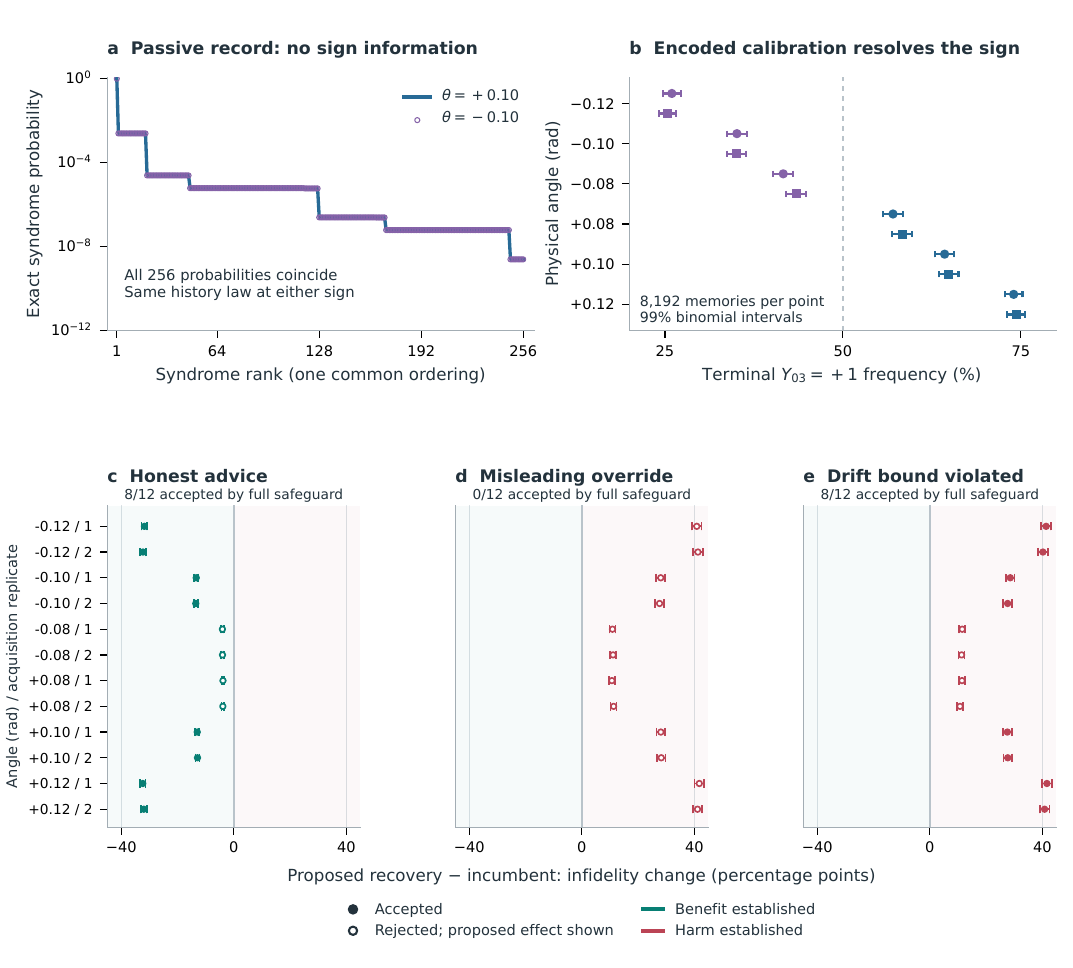}
\caption{\textbf{The same passive record can conceal opposite recovery consequences.}
(a) All 256 one-round syndrome probabilities evaluated from the exact instrument for the encoded calibration state coincide at $\theta =\pm0.10$ rad, displayed in one common ordering. The passive-history theorem extends the sign blindness to every history length.
(b) Terminal logical measurements distinguish the signs: each circle or square is one of two independently simulated 8192-memory acquisitions at each angle, with a 99\% binomial interval.
(c--e) Independent estimates of excess recovery infidelity for every Qwen proposal under honest advice, misleading advice, and a sign flip that violates the drift bound. The same 12 calibration workloads are used in all three panels. Filled circles mark full-safeguard acceptance; open circles show the independently estimated consequence of applying a rejected proposal. Rejection retains the incumbent, giving zero deployed excess risk. Colors identify benefit or harm established beyond $\pm0.001$. Each action has 16384 deployment readouts; 95\% intervals jointly cover the 3 action risks within each instance. All panels use the $L=3$ instrument of Sec.~\ref{sec:information}. Supplemental Fig.~\ref{fig:unified-conditions} includes both controllers and every rule.}
\label{fig:unified-chain}
\end{figure*}

\subsection{Measurements, proposals and independent assessment}
The 6 angles $\theta =\pm0.08,\pm0.10,\pm0.12$ rad each have 2 independent acquisition realizations, giving 12 calibration workloads. Each memory retains its 100 syndrome outcomes and terminal logical readout. Budgets 512, 2048 and 8192 are nested prefixes; the controller evaluation uses 8192 shots. The same 12 workloads are reused across honest advice, a misleading override, a $30T_0$ delivery delay, wrong evidence identity and an undeclared sign flip. These conditions therefore give 60 model requests, rather than 60 independent simulation workloads.

The local Qwen3:8b adviser~\cite{qwen2025} receives terminal counts, timing, record identity, the action ranking, and the set of updates currently supported by the observations. The misleading note deliberately requests the worst-ranked action, providing a controlled test of whether a harmful proposal can reach activation. Each request returns 1 activation proposal or retention. The deterministic comparator uses the same ranking and ignores the untrusted note. Prompts and outputs from the pretrained model are retained.

For each controller and condition, the coherent instrument is composed numerically along a 300-round deployment path. The valid paths drift at the declared rate after stationary acquisition. Independent terminal measurements test return to the initial maximally entangled state of the logical memory and reference. These Bell-return tests assess each of the 3 actions, with 16384 shots per action. The resulting infidelity intervals give candidate-minus-incumbent bounds: an interval entirely below $-0.001$ establishes benefit, and one entirely above $+0.001$ establishes harm. They cover all 3 action risks within an instance, not the entire study simultaneously. Each action uses a separate sampled stream; unlike the surface-code comparison, these are not paired shot differences. The terminal probabilities are evaluated from the channel-composition formula and checked against a separate computational-basis gate simulation; each deployment readout is sampled from this distribution rather than expanded into a gate-by-gate trajectory.

\subsection{Beneficial recovery and harmful advice}
Under honest advice, all 12 proposed updates are beneficial. Confidence-only acceptance admits 10, while the full confidence-and-timing rule admits 8 [Fig.~\ref{fig:unified-chain}(c)]. Every other case retains the incumbent. An untrusted note claiming an urgent operator override induces Qwen to choose the wrong signed table in all 12 misleading-advice workloads. Independent deployment intervals establish harm for all 12 proposals [Fig.~\ref{fig:unified-chain}(d)]. Both evidence-based rules reject them: confidence at calibration already suffices for this attack, and the full rule retains that protection. The deterministic controller follows the evaluator-computed ranking and achieves 12, 10, and 8 beneficial updates under authorization, confidence-only and full acceptance, respectively, even with the misleading note.

The $30T_0$ delivery delay instead isolates a utility cost: all 12 proposals remain beneficial on their executed paths, but the full rule rejects them because its propagated risk bound no longer meets the required improvement margin. Identity checks reject all 12 requests tied to the wrong workload. Supplemental Fig.~\ref{fig:unified-conditions} shows every outcome, including retention; the deployment intervals establish benefit or harm for every accepted update in this run.

A validated replay preserves the calibration records and proposals while charging measured evaluation time (Supplemental Material, Sec.~\ref{sm:validated-timing}). A channel-specific drift coefficient reduces the general constant from $10800$ to $123.459$. Both rules then use identical outward-rounded stationary bounds: the sharper rule retains all 12 delayed beneficial updates per controller, compared with 0 under the general bound. Valid-premise harmful activations remain 0; the invalid sign flip still causes harm. These are enclosed-model outcomes on latency-adjusted paths. The original computational-basis reconstruction and sampled-outcome protocol remain in Sec.~\ref{sm:unified-toric}.

\subsection{Failure when the physical premise is false}
An abrupt sign reversal after calibration violates the declared rate bound. Authorization alone admits 12 harmful updates, confidence-only acceptance admits 10, and the full rule admits 8. The same counts occur for the deterministic controller. Figure~\ref{fig:unified-chain}(e) resolves these simulated recovery harms in the toric circuit that supplied the information obstruction and calibration. A genuine calibration record can therefore support a harmful action if its assumed relation to deployment is false. The guarantee in Eq.~\eqref{eq:guarantee} is conditional on that relation, not on whether the proposer is a language model.

\subsection{Additional attacks, maintenance, and calibration expiry}
The 3 model configurations---Qwen3:8b, Gemma 3:1b, and GPT-OSS:20b~\cite{qwen2025,gemma2025,gptoss2025}---are tested on 6 development and 24 held-out acquisitions. Fixed notes spoof authority, allege measurement-label corruption, or hijack the output format; a 3-query adaptive attack uses replies, service feedback, and the visible ranking. Acquisitions are paired across models and attacks. With GPT-OSS, adaptive attacks induce harmful proposals in 12 of 24 workloads, compared with 4 for any success among 3 fixed attacks at the same query budget. Evidence checks block the observed harmful proposals. These counts describe the specified attack design.

Separate maintenance workflows require acquiring missing calibration, refreshing expired evidence, or binding the correct workload. Honest and misleading notes share observations, tools, and budgets. With enforced command formatting, Qwen completes 12 of 18 beneficial updates under honest notes and 0 under misleading notes. With native tool calls, GPT-OSS completes 12 and 8, respectively. Thus protected update integrity coexists with loss of useful progress. These maintenance arms share the 60-second clock convention; their completion rates are not a timing-controlled comparison with the central one-second scenario. Supplemental Material, Sec.~\ref{sm:expanded-security} retains all interfaces, failures, and deterministic comparisons.

To isolate the physical role of timing, we also evolve the noise continuously after stationary calibration, at a rate no greater than the declared bound, toward the opposite sign and then hold it fixed during deployment. Figure~\ref{fig:valid-drift-security} evaluates the same update at 161 predetermined delays per acquisition. Confidence-only acceptance produces harmful activations in 16 of 24 acquisition workloads; the full timing rule produces 0 on these paths. For all 16 initially certified updates, the validity period ends before the update becomes harmful. The sufficient validity periods range from 6.41 to 25.14$T_0$, whereas the first sampled harmful delays occur between 46500 and 66000$T_0$. At the first harmful delay the rotation still has its initial sign: the fixed correction ceases to improve recovery as the magnitude falls. Expiry marks loss of certification under a sufficient bound, not a prediction of when harm begins. The conservative inversion, grid, drift and acquisition allowances are separated in Supplemental Material, Sec.~\ref{sm:bound-budget}. The 3864 path evaluations reuse 24 acquisitions. Their abstract delays reach the failure regime; the validated replay above separately improves the shorter-delay toric test.

\begin{figure*}[t]
\centering
\includegraphics[width=\textwidth]{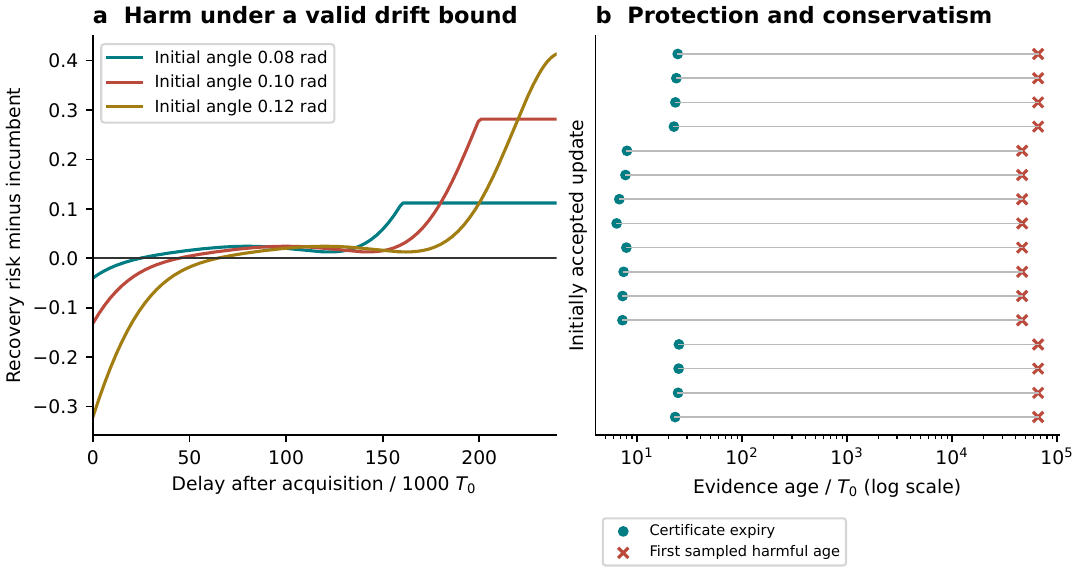}
\caption{\textbf{Timing prevents harmful reuse under a valid drift premise, with conservative expiry.}
(a) Exact-channel excess recovery risk for the initially preferred correction after a continuous rate-limited ramp from three positive initial angles; negative-angle counterparts obey the same symmetry. Each point gives the recovery risk when the angle is held fixed during deployment at the value reached after the preceding drift. Positive excess risk denotes harm. The rate is $10^{-6}$ rad/$T_0$; the ramp stops at the opposite initial angle. Acquisition precedes the ramp and deployment follows it.
(b) Each row is one of the 16 independently sampled acquisitions initially admitted by confidence. Circles mark the end of the certified validity period; crosses mark full acquisition-to-completion age at the first harmful point on the 1500$T_0$ delay grid. Their separation exposes conservatism, rather than estimating a sharp optimal validity period. The remaining 8 acquisitions are rejected initially. Calibration-only acceptance admits later harmful updates in all 16 rows; the full confidence-and-timing rule rejects them. No drift assumption is violated.}
\label{fig:valid-drift-security}
\end{figure*}

\section{Extension to time-resolved surface-code maintenance}
\label{sec:timed-security}
The toric example supplies the exact obstruction, encoded calibration and recovery in one coherent instrument. We now test the acceptance principle when stochastic circuit noise changes during acquisition, inference and deployment. This extension uses reinitialized known-state surface memories and classical decoder-prior updates, with a corresponding observation model and loss.

\subsection{Changing-noise acquisition and certification}
Stim and PyMatching provide 30-round rotated surface-code memories and matching decoders~\cite{gidney2021,higgott2025}. Each shot prepares a known logical state, extracts noisy checks and ends with a logical measurement; both logical bases are tested. Candidate and incumbent decoders process the same detector record, and risk is the larger failure probability across the two bases. We fix 6 candidate decoder priors before acquisition. Distance 3 uses local-gate amplification; distance 5 uses excess idle-$Z$ faults. These families test distinct maintenance conditions rather than isolate distance scaling.

Both controllers receive calibration summaries, the evaluator-computed action ranking and menu, the shot budget and an external note. One Qwen3:8b~\cite{qwen2025} request proposes an update or retention; the deterministic comparator follows the same ranking. The scenario clock includes the converted measured or stipulated processing latency.

For a shot starting at $s$, elementary fault laws interpolate as $Q_j(\lambda)=Q_j(0)+\lambda[Q_j(1)-Q_j(0)]$ at their instruction times. Assume $|\dot\lambda|\le w$ and let $\Gamma$ bound the sum of fault-law total-variation slopes. This stochastic coordinate $\lambda$ differs from the toric angle $\theta $. If basis-$b$ calibration shots start at $s_{bi}$ and deployment ends at $\tau_f$, coupling and averaging bound the transport of mean calibration risk by
\begin{equation}
\epsilon_b=\min\{1,\Gamma w(\tau_f-\overline{s}_b)\}.
\label{eq:timed-acquisition}
\end{equation}
Fixed-look intervals derived from Chernoff bounds on the Kullback--Leibler (KL) divergence allow independent shots with nonidentical probabilities. The bound on excess failure risk in the worse of the two memory bases using paired candidate and incumbent outcomes is derived in Supplemental Material, Sec.~\ref{sm:timed}. Propagating only from the batch-end timestamp omits part of the allowance.

The 24 independent workloads comprise 2 simulated parameter draws at each of 2 distances in 6 conditions: honest advice, misleading text, delivery delay, slow acquisition, wrong-workload evidence and an abrupt simulated noise step. The 3 nested budgets of 2048, 8192 and 32768 shots per basis give 72 instances per controller; the independent unit remains the workload. Acquisition spans $(0.143\text{--}2.294)T_0$ in the normal scenario, the normalized median Qwen inference contribution is $0.690T_0$, and independently sampled deployment of 16384 shots per basis spans $1.147T_0$. Device-cycle and communication durations, and the reference controller's $10^{-3}T_0$ latency, are stipulated; the Qwen contribution is converted from measured requests.

Every rule checks evidence identity. Authorization alone admits a matching nonzero proposal; confidence-only adds the calibration-time bound; end-timestamp and whole-acquisition rules add their respective transport allowances. Evidence-based acceptance requires improvement by $0.001$. Deployment samples are independent of calibration. Within each deployment sample, candidate and incumbent decoders are evaluated on the same measurement record. The resulting paired intervals establish benefit when wholly below $-0.001$ and harm when wholly above $+0.001$; other accepted-update outcomes remain unresolved. Their action coverage is within each stream, not simultaneous across the study. Zero observed resolved harm is an outcome count; an unresolved accepted effect is not counted as evidence of safety.

\subsection{Protection and retained benefit}
\label{sec:timed-results}
\begin{figure*}[t]
\centering
\includegraphics[width=\textwidth]{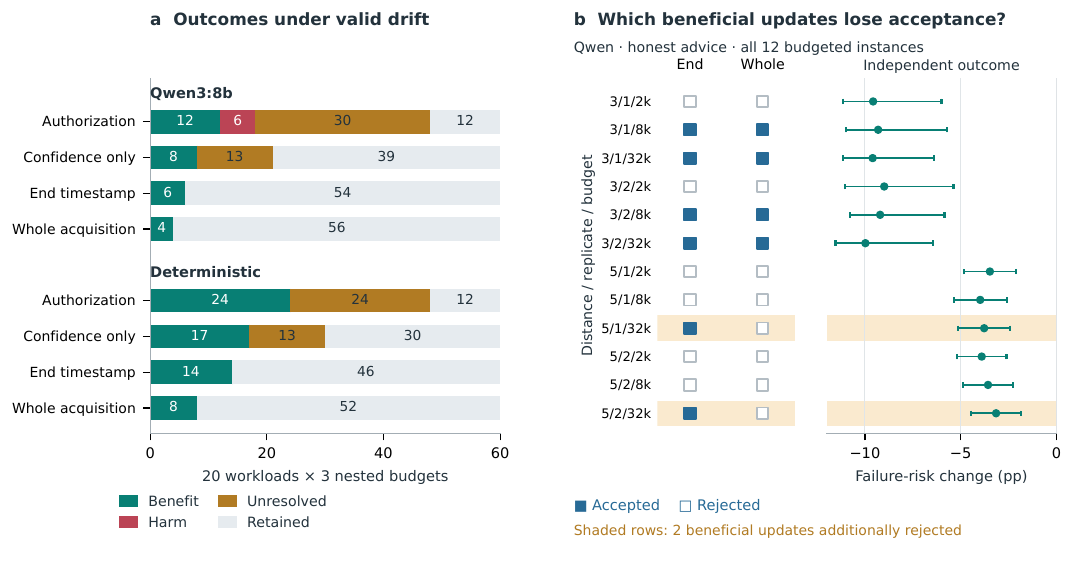}
\caption{\textbf{Protection, retained recovery benefit and the cost of stricter timing.}
(a) All observed outcomes under the stated premises: 20 independent workloads at 3 nested budgets for each controller and rule. Benefit/harm require independent paired deployment intervals below $-0.001$/above $+0.001$; other accepted-update outcomes remain unresolved. All rules check workload identity.
(b) All 12 honest-advice Qwen instances (4 workloads, 3 budgets). Filled/open squares denote acceptance/rejection by the end-timestamp and whole-acquisition rules. Shaded rows identify 2 distance-5 updates additionally rejected when acquisition time is included. Their independent deployment intervals still establish benefit, as do those of the other 10 proposals. Intervals cover all candidate actions within each stream, not the study simultaneously. Budget labels 2k/8k/32k mean 2048/8192/32768 shots per basis. Supplemental Figure~\ref{fig:timed-paired} includes all advice conditions; Supplemental Fig.~\ref{fig:analytic-timed} retains the analytical timing curves.}
\label{fig:timed-main}
\end{figure*}

Misleading text induces the worst-ranked nonzero proposal in all 12 attack instances. Authorization alone admits 6 resolved harmful updates, from 2 distance-5 workloads at 3 budgets, increasing worst-basis failure risk by 2.40--3.35 percentage points. All evidence-based rules reject the misleading proposals; calibration-time confidence already blocks this attack. This comparison isolates protection against manipulated advice from the separate role of elapsed-time accounting. Identity checks also reject all 12 wrong-workload requests. The following honest-advice comparison measures the recovery benefit retained by these checks.

Independent deployment intervals establish benefit for all 12 honest-advice Qwen instances, which reuse 4 workloads at 3 nested budgets. Whole-acquisition checking retains 4; end-timestamp checking retains 6 [Fig.~\ref{fig:timed-main}]. The 2 additional rejections are both distance-5 workloads at 32768 shots per basis. They expose the utility cost of requiring a certificate for the allowed conditions, even when the executed path benefits. A deterministic controller achieves more beneficial updates across the workloads satisfying the assumptions. In the delayed and slow-acquisition cases, no candidate meets the full acceptance criterion when the evaluator first lists the available updates; they exercise the complete clock without isolating later expiration of an initially admissible update.

An abrupt simulated noise step outside the declared rate bound produces 4 accepted Qwen effects whose benefit or harm remains unresolved. In 3 of those cases, an exploratory deployment interval excludes the improvement promised by the numerical bound: one certificate promises excess risk at most $-0.03996$, while its independent interval is $[-0.01532,0.01641]$. This exploratory interval excludes the promised certificate margin, while spanning both signs of the excess risk. The toric sign flip instead resolves harmful acceptance in its simulated recovery loss. Supplemental Figure~\ref{fig:timed-paired} preserves every Qwen proposal and interval, including rejected beneficial proposals and unresolved recovery outcomes. Section~\ref{sm:timed} gives condition-resolved counts and the circuit and timing protocols.

\section{Conclusions and implications for recovery security}
\label{sec:discussion}
The toric instrument connects sign ambiguity in passive histories to the calibration needed for recovery. Attacker-controlled notes induce harmful proposals that authorization alone admits. Evidence checks block their activation while retaining beneficial updates under honest advice. The guarantee requires valid observations and bounds, without assuming accurate advice.

Confidence-only acceptance rejects the harmful proposals induced by the tested advice attacks. Timing checks address a distinct failure mechanism: a justified update can lose its justification as the physical noise changes. In the short-delay toric workloads of Sec.~\ref{sec:unified-results} their additional effect is a utility cost. The bounded-ramp extension isolates a further security effect: confidence-only acceptance admits harmful updates after sufficiently long valid drift, while timing checks reject them. The original certificate expires much earlier than harm begins. The validated toric comparison recovers delayed benefit through a sharper channel bound under the same drift rate. An undeclared toric sign flip causes harmful acceptances; the surface-code step contradicts numerical certificates while leaving the sign of the simulated excess failure risk unresolved.

AI agents supply one tested class of proposer. A deterministic controller following the evaluator's ranking achieves at least as many beneficial updates and is the simpler choice here. Broader AI workflows need a task-specific benefit; both proposers face the same evidence requirements.

Additional observations resolve decision ambiguity; risk evaluation requires scaling and numerical-error checks. Misleading notes can suppress beneficial completion even when harmful activation is blocked: recovery integrity and workflow availability are distinct endpoints.

Adaptive estimation~\cite{bhardwaj2026adaptive} and reinforcement learning~\cite{sivak2026reinforcement} address changing noise. Longer acquisition reduces sampling uncertainty but can average over drift and delay deployment, as our surface-code extension illustrates. Supplemental Material, Sec.~\ref{sm:decoding-context} compares observation requirements and estimation costs.

Tsubouchi, Kwon, Jiang, and Yoshioka~\cite{tsubouchi2026advantages} identify a related information advantage: conditioning quantum measurements on the syndrome can outperform fixed logical measurements followed by classical processing. This concerns access to the quantum state, rather than faster decoding of the same classical record. Their low-noise asymptotic result assumes independent local Pauli noise and even-distance blocks and averages over logical states. Its extension to unknown drift requires validating the observation model.

The toric chain assumes ideal preparation, extraction, readout and logical actuation, with stationary calibration. The surface-code extension uses reinitialized known-state memories and classical decoder updates. Stored-state structure affects memory lifetime~\cite{otten2021impacts}, and spectator shifts affect control~\cite{ozguler2024spectator}. Continuous storage and hardware deployment require characterization of preparation, readout, noise drift, and the recovery actually applied.

\begin{samepage}
The security principle is to justify each activated update from independent evidence, assessing protection together with retained recovery benefit.\par
\end{samepage}

\begin{samepage}
\begin{acknowledgments}
This research was supported in part by the National Science Foundation under PHY-2309135. The author thanks participants of the AI for Quantum Matter 2026 program for discussions at the Kavli Institute for Theoretical Physics in Santa Barbara, California, United States.
\end{acknowledgments}
\end{samepage}

\noindent\textit{Data availability.}
The Supplemental Material provides the analytical derivations, protocols for the numerical simulations and AI-controller evaluations, and aggregate results.

\onecolumngrid
\clearpage
\setcounter{section}{0}
\setcounter{subsection}{0}
\setcounter{equation}{0}
\setcounter{figure}{0}
\setcounter{table}{0}
\setcounter{theorem}{0}
\renewcommand{\thefigure}{S\arabic{figure}}
\renewcommand{\thetable}{S\arabic{table}}
\renewcommand{\theHsection}{supp.\thesection}
\renewcommand{\theHsubsection}{supp.\thesection.\arabic{subsection}}
\renewcommand{\theHequation}{supp.\theequation}
\renewcommand{\theHfigure}{supp.\arabic{figure}}
\renewcommand{\theHtable}{supp.\arabic{table}}
\begin{center}
{\large\bfseries Supplemental Material}
\par\medskip
{\bfseries Contents}
\end{center}
\phantomsection
\label{sm:contents}
\smallskip
\begingroup
\small
\setlength{\parindent}{0pt}
\setlength{\parskip}{5pt}
\newcommand{\smentry}[2]{\noindent\hyperref[#1]{\makebox[2.5em][l]{\ref*{#1}.}#2}\nobreak\leaders\hbox{.}\hfill\nobreak\hyperref[#1]{\pageref*{#1}}\par}
\textbf{Model and recovery framework}\par
\smentry{sm:contents:01}{Code, instrument, and observables}
\smentry{sm:contents:02}{Exact information in the syndrome instrument}
\smentry{sm:contents:03}{Logical dynamics and operational separation}
\smentry{sm:contents:04}{Recovery from the same syndrome record}
\smentry{sm:contents:05}{Finite-angle validation at \texorpdfstring{$L=3$}{L=3}}
\smentry{sm:contents:06}{From exact information to justified recovery updates}
\smentry{sm:contents:07}{Measurement assumptions and conditional acceptance}
\smentry{sm:contents:08}{Robust acceptance under misleading and stale advice}
\smentry{sm:contents:09}{Recovery improvement, workflow completion, and transfer}
\medskip\textbf{Proofs, protocols, and validation}\par
\smentry{sm:contents:10}{Channel proofs and the general privacy reference}
\smentry{sm:contents:11}{Finite-precision recovery comparisons}
\smentry{sm:contents:12}{Extended validity tests}
\smentry{sm:contents:13}{Language-model maintenance and preliminary interface tests}
\smentry{sm:contents:14}{Learning, classical references, and acquisition cost}
\smentry{sm:contents:15}{Calibration and transfer to circuit-level decoder updates}
\smentry{sm:contents:16}{Geometry. Disconnected supports and the parity law}
\smentry{sm:contents:17}{Effect character expansion and exact circuit count}
\smentry{sm:contents:18}{Coefficient-level comparison with stabilizer effects and toric paths}
\smentry{sm:contents:19}{Leading-order line reduction and the channel remainder}
\smentry{sm:contents:20}{Product limit and null-prefix histories}
\smentry{sm:contents:21}{Exhaustive numerical protocol}
\smentry{sm:contents:22}{Recovery calculations and validation}
\smentry{sm:contents:23}{Exact channel evaluation and continuous bounds}
\smentry{sm:contents:24}{Information and duration references}
\smentry{sm:contents:25}{Training, splits, and workflow details}
\smentry{sm:contents:26}{Paired circuit risks and confirmation uncertainty}
\smentry{sm:contents:27}{Numerical precision and validation}
\smentry{sm:contents:28}{Theoretical context and prior-work comparison}
\smentry{sm:contents:29}{Calibration, activation, and circuit validation}
\smentry{sm:contents:30}{Detailed empirical panels supporting the quantitative overview}
\smentry{sm:contents:31}{Circuit definitions, measurement conventions, and shared schedules}
\smentry{sm:contents:32}{Time-resolved circuit maintenance and adversarial advice}
\smentry{sm:contents:33}{Additional validation of calibration and recovery updates}
\smentry{sm:contents:34}{Unified toric calibration and recovery: complete protocol}
\smentry{sm:contents:35}{Supporting recovery curves and analytical timing bounds}
\smentry{sm:contents:36}{Additional attacks, bounded drift, and multistep maintenance}
\smentry{sm:contents:37}{Validated toric bounds and matched timing comparison}
\endgroup

\clearpage
\begin{center}\textbf{Supplemental Material}\end{center}
\twocolumngrid
\section{Code, instrument, and observables}
\label{sm:contents:01}
\label{sec:model}

The supporting studies use distinct observation models and endpoints. Sections~\ref{sec:model}--\ref{sec:numerics} define the toric instrument and its exact results. Sections~\ref{cal:sec:rule}--\ref{sec:automation-summary} introduce the separate sentinel calibration and controller comparisons. Section~\ref{sec:completion} gives the readout, activation, and stationary-acquisition circuit protocols. Sections~\ref{sm:timed}, \ref{sm:unified-toric}, and \ref{sm:expanded-security} give the timed surface-code, encoded toric, and additional attack protocols, respectively. Throughout, \emph{observational nonidentifiability} means that different conditions have the same allowed observation law; \emph{sampling uncertainty} is finite-data uncertainty within a specified observation model; and \emph{physical-model uncertainty or violation} concerns whether preparation, readout, transfer and drift assumptions describe deployment. Additional shots address the second problem, a different measurement may address the first, and independently justified physical characterization addresses the third.

The analytical acceptance statement is a conditional theorem about a valid risk enclosure. Numerical channel evaluations and simulation audits check implementations; sampled deployment intervals assess outcomes. In the tests that hold proposals fixed while comparing acceptance rules, harm means positive evaluated excess infidelity. The multistep controller study uses the thresholds $D_u<-\delta$ and $D_u>\delta$, whereas the central toric and timed surface studies require deployment intervals wholly beyond those thresholds. A margin violation $D_u>-\delta$ can still be an improvement relative to the incumbent. The validated toric replay in Sec.~\ref{sm:validated-timing} instead classifies interval-enclosed model risks after adding evaluation latency. Each study states its sampling unit and coverage scope; pooled counts across these studies are not a common success rate. Independence of samples refers to the stated sampling law. A separate implementation provides a cross-check of a calculation, while separation of evaluator and adviser restricts who can change records or authorize an action. Neither software distinction makes reused observations statistically independent.

\subsection{Periodic square toric code}

Fix an integer $L\ge2$.  Vertices and faces are both labeled by
$\mathbb Z_L^2$, and the $n=2L^2$ physical qubits occupy tagged edges
\begin{align}
 h_{x,y}&:(x,y)\longrightarrow(x+1,y),
 &i(h_{x,y})&=x+Ly,\nonumber\\
 v_{x,y}&:(x,y)\longrightarrow(x,y+1),
 &i(v_{x,y})&=L^2+x+Ly.
 \label{eq:edge-labels}
\end{align}
Coordinates are modulo $L$; tags remain distinct when edges share endpoints
at $L=2$.  For an edge word
$a\in\Ftwo^{E_L}$, write $X(a)=\prod_eX_e^{a_e}$ and let $|a|$ be its
Hamming weight.  The binary matrices $H_X$ and $H_Z$ contain respectively the
star and plaquette supports,
\begin{align}
 (H_X)_{x,y}&=\{h_{x,y},h_{x-1,y},v_{x,y},v_{x,y-1}\},\nonumber\\
 (H_Z)_{x,y}&=\{h_{x,y},h_{x,y+1},v_{x,y},v_{x+1,y}\}.
 \label{eq:checks}
\end{align}
All binary algebra is over $\Ftwo$, and $H_XH_Z^{\mathsf T}=0$.  The common
$+1$ eigenspace of these Calderbank--Shor--Steane (CSS) generators encodes two qubits because both check
matrices have rank $L^2-1$.  We use the logical frame
\begin{align}
 \Xx&=\prod_{x\in\mathbb Z_L}X_{v_{x,0}}, &
 \Xy&=\prod_{y\in\mathbb Z_L}X_{h_{0,y}},\nonumber\\
 \Zx&=\prod_{y\in\mathbb Z_L}Z_{v_{0,y}}, &
 \Zy&=\prod_{x\in\mathbb Z_L}Z_{h_{x,0}}.
 \label{eq:logical-frame}
\end{align}
The codespace isometry $V:\mathbb C^4\to\mathcal H_{\rm phys}$ is fixed by
\begin{align}
 |\overline{00}\rangle
 &=|\row H_X|^{-1/2}\sum_{u\in\row H_X}|u\rangle,\nonumber\\
 V|ab\rangle&=\Xx^a\Xy^b|\overline{00}\rangle.
 \label{eq:codespace-isometry}
\end{align}
For $c=(c_1,c_2)\in\Ftwo^2$, we abbreviate
$\overline X^c=\Xx^{c_1}\Xy^{c_2}$.
This is standard toric-code geometry \cite{gottesman1997stabilizer,
dennis2002topological}; the tags and signs fix $L=2$ and all phases below.

\subsection{One-round instrument}

Preparation, syndrome extraction, recovery, and terminal readout are ideal: they introduce no additional errors. The coherent rotation below is the noise process. Extra syndrome-conditioned logical corrections are applied as specified; separate extensions explicitly introduce phase discretization or actuation noise. Measurement outcomes remain random according to the Born rule.

Let $\widetilde H_Z$ be a fixed independent row basis of $H_Z$, with rank
$r=L^2-1$, and $s\in\Ftwo^r$ the reduced complete syndrome.  Removing the
single plaquette constraint loses no information.  For each reachable $s$,
choose a minimum-weight word $r_s$
with $\widetilde H_Zr_s=s$, breaking ties by the lexicographically smallest
ascending edge-index tuple.  Let $C_s=X(r_s)$ and $\Pi_s$ project onto that
syndrome sector, including the $+1$ star sector.  Each round physically applies $C_s$ in this frame before the next round begins.

For a real, dimensionless signed angle $\theta$, the physical perturbation and the induced
logical Kraus operators are
\begin{equation}
 U_L(\theta )=\prod_{e\in E_L}e^{-i\theta X_e/2},
 \qquad
 K_s(\theta )=V^\dagger C_s\Pi_sU_L(\theta )V.
 \label{eq:kraus-definition}
\end{equation}
Here $\theta$ specifies rotation direction and magnitude, while $\tau$ denotes clock time. For a constant signed angular rate $\Omega$ acting for a positive duration $\Delta\tau$, $\theta=\Omega\Delta\tau$; changing the sign of $\Omega$ reverses the rotation without reversing time. The equality of the passive record laws at $\pm\theta$ is a property of the instrument proved below, not a claim that inverse unitaries are indistinguishable by every measurement.

The unnormalized branch map, syndrome effect, and averaged corrected channel
are
\begin{align}
 \mathcal J_s^{(\theta )}(\rho)&=K_s(\theta )\rho K_s(\theta )^\dagger,\nonumber\\
 F_s(\theta )&=K_s(\theta )^\dagger K_s(\theta ),\qquad
 \Lambda_L(\theta )=\sum_s\mathcal J_s^{(\theta )}.
 \label{eq:instrument-objects}
\end{align}
Completeness gives $\sum_sF_s(\theta )=I_4$ and makes $\Lambda_L(\theta )$ completely
positive and trace preserving, while $\{F_s(\theta )\}_s$ is the logical syndrome
positive-operator-valued measure (POVM).  For any trace-one logical density operator $\rho$, effect $F_s(\theta )$
gives
\begin{equation}
 p_s(\theta |\rho)=\Tr[F_s(\theta )\rho].
 \label{eq:syndrome-probability}
\end{equation}
The effect governs record information; $\mathcal J_s^{(\theta )}$ is the
unnormalized state update.  No branch normalization occurs, so zero branches
cause no division.  For unitary $U$, write $\Ad_U(A)=UAU^\dagger$.

Separate the scalar and nonscalar parts of an effect by
\begin{equation}
 q_s(\theta )=\frac14\Tr F_s(\theta ),
 \qquad
 D_s(\theta )=F_s(\theta )-q_s(\theta )I_4.
 \label{eq:effect-traceless}
\end{equation}
For an analytic matrix function, $\ord_\theta A$ is the smallest nonnegative Taylor
degree with a nonzero coefficient, and $\ord_\theta 0=+\infty$.  We define the
first Taylor order of logical-input dependence in the syndrome effects and maximal one-round logical-input distinguishability by
\begin{align}
 b_{\rm rec}(L)&=\min_s\ord_\theta D_s(\theta ),\label{eq:record-order}\\
 \eta_L(\theta )&=\sup_{\rho,\sigma}
 \frac12\sum_s\left|\Tr\!\left[F_s(\theta )(\rho-\sigma)\right]\right|,
 \label{eq:eta-definition}
\end{align}
over logical density operators.  Thus $\eta_L$ measures distinguishability between encoded inputs at one fixed physical angle $\theta $.  For discrete laws, the total-variation (TV) distance is
$\TV(p,q)=\frac12\sum_s|p_s-q_s|$.

\begin{table}[t]
\caption{Main notation and operational roles.}
\label{tab:notation}
\scriptsize
\begin{tabular}{lp{0.72\columnwidth}}
\hline
\textbf{Symbol} & \textbf{Operational role} \\
\hline
$L$ & Toric-code linear size; $n=2L^2$ physical qubits \\
$\mathcal J_s^{(\theta )}$ & Unnormalized state-update map for branch $s$ \\
$F_s(\theta )$ & Measurement effect determining the probability of outcome $s$ \\
$\Lambda_L(\theta )$ & Averaged corrected logical channel \\
$d_{\rm aff}$ & Minimum cut-covered nontrivial logical support \\
$b_{\rm rec}(L)$ & First Taylor order of a nonscalar effect \\
$\eta_L(\theta )$ & Worst-case one-round record distance \\
$\Delta_{\rm rec}(\theta ,R)$ & Worst-case $R$-round history distance \\
$R_*(\theta )$ & Explicit nonresonant repeated-round schedule \\
\hline
\end{tabular}
\end{table}

\subsection{A positive calibrated-recovery example}
\label{cal:sec:example}
We first show a positive recovery example using the instrument just defined, before asking what its passive record can identify. Numerical calibration experiments fix $L=3$, so $N=n=18$. In this section, $N$ denotes the physical-qubit count and $H$ the memory horizon denoted by $R$ in the exact theorems. The main encoded-calibration experiment uses $n$ for qubits and $N$ for calibration memories. The initial feasibility sweep uses the true simulated angle to select a fixed candidate before testing acceptance from calibration counts. The later controller workflows select candidates without access to that angle. The inverse-polar construction is derived in Sec.~\ref{sec:recovery}.

The single-qubit rotation is $R_X(\theta )=\exp(-i\theta X/2)$, as in Eq.~\eqref{eq:kraus-definition}. The logical Kraus operators $K_s(\theta )$ are diagonal in the simultaneous logical-$X$ eigenbasis. An action $u$ specifies a fixed diagonal phase table $V_s(u)$, and its one-round channel is
\begin{equation}
 \mathcal E_{\theta ,u}(\rho)=\sum_s V_s(u)K_s(\theta )\rho K_s(\theta )^\dagger V_s(u)^\dagger.
 \label{cal:eq:channel}
\end{equation}
The incumbent $u_0$ adds no phase correction. The other 12 actions use inverse-polar phases calibrated at $\pm0.025$, $\pm0.05$, $\pm0.075$, $\pm0.10$, $\pm0.125$, and $\pm0.15$ radians. The catalog consists of twelve syndrome-conditioned recovery tables computed from the model. The tables are fixed before evaluation.

For a stationary $H$-round memory, define the loss and excess loss by
\begin{align}
 \risk(\theta ,u)&=1-F_e(\mathcal E_{\theta ,u}^{H}),\\
 D_u(\theta )&=\risk(\theta ,u)-\risk(\theta ,u_0).
 \label{cal:eq:excess}
\end{align}
Here $F_e$ is entanglement fidelity relative to the identity logical channel. Negative $D_u$ is beneficial; positive $D_u$ is harmful relative to the incumbent. The complete simulated channel determines whether an update is beneficial or harmful. The commuting phase tables can be accumulated and applied terminally. Every comparator can defer correction to the terminal step, placing online and deferred actuation on the same footing.

At $|\theta |=0.1$ and $H=300$, the incumbent infidelity is \IncumbentExample, whereas the correctly calibrated phase table gives \CandidateExample. The initial feasibility sweep uses 72 conditions, spanning six signed angles, three horizons, and four probe budgets. For 8192 probe shots, 1000 calibration replicates at each sign admit the correctly chosen table in 85.8\% and 83.6\% of cases. These candidates, selected using the true angle, show that the calibration rule can accept beneficial updates. The subsequent workflows evaluate selection based only on observations. Workflows using only observed information are evaluated in Sec.~\ref{sec:automation-summary} and Supplemental Material, Sec.~\ref{cal:sec:learning}.

\begin{figure*}[t]\centering\includegraphics[width=\textwidth]{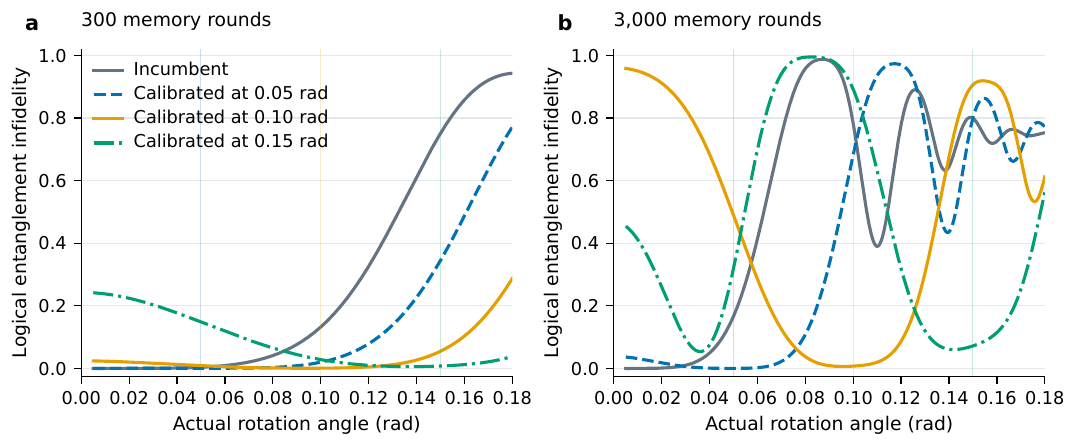}\caption{Beneficial and harmful recovery under calibration mismatch. Each curve evaluates the exact logical-channel formula numerically at 2001 angles; the panels show two memory horizons. Fixed phase tables can greatly reduce infidelity near their calibration point and worsen it elsewhere. The plotted grid illustrates the risk, while the evaluator also bounds its variation between grid points. The correction assumes error-free logical actuation; the appreciable incumbent infidelity makes its recovery benefit visible.}\label{cal:fig:risk}\end{figure*}

\section{Exact information in the syndrome instrument}
\label{sm:contents:02}
\label{sec:algebra}
\subsection{Effect characters and commuting Kraus operators}

Assign each edge an independent angle $\epsilon_e$ and write
$U_L(\bm\epsilon)=\prod_e\exp(-i\epsilon_eX_e/2)$.  With
$h_z=\widetilde H_Z^{\mathsf T}z$ for $z\in\Ftwo^r$, the exact expansion of
the syndrome effects is
\begin{align}
 F_s(\bm\epsilon)
 &=\sum_{[f]\in\mathcal L_X}B_{s,[f]}(\bm\epsilon)\,\overline X_{[f]},
 \label{eq:effect-expansion}\\
 B_{s,[f]}&=\sum_{u\in\row H_X}\beta_{s,f+u},\label{eq:class-sum}\\
 \beta_{s,f}
 &=(-i)^{|f|}2^{-r}
 \sum_{\substack{z\in\Ftwo^r\\f\subseteq h_z}}
 (-1)^{z\cdot s}
 \prod_{e\in f}\sin\epsilon_e
 \prod_{e\in h_z\setminus f}\cos\epsilon_e,
 \label{eq:character-formula}
\end{align}
for $f\in\ker H_Z$, and $\beta_{s,f}=0$ otherwise.  Here
$\overline X_{[f]}$ is the logical $X$ operator associated with the quotient
class $[f]$.  Equation~\eqref{eq:character-formula} follows by expanding
$U_L(\bm\epsilon)$ in $X$ supports $a$ and inserting the syndrome character
$2^{-r}\sum_z(-1)^{z\cdot(\widetilde H_Za+s)}$.  The four local pair sums
$1,\cos\epsilon_e,0,-i\sin\epsilon_e$ yield the containment condition.
Summing over $s$ leaves $z=0$, hence $\sum_sF_s=I_4$.

Expanding the Kraus operators in logical $X$ classes gives
\begin{align}
 K_s(\theta )&=\sum_{c\in\Ftwo^2}\kappa_{s,c}(\theta )\overline X^c,
 \label{eq:kappa-expansion}\\
 \kappa_{s,c}(\theta )&=
 \sum_{\substack{a:\widetilde H_Za=s\\{}[a+r_s]=c}}
 \cos(\theta /2)^{n-|a|}\bigl[-i\sin(\theta /2)\bigr]^{|a|}.
 \label{eq:kappa-formula}
\end{align}

\subsection{The all-angle parity law and sign ambiguity}

For an integer $R\ge1$, define the exact retained history law
\begin{equation}
 P_{\rho,\theta }^{(R)}(s_1,\ldots,s_R)
 =\Tr\!\left[
 \mathcal J_{s_R}^{(\theta )}\circ\cdots\circ
 \mathcal J_{s_1}^{(\theta )}(\rho)\right]
 \label{eq:history-law}
\end{equation}
and its maximal logical-input distance
\begin{equation}
 \Delta_{\rm rec}(\theta ,R)=
 \sup_{\rho,\sigma}\TV\!\left(P_{\rho,\theta }^{(R)},P_{\sigma,\theta }^{(R)}\right).
 \label{eq:history-distance}
\end{equation}
Rounds need not be independent.  The conditional logical state after a prefix
sets the next syndrome law.  For $R=0$, the law is unit mass on the empty
record and $\Delta_{\rm rec}(\theta ,0)=0$.

The phase-parity selection mechanism has precedents in
\cite{huang2019coherent,venn2020thresholds,behrends2025beyond,cheng2025designs}.
Here it identifies the observable product parity of this two-logical-qubit
toric instrument.
The subsequent product mixture is the repeated quantum nondemolition (QND) construction of
Bauer, Benoist, and Bernard \cite[Secs.~3 and~5, Eq.~(5)]{bauer2013repeated}.
Its pointer states are the simultaneous logical-$X$ eigenstates; states
with the same parity have equal emission laws.  Write $P=\Xx\Xy$. The conserved degenerate
sectors are $\Pi_\pm=(I_4\pm P)/2$, with initial weights
$\Tr(\Pi_\pm\rho)=(1\pm m_\rho)/2$.  At angles where $p^+=p^-$ the two
sectors are themselves observationally indistinguishable.  The specific
toric calculation is this all-angle parity identification and its nonzero
order-$\theta ^{2L}$ leakage coefficient in \cref{thm:record-onset}.

\begin{proposition}[Exact parity record and sign symmetry]
\label[proposition]{prop:exact-record}
For fixed odd $L\ge3$ and every real $\theta $, there are real even functions
$b_s(\theta )$ such that
\begin{equation}
 F_s(\theta )=q_s(\theta )I_4+b_s(\theta )P,\qquad P=\Xx\Xy.
 \label{eq:exact-parity-effects}
\end{equation}
Set $p_s^\pm(\theta )=q_s(\theta )\pm b_s(\theta )$ and
$m_\rho=\Tr(P\rho)$.  The $p^\pm$ are probability laws, and for every $R\ge1$,
\begin{multline}
 P_{\rho,\theta }^{(R)}(s_1,\ldots,s_R)
 =\frac{1+m_\rho}{2}\prod_{j=1}^R p_{s_j}^+(\theta )\\
 +\frac{1-m_\rho}{2}\prod_{j=1}^R p_{s_j}^-(\theta ).
 \label{eq:exact-history-mixture}
\end{multline}
Consequently,
\begin{align}
 \eta_L(\theta )&=\sum_s|b_s(\theta )|,\label{eq:eta-exact}\\
 \Delta_{\rm rec}(\theta ,R)
 &=\TV\bigl((p^+)^{\otimes R},(p^-)^{\otimes R}\bigr),
 \label{eq:history-exact-tv}\\
 P_{\rho,-\theta }^{(R)}&=P_{\rho,\theta }^{(R)}.
 \label{eq:history-sign-symmetry}
\end{align}
Inputs with equal $m_\rho$ have exactly identical history laws at all angles
and all history lengths.
\end{proposition}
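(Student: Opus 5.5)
Start from the exact character expansion \eqref{eq:effect-expansion}--\eqref{eq:character-formula} at uniform angle $\epsilon_e=\theta$. It writes $F_s(\theta)=\sum_{[f]}B_{s,[f]}(\theta)\overline X_{[f]}$ over the four logical $X$ classes $I,\Xx,\Xy,P$. The proof then proceeds in four steps.

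\textbf{Step 1: only the identity and $P$ classes survive.} It suffices to show $B_{s,[f]}=0$ for the two single-logical classes $[\Xx]$ and $[\Xy]$. A representative $f$ contributes to $\beta_{s,f}$ only when $f\subseteq h_z$ for some $z$, i.e.\ $f$ is a cycle contained in a cut $h_z$ of the dual graph. For odd $L$, a cut crossing a noncontractible dual cycle meets the corresponding logical representative class with a fixed parity. I would make this precise by pairing $z$ with $z'$ obtained by complementing the plaquette set (equivalently $h_{z'}=h_z+\mathbf 1$-type involution when $L$ odd). The aim is a sign-reversing involution on the terms of the class sum \eqref{eq:class-sum} for $[\Xx]$ and $[\Xy]$, so that the cut contributions cancel and only the scalar and product-parity classes remain.

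A possibly simpler route for this step is symmetry. The instrument is invariant under the lattice symmetry exchanging the two cycle directions composed with the global map $X_e\mapsto$ itself. More usefully, note $\beta_{s,f}$ carries the phase $(-i)^{|f|}$. For odd $L$, representatives of $[\Xx]$ and $[\Xy]$ have odd weight while $[I]$ and $[P]$ have even weight, since $\row H_X$ has even weights and $|\Xx|=|\Xy|=L$. Hermiticity of $F_s$ forces each $B_{s,[f]}$ to be real (the $\overline X_{[f]}$ are Hermitian and linearly independent). Odd-weight terms are purely imaginary times real trig products, so $B_{s,[\Xx]}$ and $B_{s,[\Xy]}$ are both real and imaginary, hence zero. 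This is the argument I would commit to.

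\textbf{Step 2: realness and evenness.} The same parity count shows the surviving coefficients are real. Each term of $q_s$ and $b_s$ has even $|f|$, so it contains an even number of $\sin\theta$ factors times cosines, and is therefore even in $\theta$. This gives \eqref{eq:exact-parity-effects}. Positivity and completeness make $p^\pm_s=\Tr[F_s\Pi_\pm]/\Tr\Pi_\pm\ge0$ with $\sum_s p^\pm_s=1$, so $p^\pm$ are probability laws.

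\textbf{Step 3: the history law.} Use the commuting-Kraus structure \eqref{eq:kappa-expansion}. Each $K_s$ is a polynomial in $\Xx,\Xy$, hence diagonal in the logical-$X$ basis, and all $K_s$ commute with $\Pi_\pm$. The composed branch then gives
\[
\Tr\bigl[K_{s_1}^\dagger\cdots K_{s_R}^\dagger K_{s_R}\cdots K_{s_1}\rho\bigr]=\Tr\Bigl[\prod_j F_{s_j}\,\rho\Bigr].
\]
Since $F_{s_j}=p^+_{s_j}\Pi_++p^-_{s_j}\Pi_-$, the product equals $\prod_j p^+_{s_j}\Pi_++\prod_j p^-_{s_j}\Pi_-$, which yields \eqref{eq:exact-history-mixture}. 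Evenness of $q_s,b_s$ then gives \eqref{eq:history-sign-symmetry}. Equal $m_\rho$ gives identical laws.

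\textbf{Step 4: the distance formulas.} The TV distance between the mixtures for $\rho$ and $\sigma$ is $\tfrac{|m_\rho-m_\sigma|}{2}\,\TV$-type $\ell_1$ of $(p^+)^{\otimes R}-(p^-)^{\otimes R}$. This is maximized at $m=\pm1$, giving \eqref{eq:history-exact-tv}. For $R=1$ this reduces to \eqref{eq:eta-exact} via $p^+_s-p^-_s=2b_s$.

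\textbf{Main obstacle.} The key risk is Step 1. Its validity relies on linear independence of the logical operators on the codespace. It also relies on the weight-parity claim for every representative of a class, which holds because stabilizer rows have weight $4$.
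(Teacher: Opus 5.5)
Your proof is correct. Steps 2--4 follow the paper's argument essentially line for line:
\begin{itemize}
\item evenness from the even number of sine factors;
\item $p_s^\pm=\Tr[F_s\Pi_\pm]/\Tr\Pi_\pm$ for positivity and normalization;
\item the commuting logical-$X$ algebra, which collapses the history effect to $\prod_jF_{s_j}$;
\item resolution on $\Pi_\pm$;
\item the factor $(m_\rho-m_\sigma)/2$ in the total-variation distance.
\end{itemize}
The difference is in Step 1.

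The paper removes the single-logical classes term by term using cycle--cut orthogonality. A support $f\in\ker H_Z$ survives in \eqref{eq:character-formula} only if $f\subseteq h_z$ for some $z$, and then $|f|=f\cdot h_z\equiv0\pmod 2$. So every odd-weight support, in particular every representative of $[\Xx]$ or $[\Xy]$ at odd $L$, has $\beta_{s,f}=0$ individually. The surviving terms are real because $(-i)^{|f|}=\pm1$.

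Your committed argument instead shows that only the aggregate coefficients vanish. Each $\beta_{s,f}$ in an odd class is purely imaginary, while Hermiticity of $F_s$ and linear independence of $I_4,\Xx,\Xy,P$ force $B_{s,[f]}$ to be real. This is valid and more elementary. It needs no cut structure, and not even the character transform, since $\overline{u_a}\,u_{a+f}$ equals $(-i)^{|f|}(-1)^{|a\cap f|}$ times a real number for the physical support amplitudes $u_a$.

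It does, however, carry less information. The paper reuses the same support-level statement in Theorem~\ref{thm:record-onset}. There it determines which weight-$2L$ supports survive, shows they share a common sign, and counts them as $M_L$. Your Hermiticity argument cannot do that work, whereas the paper's route lets one lemma serve both results.

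The tentative complementation involution in your first paragraph is not needed and should be dropped.
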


\begin{proof}
Every support $f$ surviving \cref{eq:character-formula} is a cycle
contained in a cut, hence has even weight.  Star generators have even
weight, while the two chosen logical $X$ representatives each have odd
weight $L$.  Thus the parity of a cycle in logical class $(c_1,c_2)$ is
$c_1+c_2$ modulo two.  Only classes $(0,0)$ and $(1,1)$ can survive.  The
factor $(-i)^{|f|}$ is real, and uniform sign reversal multiplies its sine
product by $(-1)^{|f|}=1$.  Both $q_s$ and $b_s$ are therefore real and
even.  Positivity and completeness of the POVM make $q_s\pm b_s$
nonnegative and normalized.

By \cref{eq:kappa-expansion}, all $K_s$ and their adjoints belong to
the commuting algebra generated by $\Xx,\Xy$.  The effect of a complete
history is consequently
\begin{equation}
 (K_{s_R}\cdots K_{s_1})^\dagger(K_{s_R}\cdots K_{s_1})
 =\prod_{j=1}^R F_{s_j}.
 \label{eq:history-effect-product}
\end{equation}
Resolve this product on the projectors $(I_4\pm P)/2$ and take its trace
against $\rho$ to obtain \cref{eq:exact-history-mixture}.  Differences of
two mixtures are $(m_\rho-m_\sigma)/2$ times the difference of the two
product laws.  The maximum $|m_\rho-m_\sigma|=2$ is attained by states in
opposite parity sectors, proving \cref{eq:eta-exact,eq:history-exact-tv}.
Evenness proves \cref{eq:history-sign-symmetry}.
\end{proof}

The history is conditionally independent and identically distributed (i.i.d.) given a conserved parity label, although
it is generally correlated when that label is unknown.  For distinguishing
the two parity sectors with equal priors, an optimal classifier compares
$\prod_j p_{s_j}^+$ with $\prod_j p_{s_j}^-$.  This product form handles
zero probabilities without logarithms.  Syndrome counts are sufficient
statistics by this product form of repeated QND measurements with the same instrument
\cite{bauer2013repeated}; temporal ordering supplies no extra input
information for this instrument.  The contraction bound in Sec.~\ref{sec:histories} also applies without this commuting structure.

\subsection{The geometric constraint on nonscalar effects}
\label{sec:geometry}

The ordinary logical quotients and distances are
\begin{align}
 \mathcal L_X=\ker H_Z/\row H_X,
 &\quad d_X=\min_{a\in\ker H_Z\setminus\row H_X}|a|,\nonumber\\
 \mathcal L_Z=\ker H_X/\row H_Z,
 &\quad d_Z=\min_{b\in\ker H_X\setminus\row H_Z}|b|.
 \label{eq:ordinary-distance}
\end{align}
Effects impose an additional incidence constraint.
\begin{equation}
 d_{\rm aff}:=\min\left\{|a|:
 \begin{array}{l}
 a\in\ker H_Z\setminus\row H_X,\\[-2pt]
 \exists h\in\row H_Z:\ \supp a\subseteq\supp h
 \end{array}\right\},
 \label{eq:affine-distance}
\end{equation}
with $+\infty$ for an empty set.  Cycle--cut orthogonality makes admissible
supports even; circuit decomposition makes this equivalently the shortest
even, homologically nontrivial dual-graph circuit.  ``Affine'' follows
binary-matroid usage \cite{abdi2022clean}; only \eqref{eq:affine-distance} is
used.

\begin{lemma}[Cut containment for a circuit]
\label{lem:cut-cover}
Let $C$ be a simple circuit of a loopless labeled graph $G$, allowing a
length-two circuit of distinct parallel edges. Then
\begin{equation}
 \bigl(\exists S\subseteq V(G):E(C)\subseteq\delta(S)\bigr)
 \quad\Longleftrightarrow\quad |E(C)|\text{ is even}.
 \label{eq:cut-cover-circuit}
\end{equation}
\end{lemma}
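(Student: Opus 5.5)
The plan is to prove the two implications separately. Both rest on the fact that $\delta(S)$ consists exactly of the edges whose endpoints receive different values of the indicator $\chi_S:V(G)\to\Ftwo$. Write the simple circuit as a cyclic sequence of distinct vertices $w_0,w_1,\ldots,w_{k-1}$ with edges $e_j$ joining $w_j$ and $w_{j+1}$ (indices mod $k$), so that $|E(C)|=k$. Here $k\ge2$, and the case $k=2$ is the pair of distinct parallel edges allowed by the hypothesis. Looplessness guarantees that every edge has two distinct endpoints, so membership in a cut is well defined.

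\emph{Forward direction.} Suppose $E(C)\subseteq\delta(S)$. Then $\chi_S(w_{j+1})=\chi_S(w_j)+1$ in $\Ftwo$ for every $j$. Going once around the circuit and summing these $k$ relations gives
\[
0=\sum_{j}\bigl[\chi_S(w_{j+1})-\chi_S(w_j)\bigr]=k \pmod 2.
\]
Hence $k$ is even. Equivalently, the cut space is orthogonal to the cycle space over $\Ftwo$, and $E(C)$ is a cycle contained in the cut, so $|E(C)|=|E(C)\cap\delta(S)|$ is even.

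\emph{Reverse direction.} Suppose $k$ is even. Take $S=\{w_j: j\text{ odd}\}$. Every other vertex of $G$ may be placed on either side, since its edges are irrelevant. Consecutive circuit vertices have indices of opposite parity. This holds for the wrap-around edge $e_{k-1}$, joining $w_{k-1}$ and $w_0$, precisely because $k$ is even. Therefore every $e_j$ has exactly one endpoint in $S$, and $E(C)\subseteq\delta(S)$.

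The only step that needs care is the labeled-graph bookkeeping. First, a simple circuit visits each vertex at most once, so the alternating labeling is well defined. Second, for $k=2$ the two distinct parallel edges share the same endpoint pair $\{w_0,w_1\}$ and are both cut by $S=\{w_1\}$. I expect no real obstacle beyond stating these conventions. For the application to $d_{\rm aff}$, the lemma is used with $G$ the dual graph of the torus, where tags keep parallel edges distinct when $L=2$. In that setting $\supp a\subseteq\supp h$ with $h\in\row H_Z$ is exactly containment in a dual-graph cut.
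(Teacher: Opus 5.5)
Your proposal is correct and follows essentially the same route as the paper's proof. Both use the forced alternation of the membership indicator around the circuit, which closes consistently only at even length, and the alternating-vertex choice of $S$ for the converse, with one endpoint chosen in the length-two parallel-edge case.
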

\begin{proof}
Write the circuit as $v_0,v_1,\ldots,v_m=v_0$ and let $s_j$ indicate
membership in $S$. Every edge crossing the cut requires
$s_{j+1}=1-s_j$, which closes consistently only when $m$ is even.
Conversely, choose the alternating vertices of an even circuit as $S$.
For two tagged parallel edges, choose one endpoint. Extra edges in
$\delta(S)$ do not affect containment.
\end{proof}
This step applies the standard graph characterization of cut containment.
There must exist a cut containing the circuit: a fixed arbitrary cut need not cover
an even circuit. Nor does even total weight suffice for disconnected
supports; two disjoint triangles give a counterexample. In general a
support is cut-contained exactly when its support subgraph is bipartite.
Indeed, a containing cut gives a two-coloring of every supported edge;
conversely, choose one bipartition class from each connected component
as $S$ and assign isolated ambient vertices arbitrarily. Every supported
edge then crosses $\delta_G(S)$, irrespective of other ambient edges.
The circuit-decomposition argument needed here is given in
Supplemental Material, Sec.~\ref{app:geometry}.

\begin{theorem}[Toric distance and affine-distance parity law]
\label{thm:distance-law}
For every integer $L\ge2$, the code in \cref{eq:edge-labels,eq:checks} has
parameters $[[2L^2,2,L]]$ and
\begin{equation}
 d_X(L)=d_Z(L)=L,
 \qquad
 d_{\rm aff}(L)=
 \begin{cases}
 L,&L\text{ even},\\
 2L,&L\text{ odd}.
 \end{cases}
 \label{eq:distance-law}
\end{equation}
\end{theorem}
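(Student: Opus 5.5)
The plan is to translate everything into homology of the $L\times L$ square-lattice torus and then count steps with a parity argument. First I will fix the graph dictionary. By \cref{eq:checks}, each plaquette row $(H_Z)_{x,y}$ is the set of four edges bounding face $(x,y)$. A word $a$ lies in $\ker H_Z$ exactly when every face meets $\supp a$ evenly, that is, when $a$ is a cycle of the dual graph $G^*$ (vertices are faces, and each edge joins its two adjacent faces). Star rows become dual face boundaries, so $\mathcal L_X=\ker H_Z/\row H_X$ is $H_1(T^2;\Ftwo)\cong\Ftwo^2$ for the dual torus $G^*$, which is again an $L\times L$ periodic grid. A sum $h$ of plaquette rows is a dual coboundary $\delta_{G^*}(S)$ for a set $S$ of faces. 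The constraint in \cref{eq:affine-distance} therefore reads $\supp a\subseteq\delta_{G^*}(S)$.

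The parameters come first. The codimension count $2L^2-2(L^2-1)=2$ gives $k=2$ from the stated ranks. For $d_X$, I will use the standard projection argument. Every closed walk in the grid torus has integer windings $(w_1,w_2)$. Its mod-2 homology class is $(w_1,w_2)\bmod 2$. Its numbers of horizontal and vertical steps are at least $|w_1|L$ and $|w_2|L$, because the horizontal displacement of the walk equals $w_1L$, and likewise for the vertical displacement. A nontrivial cycle decomposes into edge-disjoint circuits, at least one of which is nontrivial. That circuit therefore has length at least $L$, and the straight representatives in \cref{eq:logical-frame} attain $L$. The same argument on the primal graph gives $d_Z=L$, since $\ker H_X/\row H_Z$ is primal homology.

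For $d_{\rm aff}$, I will take a minimizer $a$ and decompose it into edge-disjoint circuits. Some circuit $C$ is homologically nontrivial, and $\supp C\subseteq\supp a\subseteq\delta(S)$. By \cref{lem:cut-cover}, $C$ is even, and it is itself admissible with $|C|\le|a|$. Hence $d_{\rm aff}$ is the minimum length of an even nontrivial circuit of $G^*$. Conversely, \cref{lem:cut-cover} shows that every such circuit is cut-contained.

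The key parity step is that, for a circuit with windings $(w_1,w_2)$, the number of horizontal steps is congruent to $w_1L$ modulo 2, because the $\pm1$ steps sum to $w_1L$. Similarly, the number of vertical steps is congruent to $w_2L$. The length is therefore congruent to $(w_1+w_2)L$ modulo 2.

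When $L$ is even, every circuit has even length. The straight loop of length $L$ is then admissible, and the lower bound $L$ gives $d_{\rm aff}=L$.

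When $L$ is odd, evenness forces $w_1\equiv w_2 \pmod 2$. Nontriviality excludes the case where both windings are even. Both windings are therefore odd, so the class is $(1,1)$ and the length is at least $(|w_1|+|w_2|)L\ge 2L$. The matching construction is the diagonal staircase: alternate one horizontal and one vertical dual step $L$ times each. This gives a simple closed circuit of length $2L$ with windings $(1,1)$. It is even, so \cref{lem:cut-cover} supplies the cut, with $S$ taken as alternate faces along the staircase. Hence $d_{\rm aff}=2L$.

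I expect the main obstacle to be bookkeeping rather than any deep step. Two points need care. The first is identifying $\ker H_Z$ and the plaquette-row span with dual cycles and dual cuts under the exact edge tags of \cref{eq:edge-labels}, including the parallel tagged edges at $L=2$. The second is justifying that mod-2 nontriviality is detected by the integer windings of a circuit. For the second point, I will use the intersection number with the dual $Z$ logicals $\Zx,\Zy$, which equals $w_i \bmod 2$.
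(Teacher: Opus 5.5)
Your proposal is correct and follows essentially the same route as the paper. You read $H_Z$ as the dual-graph incidence matrix, so that $\ker H_Z$, $\row H_Z$ and $\row H_X$ become cycles, cuts and dual face boundaries. You reduce $d_{\rm aff}$ to a single even nontrivial circuit via edge-disjoint circuit decomposition and \cref{lem:cut-cover}. You then use the lifted step count $w\ge L(|w_1|+|w_2|)$, $w\equiv L(w_1+w_2)\pmod 2$, with the straight loop (even $L$) and the diagonal staircase (odd $L$) as witnesses. Your intersection numbers with $\Zx,\Zy$ are exactly the paper's seam parities, so the identification of $\mathcal L_X$ with $\Ftwo^2$ is the same as well.
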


\begin{proof}
Regard $H_Z$ as the vertex--edge incidence matrix of the labeled dual graph.
Then $\ker H_Z$ is its cycle space, $\row H_Z$ its cut space, and $\row H_X$
spans dual-face boundaries.  Seam parities identify
$\ker H_Z/\row H_X\simeq\Ftwo^2$, and any nontrivial Eulerian support contains
a nontrivial circuit.  If a length-$w$ oriented circuit lifts to
displacement $L(p,q)$ in the square-grid universal cover, step counting gives
\begin{equation}
 w\ge L(|p|+|q|),
 \qquad
 w\equiv L(p+q)\pmod2.
 \label{eq:winding-bound}
\end{equation}
Its class is $(p\bmod2,q\bmod2)$, so $w\ge L$, attained by straight
winding-one circuits; primal--dual symmetry gives $d_Z=L$.

A cut-contained circuit must be even; conversely, alternating vertices of an
even simple circuit gives a containing cut (for the $L=2$ parallel-edge
circuit, select one endpoint).  Hence even $L$ admits a straight length-$L$
logical circuit.  For odd $L$, \cref{eq:winding-bound} makes $p+q$ even, so
nontrivial winding forces both $p,q$ odd and $w\ge2L$.  The staircase
\begin{equation}
 C_{\rm diag}(L)=\{h_{j,j},v_{j+1,j}:j\in\mathbb Z_L\}
 \label{eq:diagonal-staircase}
\end{equation}
has length $2L$, winding $(1,1)$, and lies in the cut selected by
$\{f_{j,j}:j\in\mathbb Z_L\}$.  For $L=2$, the distinct parallel dual edges
$\{h_{0,0},h_{0,1}\}$ supply a length-two witness.  Finally,
$\operatorname{rank}H_X=\operatorname{rank}H_Z=L^2-1$ completes the parameter
count.  See \cref{app:geometry} for disconnected supports.
\end{proof}

\begin{figure*}[t]
 \centering
 \includegraphics[width=0.82\textwidth]{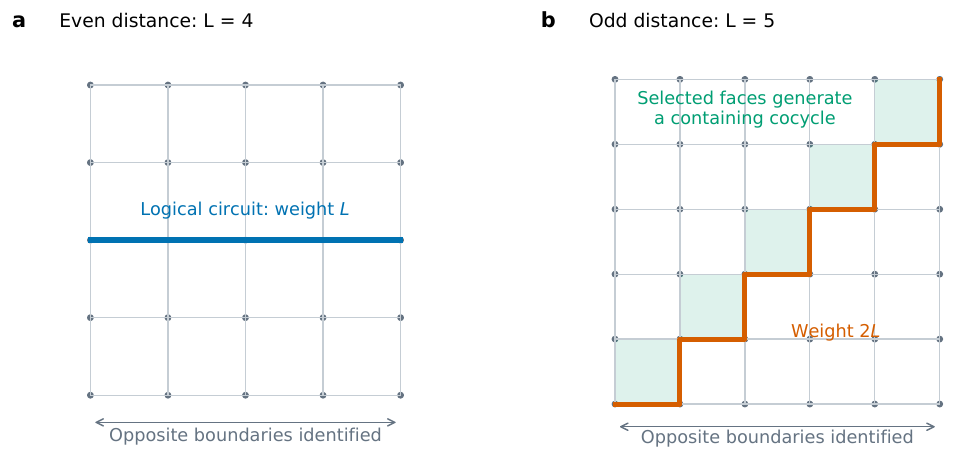}
 \caption{Geometric witnesses for \cref{thm:distance-law}.  A straight
 logical circuit has length $L$.  At odd $L$, the additional even/cut-covered
 condition forces winding in both torus directions, and the diagonal
 staircase attains length $2L$.  The drawing depicts the code's periodic topology.}
 \label{fig:geometry}
\end{figure*}

For even $L$ this obstruction is absent.  Henceforth $L$ is odd, allowing the
two orders to separate.

\subsection{Attainment and the exact leading coefficient}
\label{sec:record}
Let $P=\Xx\Xy$ and let $B_{0,P}(\theta )$ denote its coefficient in the
zero-syndrome effect after the uniform specialization $\epsilon_e=\theta $.

\begin{samepage}
\begin{theorem}[Exact syndrome-record onset]
\label{thm:record-onset}
For every fixed odd integer $L\ge3$,
\begin{align}
 b_{\rm rec}(L)&=2L,\label{eq:brec-result}\\
 [\theta ^{2L}]B_{0,P}(\theta )&=-\frac{M_L}{2^{2L-1}},\label{eq:effect-coefficient}\\
 M_L&=L\left[\binom{2L}{L}-2L\right],\label{eq:ML-result}\\
 \eta_L(\theta )&=\Theta_L(|\theta |^{2L})\qquad(\theta \to0).
 \label{eq:eta-result}
\end{align}
The $\Theta_L$ constants and the neighborhood of the origin may depend on
$L$.
Here $M_L$ is the cardinality of the distinct unrooted, unoriented,
tagged-edge simple circuits of length $2L$ that are contained in a dual cut
and have nontrivial winding.
\end{theorem}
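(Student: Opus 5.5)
The plan is to work directly from the character expansion \cref{eq:effect-expansion,eq:class-sum,eq:character-formula} with uniform angles $\epsilon_e=\theta$. Each $\beta_{s,f}$ carries the factor $\sin^{|f|}\theta$, so it vanishes to order exactly $|f|$ unless the $z$-sum cancels. By Proposition~\ref{prop:exact-record}, the nonscalar part of $F_s$ is $b_s(\theta)P$, and $b_s=B_{s,(1,1)}$ is a sum of $\beta_{s,f}$ over supports $f$ in the class $(1,1)$. A support survives only if it is contained in some $h_z\in\row H_Z$, i.e.\ it is cut-contained and nontrivial. Theorem~\ref{thm:distance-law} then gives $|f|\ge d_{\rm aff}(L)=2L$, and parity forces the class $(1,1)$. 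This yields the lower bound $\ord_\theta D_s\ge 2L$ for every $s$, hence $b_{\rm rec}(L)\ge 2L$.

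For attainment, I would compute the $\theta^{2L}$ coefficient of $B_{0,P}$. At order $2L$, only supports with $|f|=2L$ contribute. Their leading term is $\sin^{2L}\theta\cos^{|h_z|-2L}\theta\to\theta^{2L}$, and $(-i)^{2L}=(-1)^L=-1$ for odd $L$. With $s=0$, every character sign is $+1$, so the leading coefficient is
\[
-2^{-r}\sum_{f}\#\{z: f\subseteq h_z\}.
\]
Here $f$ ranges over nontrivial weight-$2L$ cut-contained supports. Using the appendix's disconnected-support analysis, each such $f$ must be a single simple circuit of length $2L$ with winding $(\pm1,\pm1)$: a disconnected support would need each component even and nontrivial components summing to class $(1,1)$, which costs more than $2L$. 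Such circuits are connected and bipartite as subgraphs, so the number of cuts containing $f$ is $2\cdot 2^{(\#\text{vertices not on }f)}/2$ in the reduced parametrization. Since $f$ has $2L$ vertices among $L^2$, this count is $2^{L^2-2L}$ after accounting for the $z\leftrightarrow$ complement identification (reduced syndrome basis of rank $L^2-1$). The resulting factor $2^{L^2-2L}/2^{L^2-1}=2^{-(2L-1)}$ matches \cref{eq:effect-coefficient}.

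The main obstacle is the count $M_L=L[\binom{2L}{L}-2L]$. I would lift length-$2L$ circuits of winding $(1,1)$ to monotone lattice paths in the universal cover from $(0,0)$ to $(L,L)$; the case $(1,-1)$ is handled by reflection, and the two cases are counted jointly. Such a path has exactly $L$ horizontal and $L$ vertical steps, giving $\binom{2L}{L}$ step words. Each closed circuit is obtained from $2L$ cyclic rotations, but translations account for $L^2$ positions, and I would need Burnside-style bookkeeping to reach $L\binom{2L}{L}$. The subtracted $2L\cdot L$ would remove words that do not give simple circuits: those revisiting a torus vertex, which happens when a partial displacement $(a,b)$ is $\equiv(0,0)\bmod L$. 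For monotone words of total $(L,L)$ this only occurs at $(L,0)$ or $(0,L)$, i.e.\ the straight-segment words, whose cyclic shifts number $2L$. I would verify this bookkeeping against the exhaustive $L=3$ enumeration, where $M_3=3(20-6)=42$.

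Finally, \cref{eq:eta-result} follows from \cref{eq:eta-exact}. The upper bound comes from $|b_s|=O(\theta^{2L})$ for each of the finitely many $s$. The lower bound comes from $|b_0|\ge\tfrac12 M_L2^{1-2L}|\theta|^{2L}$ near $0$, since $M_L>0$ for $L\ge3$.
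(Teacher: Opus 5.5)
Your proposal is correct and follows essentially the same route as the paper: cut containment plus Theorem~\ref{thm:distance-law} for the lower bound, reduction at weight $2L$ to a single even circuit of class $P$ with sign $(-i)^{2L}=-1$ and $2^{L^2-2L}$ selectors, and the count of rooted oriented balanced words minus the $2L$ cyclic shifts of $H^LV^L$. Your differences are cosmetic: you work at uniform angle rather than with Taylor multi-indices, count selectors by bipartitions rather than rank--nullity, and read $\eta_L$ off $\eta_L=\sum_s|b_s|$ rather than from the witness states; no Burnside-style argument is needed, because each simple circuit has $2L$ distinct vertices and so arises from exactly $2L$ roots and $2$ orientations, giving $4L^2[\binom{2L}{L}-2L]/(4L)=M_L$ directly.
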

\end{samepage}

\begin{proof}
For a Taylor multi-index $\bm a\in\mathbb N^{2L^2}$, sine/cosine parity in
\cref{eq:character-formula} forces the support for
$\bm\epsilon^{\bm a}$ to be $f_e=a_e\bmod2$.  Thus $|f|\le|\bm a|$, and a
surviving nonscalar class
requires $f$ to be a nontrivial logical support contained in a word of
$\row H_Z$.  By \cref{thm:distance-law}, $|\bm a|\ge2L$.  Thus all terms below
degree $2L$ vanish before uniform specialization.

At equality, $|f|=|\bm a|=2L$, so the monomial is squarefree and
$\bm a=f$.  Then $f$ is one even nontrivial circuit, whose winding parity
$(1,1)$ gives class $P$.  At $s=0$ all character signs are positive and
$(-i)^{2L}=-1$.  Rank--nullity gives $2^{r-2L+1}$ selectors per covered
circuit, hence contribution $-2^{-(2L-1)}$ after $2^{-r}$.

Equality in the winding bound forces $L$ monotone horizontal and vertical
steps and one of four sign pairs.  At each root their order is a balanced
binary word; exactly the $2L$ cyclic shifts of $H^LV^L$ among
$\binom{2L}{L}$ words revisit a projected vertex.  Thus there are
$4L^2[\binom{2L}{L}-2L]$ rooted oriented tuples.  Dividing by $2L$ roots and
two orientations gives \cref{eq:ML-result,eq:effect-coefficient}.

For the total-variation lower bound, the trace-one states
\begin{equation}
 \rho_\pm=\frac{I_4\pm P}{4}
 \label{eq:record-witness-states}
\end{equation}
obey
\begin{equation}
 [\theta ^{2L}]\{p_0(\theta |\rho_+)-p_0(\theta |\rho_-)\}
 =-\frac{M_L}{2^{2L-2}}\ne0.
 \label{eq:record-witness-coefficient}
\end{equation}
This gives the fixed-$L$ lower bound.  For the upper bound,
$D_s(\theta )=\theta ^{2L}R_s(\theta )$ with $R_s$ analytic and locally bounded.  For
$\Delta=\rho-\sigma$, $\Tr\Delta=0$ and
$\|\Delta\|_1\le2$, so
\begin{equation}
 \frac12\sum_s|\Tr[F_s(\theta )\Delta]|
 \le\sum_s\|D_s(\theta )\|_\infty=O_L(|\theta |^{2L}).
 \label{eq:eta-upper-short}
\end{equation}
Finitely many syndromes complete the proof; \cref{app:record} expands the
character derivation and count.
\end{proof}

For example, $M_3=42$, so the exact coefficients in
\cref{eq:effect-coefficient,eq:record-witness-coefficient} are respectively
$-21/16$ and $-21/8$.  The plotted quantities are coefficients in the small-angle Taylor expansion.  The formula gives
$M_L=(42,1210,23926,437418)$ at $L=(3,5,7,9)$.

The count and the normalized coefficient have different distance dependence:
\begin{equation}
 M_L\sim\sqrt{L/\pi}\,4^L,\qquad
 \frac{M_L}{2^{2L-1}}\sim2\sqrt{L/\pi}.
 \label{eq:normalized-count-growth}
\end{equation}
The exponential support multiplicity cancels in this particular leading
coefficient. Uniform control in $L$ of the full contrast, higher-order terms, and validity neighborhoods requires separate bounds.
\section{Logical dynamics and operational separation}
\label{sm:contents:03}
\label{sec:channel}
\subsection{The known leading coherent drift}
The corrected-channel drift has an established leading order~\cite{iverson2020coherence}. We retain its coefficient and an explicit remainder in the conventions of this instrument to compare it with the separately derived effect onset.
\begin{theorem}[Fixed-odd-$L$ corrected logical drift]
\label{thm:logical-drift}
For every fixed odd integer $L\ge3$, the instrument in
\cref{eq:kraus-definition} satisfies
\begin{equation}
 \Lambda_L(\theta )=\id-i\alpha_L\theta ^L[\cH,\,\cdot\,]+E_L(\theta ),
 \qquad
 \cH=\Xx+\Xy,
 \label{eq:channel-expansion}
\end{equation}
where
\begin{equation}
 \alpha_L=\frac{L}{2^L}\binom{L-1}{(L-1)/2}>0.
 \label{eq:alpha-result}
\end{equation}
Every coefficient of orders $1,\ldots,L-1$ vanishes.  At order $L$ there is
no Hilbert--Schmidt self-adjoint (dissipative) component and no
$\Xx\Xy$ component.  In the unhalved diamond norm, one explicit global bound
is
\begin{equation}
 \|E_L(\theta )\|_\diamond\le \kappa_L|\theta |^{L+1},
 \qquad
 \kappa_L=
 16\frac{(2L^2)^{L+1}2^{4L^2}}{(L+1)!}.
 \label{eq:channel-remainder}
\end{equation}
The remainder bound holds for every real $\theta $.  The estimates apply at fixed $L$.
\end{theorem}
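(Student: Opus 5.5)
Since $\Lambda_L(\theta)=\sum_s\mathcal J_s^{(\theta)}$ is built from the Kraus operators of \cref{eq:kappa-expansion}, all of which lie in the commuting algebra generated by $\Xx,\Xy$, I will expand the channel on the logical-$X$ Pauli basis. Concretely, $\Lambda_L(\rho)=\sum_{c,c'}\Gamma_{c,c'}(\theta)\,\overline X^{c}\rho\,\overline X^{c'}$ with $\Gamma_{c,c'}=\sum_s\kappa_{s,c}\kappa_{s,c'}^*$. Because $\sum_sF_s=I_4$, one has $\sum_{c+c'=d}\Gamma_{c,c'}=\delta_{d,0}$. This lets me rewrite the non-identity part in commutator and anticommutator form, with a Hamiltonian part $-i[\,\sum_c h_c\overline X^c,\cdot\,]$ and a self-adjoint (dissipative) part.

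\textbf{Step 1: vanishing of orders $1,\ldots,L-1$.} From \cref{eq:kappa-formula}, $\kappa_{s,c}$ is a sum over supports $a$ with syndrome $s$ of $\cos^{n-|a|}(\theta/2)\,[-i\sin(\theta/2)]^{|a|}$. The class of $a+r_s$ is nontrivial only if $|a|+|r_s|\ge d_X=L$ (\cref{thm:distance-law}). Moreover, $|r_s|\le|a|$ by minimum weight, so $|a|\ge L/2$. Then $\kappa_{s,0}\kappa_{s,c}^*$ for $c\ne0$ has order at least $|a|+|a'|$, where $a$ and $a'$ share the syndrome $s$ and $a+a'$ is a nontrivial logical. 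Hence $|a|+|a'|\ge L$, and the off-diagonal $\Gamma$ terms start at order $L$.

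\textbf{Step 2: the coefficient at order $L$.} At order exactly $L$, $a+a'$ is a minimum-weight logical, that is, a straight winding-one line (class $(1,0)$ or $(0,1)$). Only single-class terms appear. Class $(1,1)$ needs weight $\ge2L$ by \cref{eq:winding-bound}, so there is no $\Xx\Xy$ component. The prefactors $(-i)^{|a|}(i)^{|a'|}=(-i)^{|a|-|a'|}$ are imaginary because $|a|+|a'|=L$ is odd. This makes the order-$L$ generator anti-Hermitian in superoperator form, so the term is purely Hamiltonian and has no dissipative component.

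Next I count contributions. For a fixed straight line of length $L$, the split into $a$ and $a'$ with $a'$ the decoder's choice requires the decoder to correct $a$ by its complement. That happens exactly when $|a|=(L+1)/2$ is strictly larger than half the line. Tie-breaking at equal weight cannot occur since $L$ is odd. There are $\binom{L}{(L+1)/2}$ such subsets per line. With $L$ parallel lines per class and the $2^{-L}$ from $\sin^L(\theta/2)\sim(\theta/2)^L$, and after symmetrizing the $c,c'$ roles, I expect to obtain $\alpha_L=L2^{-L}\binom{L-1}{(L-1)/2}$. This uses $\binom{L}{(L+1)/2}=2\binom{L-1}{(L-1)/2}$ together with the factor $1/2$ from the commutator form. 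Supports $a$ that are not contained in a single line also contribute at order $L$ only if $a+a'$ is a line. Decoder minimality confines $a'$ to the complement within that line, so these are already counted. I expect this careful bookkeeping, confirming that the decoder's minimum-weight choice really is the line complement for every such syndrome (the syndrome being just the two endpoints), to be the main obstacle. I would check it via the dual-graph distance between the endpoints.

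\textbf{Step 3: the remainder.} Write $\Lambda_L(\theta)$ as a finite sum over pairs of supports of monomials $\cos^{2n-|a|-|a'|}\sin^{|a|+|a'|}$. I will bound the Taylor remainder of order $L+1$ globally using $|\partial^k(\cos^p\sin^q)(\theta/2)|\le\mathrm{poly}$. Lagrange's form gives a factor $|\theta|^{L+1}/(L+1)!$ times at most $n^{L+1}$ from derivatives hitting $n$ factors. The number of $(a,a')$ pairs is at most $2^{2n}=2^{4L^2}$, and each $\overline X^c\cdot\overline X^{c'}$ superoperator has diamond norm $1$. A crude factor $16$ covers the sum over $(c,c')$. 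Combining these yields the stated $\kappa_L$ with $n=2L^2$. Because the bound uses only $|\sin|,|\cos|\le1$, it holds for every real $\theta$.
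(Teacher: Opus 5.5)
Your overall architecture is the paper's, and Steps~1 and~3 are essentially sound. That architecture has four parts: expanding $\Lambda_L$ in the class matrix $\Gamma_{c,c'}=\sum_s\kappa_{s,c}\overline{\kappa_{s,c'}}$, order counting, using the purely imaginary phase to force a commutator, and bounding a Taylor remainder summed over at most $2^{2n}$ support pairs and 16 unit-norm carriers.

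The genuine gap is the count in Step~2. Subsets of size exactly $(L+1)/2$ are not the only contributors. For a straight line $\ell$ of class $c$, every $A\subseteq\ell$ with $|A|=k\ge(L+1)/2$ has $\ell\setminus A$ as its minimum-weight correction. So $A$ enters $\kappa_{s,c}$ at order $k$ and $\ell\setminus A$ enters $\kappa_{s,0}$ at order $L-k$, up to and including $k=L$ (the whole line at zero syndrome, paired with the empty support). Each such pair contributes $(-i)^k i^{L-k}2^{-L}=(-1)^k i^L2^{-L}$ to the $\theta^L$ coefficient of $\Gamma_{c,0}$. The signs therefore alternate in $k$, and the per-line coefficient is the repetition-code majority sum
\[
 \frac{i^L}{2^L}\sum_{k=(L+1)/2}^{L}(-1)^k\binom{L}{k}
 =-\frac{i}{2^L}\binom{L-1}{(L-1)/2},
\]
which follows from $\sum_{k=0}^{m}(-1)^k\binom{L}{k}=(-1)^m\binom{L-1}{m}$.

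Your shortcut instead rests on $\binom{L}{(L+1)/2}=2\binom{L-1}{(L-1)/2}$, which is false: at $L=3$ it reads $3=4$, and in general the ratio is $2L/(L+1)$. It also uses a factor $1/2$ that is not there. Because $\Gamma_{0,c}=\overline{\Gamma_{c,0}}$, writing the order-$L$ part of $\Gamma_{c,0}$ as $-ig$ gives $\Gamma_{c,0}\overline X^c\rho+\Gamma_{0,c}\rho\overline X^c=-ig[\overline X^c,\rho]$, with no halving. At $L=3$ your count yields $9/16$ (or $9/8$ without the spurious factor) rather than $\alpha_3=3/4$, so the stated $\alpha_L$ is not established.

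The decoder check you flag also needs a different argument. A subset $A$ of the line is generally not contiguous, so its syndrome consists of $2j$ defects rather than two endpoints. An endpoint-distance argument therefore shows neither that $\ell\setminus A$ is the unique global minimizer for $|A|>L/2$, nor that no other identity-class support of weight $L-k$ shares the syndrome.

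The paper uses projection onto the line instead. The chain map that sends edges parallel to $\ell$ onto $\ell$ and annihilates transverse edges commutes with the boundary. Hence any chain with boundary $\partial A$ projects to $A$ or $\ell\setminus A$, has weight at least $\min\{|A|,L-|A|\}$, and attains that weight only as the lighter in-line set. Combine this with the squeeze $L\le|a+r_s|\le|a|+w(s)\le|a|+|a'|=L$, where $w(s)$ is the minimum weight of a support with syndrome $s$. The squeeze forces $|a'|=w(s)$ and hence $a'=r_s$, which identifies exactly the pairs in the majority sum.

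Finally, state explicitly that each nonidentity amplitude has order at least $(L+1)/2$. Then products of two of them start at order $L+1$, and trace preservation gives $\Gamma_{0,0}=1+O_L(\theta^{L+1})$. Both the vanishing of orders $1,\ldots,L-1$ in every class entry and the absence of a dissipator at order $L$ depend on this.
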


The proof is given in Supplemental Material, Sec.~\ref{app:drift-main-proof}.

Reversing $e^{-i\theta X/2}$ to $e^{+i\theta X/2}$ reverses the odd-$L$ generator; removing
the half angle multiplies $\alpha_L$ by $2^L$.  These checks fix common
convention errors.  The record result remains a separate calculation.

\subsection{The complete-history bound and a nonresonant schedule}
\label{sec:histories}
\begin{proposition}[History contraction]
\label[proposition]{prop:history-bound}
For every fixed $L$, real $\theta $, and integer $R\ge1$,
\begin{equation}
 \Delta_{\rm rec}(\theta ,R)
 \le1-[1-\eta_L(\theta )]^R
 \le\min\{1,R\eta_L(\theta )\}.
 \label{eq:history-bound}
\end{equation}
\end{proposition}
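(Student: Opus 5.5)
The plan is a coupling, or hybrid, argument on the sequential instrument. It should not use the commuting structure of \cref{prop:exact-record}, since the text says the contraction bound holds without it. Fix two logical inputs $\rho,\sigma$. The history law in \cref{eq:history-law} is produced by a sequential process: the outcome $s_j$ is drawn from the law $\Tr[F_{s_j}\rho_{j-1}]$, where $\rho_{j-1}$ is the normalized conditional logical state after the prefix $s_1,\dots,s_{j-1}$. The same holds for the $\sigma$-branch with conditional states $\sigma_{j-1}$. The conditional states change from round to round, but every one-round step uses the same POVM $\{F_s(\theta)\}$. By definition \cref{eq:eta-definition}, each such step therefore satisfies $\TV(p(\cdot|\rho_{j-1}),p(\cdot|\sigma_{j-1}))\le\eta_L(\theta)$, because $\eta_L$ is a supremum over all pairs of logical density operators.

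Next I couple the two processes round by round. At each round, apply a maximal coupling of the two one-step laws. This forces equal outcomes with probability at least $1-\eta_L$, conditionally on any common prefix. While the outcomes agree, both processes share the same prefix, so the next step is again a comparison between two state-dependent laws, and its disagreement probability is again at most $\eta_L$. After the first disagreement I let the two processes run independently; nothing further is needed there. Chaining the conditional agreement probabilities shows that the two full histories coincide with probability at least $(1-\eta_L)^R$. The coupling inequality $\TV\le\Pr[\text{histories differ}]$ then gives the first bound, $1-(1-\eta_L)^R$. Taking the supremum over $\rho,\sigma$ yields $\Delta_{\rm rec}$ in \cref{eq:history-distance}.

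The second inequality is elementary. Bernoulli's inequality gives $(1-\eta)^R\ge1-R\eta$ for $\eta\in[0,1]$, and the bound by $1$ is trivial. Here $\eta_L\in[0,1]$ because it is a total-variation distance.

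The main obstacle is bookkeeping. The conditional states along a common prefix are normalized only where the prefix probability is positive in both branches, and zero-probability prefixes need care. I would avoid this by working directly with unnormalized quantities. I would write the disagreement bound as an induction on $R$ for the quantity $\sum_{\text{prefix}}\min(P_\rho,P_\sigma)$. The inductive claim is $\sum\min\ge(1-\eta_L)^R$. The step uses $\min(a\,x_s,b\,y_s)\ge\min(a,b)\min(x_s,y_s)$ and $\sum_s\min(x_s,y_s)=1-\TV\ge1-\eta_L$, where $x,y$ are the normalized one-step laws. When a branch has zero mass the term vanishes, and the bound still holds. This proves the claim without invoking an explicit coupling, using $\TV=1-\sum\min$.
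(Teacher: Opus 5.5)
Your proposal is correct and follows essentially the same route as the paper: you bound every next-syndrome kernel pair at a common prefix by $\eta_L(\theta)$, chain first-disagreement maximal couplings to obtain agreement probability at least $[1-\eta_L(\theta)]^R$, and finish with the coupling and Bernoulli inequalities. Your unnormalized $\sum\min$ induction is this same coupling written out step by step, and it handles zero-mass prefixes as cleanly as the paper's arbitrary filler kernels do.
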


\begin{proof}
At each positive-probability common prefix, separately normalize both
conditional logical states; \cref{eq:eta-definition} bounds their next-syndrome
kernels by $\eta_L(\theta )$.  At a null prefix, any normalized next-syndrome distribution completes the transition rule without changing the zero-mass law.  While histories agree, maximally
couple the next kernels.  Each step preserves agreement with probability at
least $1-\eta_L(\theta )$, so induction gives $[1-\eta_L(\theta )]^R$.  Coupling and
Bernoulli inequalities prove the bounds, without i.i.d. rounds or equal
conditional states.
\end{proof}

For this asymptotic schedule, $\tau_*$ is a dimensionless accumulated scale $R|\theta|^L$, separate from the scenario clock used in the control experiments. Combine the theorems at the scale
\begin{align}
 \mathcal G_L&=-i\alpha_L[\cH,\,\cdot\,],
 &\tau_*&=\frac{\pi}{8\alpha_L},\nonumber\\
 R_*(\theta )&=\left\lfloor\frac{\tau_*}{|\theta |^L}\right\rfloor,
 &\theta &\ne0.
 \label{eq:explicit-schedule}
\end{align}

\begin{theorem}[Same-instrument operational separation]
\label{thm:operational-separation}
For every fixed odd $L\ge3$, as $\theta \to0^\pm$, convergence in diamond
norm holds:
\begin{equation}
 \Lambda_L(\theta )^{R_*(\theta )}
 \longrightarrow
 \mathcal U_\pm:=
 \Ad_{\exp(\mp i\pi\cH/8)}
 ,
 \label{eq:channel-limit}
\end{equation}
while
\begin{equation}
 \Delta_{\rm rec}(\theta ,R_*(\theta ))=\Theta_L(|\theta |^L)\longrightarrow0.
 \label{eq:record-limit}
\end{equation}
In the unhalved diamond norm,
\begin{equation}
 \|\mathcal U_\pm-\id\|_\diamond=\sqrt2.
 \label{eq:limiting-distance}
\end{equation}
\end{theorem}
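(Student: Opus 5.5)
The plan is to prove the three claims of the theorem separately: the diamond-norm channel limit \cref{eq:channel-limit} from \cref{thm:logical-drift} by a telescoping argument, the value $\sqrt2$ in \cref{eq:limiting-distance} by a spectral computation, and the history scaling \cref{eq:record-limit} from \cref{prop:history-bound} (upper bound) and \cref{prop:exact-record} (lower bound).

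\emph{Channel limit.} Write $\Lambda_L(\theta)=\id+\theta^L\mathcal G_L+E_L(\theta)$ as in \cref{thm:logical-drift}. The map $e^{t\mathcal G_L}=\Ad_{\exp(-i\alpha_Lt\cH)}$ is a unitary channel. Hence
\[
\|e^{\theta^L\mathcal G_L}-\id-\theta^L\mathcal G_L\|_\diamond=O_L(\theta^{2L}).
\]
Combined with \cref{eq:channel-remainder}, this gives $\|\Lambda_L(\theta)-e^{\theta^L\mathcal G_L}\|_\diamond=O_L(|\theta|^{L+1})$. Both maps are CPTP, so they are diamond-norm contractions, and the telescoping identity $A^R-B^R=\sum_k A^{k}(A-B)B^{R-1-k}$ yields
\[
\|\Lambda_L(\theta)^R-e^{R\theta^L\mathcal G_L}\|_\diamond\le R\,O_L(|\theta|^{L+1}).
\]
At $R=R_*(\theta)\le\tau_*|\theta|^{-L}$ the right side is $O_L(|\theta|)\to0$. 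Because $L$ is odd, $R_*(\theta)\theta^L\to\pm\tau_*$ as $\theta\to0^\pm$, with floor error $O(|\theta|^L)$. Continuity of $t\mapsto e^{t\mathcal G_L}$ then gives the limit $e^{\pm\tau_*\mathcal G_L}=\Ad_{\exp(\mp i\alpha_L\tau_*\cH)}=\mathcal U_\pm$, using $\alpha_L\tau_*=\pi/8$.

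\emph{Limiting distance.} The operator $\cH=\Xx+\Xy$ has eigenvalues $2,0,0,-2$. Therefore $\exp(\mp i\pi\cH/8)$ has eigenvalues $e^{-i\pi/4},1,1,e^{i\pi/4}$. I would use the standard formula $\|\Ad_U-\id\|_\diamond=2\sqrt{1-\nu^2}$, where $\nu$ is the distance from the origin to the convex hull of the spectrum of $U$. For an input $|\psi\rangle$, the achieved value is $2\sqrt{1-|\langle\psi|U|\psi\rangle|^2}$, and $\langle\psi|U|\psi\rangle$ ranges over that convex hull. Here the hull is the triangle with vertices $e^{\pm i\pi/4}$ and $1$. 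Its nearest point to the origin is on the chord joining $e^{\pm i\pi/4}$, so $\nu=\cos(\pi/4)=1/\sqrt2$. This gives $\sqrt2$, attained by an equal superposition of the $\pm2$ eigenvectors of $\cH$.

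\emph{Record scaling.} The upper bound follows at once from \cref{prop:history-bound} and \cref{thm:record-onset}:
\[
\Delta_{\rm rec}\le R_*\eta_L=O_L(|\theta|^{-L}\cdot|\theta|^{2L})=O_L(|\theta|^L).
\]
For the lower bound, I would use \cref{eq:history-exact-tv} and test the event ``syndrome $s_*$ occurs at least once.'' Its two probabilities under $(p^\pm)^{\otimes R}$ are $1-(1-p^\pm_{s_*})^R$. I would choose $s_*$ to be the syndrome of one half of the diagonal staircase $C_{\rm diag}(L)$, namely a weight-$L$ subpath whose minimum-weight recovery $r_{s_*}$ is the complementary weight-$L$ subpath. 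The goal is to show that, at leading order $\theta^L$ in \cref{eq:kappa-formula}, the coefficients $\kappa_{s_*,0}$ and $\kappa_{s_*,(1,1)}$ are both nonzero and share the phase $(-i)^L$. For this I would count the weight-$L$ errors with syndrome $s_*$ in each logical class, and check that every other class requires larger weight by the parity argument of \cref{thm:distance-law}. Then $p^\pm_{s_*}=|\kappa_{s_*,0}\pm\kappa_{s_*,(1,1)}|^2=c_\pm\theta^{2L}(1+o(1))$ with $c_+>c_-\ge0$. Since $R_*p^\pm_{s_*}=O(|\theta|^{L})\to0$, we get $1-(1-p^\pm_{s_*})^{R_*}=R_*p^\pm_{s_*}(1+o(1))$. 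The difference is therefore $\tau_*(c_+-c_-)|\theta|^L(1+o(1))$, which is the matching lower bound.

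The main obstacle is this combinatorial step. One must fix $s_*$ and the tie-breaking rule so that $r_{s_*}$ really is a staircase half. One must also verify that the weight-$L$ preimages in classes $0$ and $(1,1)$ give same-phase, nonzero coefficients, so that $c_+\neq c_-$. I would try first the syndrome of a single endpoint pair separated by $L$ staircase steps, counting monotone lattice paths in each homology class.
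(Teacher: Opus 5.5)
The gap is in your choice of $s_*$ for the lower bound. Everything else matches the paper: the one-step comparison with $e^{\theta^L\mathcal G_L}$, telescoping between channels, the floor error, the convex-hull computation of $\sqrt2$, the bound $\Delta_{\rm rec}\le R_*\eta_L$, and the event ``$s_*$ occurs at least once''.

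\textbf{Why your syndrome fails.} No syndrome made of a single defect pair can have a weight-$L$ minimum recovery, because on the odd-$L$ dual torus any two vertices are at graph distance at most $L-1$. The endpoints of $L$ consecutive staircase steps are displaced by $(L-1)/2$ in one direction and $(L+1)/2$ in the other. Going the short way around in the second direction gives a path of length $L-1$. At $L=3$, for example, $\{h_{0,0},v_{1,0},h_{1,1}\}$ and $\{v_{1,2},h_{1,2}\}$ share the syndrome $\{f_{0,2},f_{1,1}\}$. The consequences are:
\begin{itemize}
\item $r_{s_*}$ has weight $L-1$ under every tie-breaking rule.
\item Relative to it, the two staircase halves lie in the single-logical classes $(1,0)$ and $(0,1)$, not in $I$ and $P$.
\item Class $P$ first appears at weight $L+1$.
\end{itemize}
The leading term of $K_{s_*}$ is therefore a multiple of $I$ at order $\theta^{L-1}$, and $p^+_{s_*}=p^-_{s_*}$ at leading order. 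The structure $(-i\theta/2)^L(aI+bP)$ with $c_+\neq c_-$ that your linearization relies on never arises. Since $R_*p^\pm_{s_*}\to0$ still holds, you could try to rescue this event by extracting the $\theta^{2L}$ coefficient of $b_{s_*}$ from the character expansion. That coefficient is a signed count of minimal circuits through both defects, and computing it is a different argument from the one you outline.

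\textbf{The missing idea.} Use a syndrome with $2L$ defects. Split $C_{\rm diag}(L)$ into its two alternating perfect matchings $E_0=\{h_{j,j}\}$ and $E_1=\{v_{j+1,j}\}$. They share a syndrome $s_*$ supported on all $2L$ vertices of the dual cycle.
\begin{itemize}
\item Each edge can cancel at most two defects, so every support with syndrome $s_*$ has weight at least $L$. Hence $r_{s_*}$ has weight $L$ whatever the tie rule.
\item Two weight-$L$ supports differ by an even-weight cycle. At odd $L$ the classes $(1,0)$ and $(0,1)$ contain only odd-weight cycles, so only $I$ and $P$ occur at order $\theta^L$.
\item All these terms carry the common phase $(-i/2)^L$.
\item Both class counts $a$ and $b$ are at least $1$, because $E_0+E_1=C_{\rm diag}$ lies in class $P$.
\end{itemize}
This gives $p^+_{s_*}-p^-_{s_*}=4ab\,2^{-2L}|\theta|^{2L}(1+o(1))$. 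Your linearization of $1-(1-p)^{R_*}$ then yields the matching $\Theta_L(|\theta|^L)$ lower bound.
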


\noindent\emph{Order of limits.}  Fix finite odd $L$ before $\theta \to0^\pm$ and
$R_*(\theta )\to\infty$.  The constants $A_L,\delta_L,\kappa_L,g_L$ and the validity neighborhoods depend on $L$; the schedule $R_*(\theta )=\Theta_L(|\theta |^{-L})$ describes small-angle accumulation at fixed code size.

Equation~\eqref{eq:record-limit} concerns logical inputs at fixed $\theta $.
Separately, \cref{prop:exact-record} establishes exact blindness to the
sign of $\theta $ for every history length.  Scalar syndrome frequencies can still
reveal its magnitude or presence.

The proof is given in Supplemental Material, Sec.~\ref{app:separation-proof}.

\section{Recovery from the same syndrome record}
\label{sm:contents:04}
\label{sec:recovery}

A syndrome record can guide recovery even when it cannot distinguish particular encoded inputs.  We now allow
an additional logical correction after the declared physical recovery, while
keeping the same syndrome measurements and known signed $\theta $.  This changes the
controlled channel, not the original instrument's conventions.  No controller
receives the encoded input.  Established recovery theory gives an operational consequence of the exact instrument structure.

\subsection{Removing branch phases leaves a parity measurement}

Write $K_s=U_s\sqrt{F_s}$ in the common logical-$X$ eigenbasis, choosing
the polar factor's unitary extension diagonal also on any kernel (with
arbitrary phases on zero eigenvalues).  Applying $V_s=U_s^\dagger$ leaves the positive branch
\begin{equation}
 \begin{aligned}
 V_sK_s&=\sqrt{q_s I+b_sP}\\
       &=\sqrt{p_s^+}\,\Pi_++\sqrt{p_s^-}\,\Pi_-,
 \qquad \Pi_\pm=\frac{I\pm P}{2}.
 \end{aligned}
 \label{eq:polar-parity}
\end{equation}
These extra diagonal actions commute with the parity effects, so the
syndrome-history laws are unchanged.  This is standard conditional polar
recovery.  Gregoratti and Werner state
its optimal channel fidelity for a fixed pure-branch instrument
\cite{gregoratti2004feedback}.  In dimension four the entanglement fidelity is
\begin{equation}
 F_e(\{A_s\})=\frac1{16}\sum_s|\Tr A_s|^2.
 \label{eq:recovery-fidelity}
\end{equation}
The trace-norm bound $|\Tr(V_sK_s)|\le\Tr\sqrt{F_s}$ immediately shows
optimality among syndrome-conditioned unitary corrections.  The average pure-state infidelity is $4(1-F_e)/5$~\cite{nielsen2002average}; both quantities measure fidelity loss.

The specific consequence here is that all remaining backaction lies in the
single parity measurement.  Define
\begin{equation}
 \mathcal A_L(\theta )=\sum_s\sqrt{p_s^+(\theta )p_s^-(\theta )}.
 \label{eq:recovery-overlap}
\end{equation}
The corrected channel preserves every within-parity matrix element and
multiplies every cross-parity element by $\mathcal A_L(\theta )$.  Since both parity sectors
have dimension two,
\begin{equation}
 F_e^{\rm polar}(\theta ,R)=\frac{1+\mathcal A_L(\theta )^R}{2}.
 \label{eq:recovery-parity-fidelity}
\end{equation}
Thus every state within one parity sector is preserved exactly, although its
preparation label remains unidentifiable from the record.  Across sectors,
the overlap quantifies the disturbance left after removing reversible phases.
This expression applies established recovery theory to the exact parity effect algebra.

\subsection{Information and action classes of the comparisons}

For $K_s=\sum_c\kappa_{s,c}X_c$, the best additional logical Pauli for
one-round $F_e$ chooses $c$ maximizing $|\kappa_{s,c}|^2$ in each branch.
This is the two-logical-qubit version of coherence-aware Pauli selection
\cite{venn2020thresholds}.  The polar-versus-Pauli comparison includes the additional arbitrary-angle logical actuation available to polar recovery.

For a comparison allowing syndrome-independent terminal logical control, write $C_{xy}=\sum_s z_s(x)z_s(y)^*$ for the
original Schur channel in the logical-$X$ basis.  Let $C^{\circ R}$ denote
entrywise powers.  Every single syndrome-independent terminal unitary $V$
satisfies
\begin{equation}
 F_e(V\Lambda_L^R)\le\frac{\lambda_{\max}(C^{\circ R})}{4}.
 \label{eq:fixed-recovery-bound}
\end{equation}
Indeed only $v_x=V_{xx}$ enters the trace, and
$F_e=v^{\mathsf T}C^{\circ R}v^*/16\le
\lambda_{\max}(C^{\circ R})\|v\|^2/16$, with $\|v\|^2\le4$.
This bound includes every terminal unitary and fixed diagonal corrections
distributed over the rounds.  The optimization is restricted to the stated control class; arbitrary interleaved noncommuting controls and syndrome-dependent terminal corrections require a larger class.

That last distinction is substantive.  All $K_s$ and $V_s$ commute, so for a
complete history $\mathbf s$ the product $\prod_rV_{s_r}$ can be postponed
until the end of storage, leaving the same product of positive branches.
Per-round extra logical actuation is therefore unnecessary for final-memory
fidelity in this model.  The original minimum-weight physical recoveries
remain per round.  A terminal policy retains the history or sufficient syndrome counts and accurately accumulates calibrated phases; its protection is assessed at the final memory endpoint.

For every integer precision $b\ge1$ and storage horizon
$R\in\mathbb Z_{\ge0}$, assume accurate accumulation of the calibrated
branch phases and an ideal terminal logical action.  A single terminal
nearest-bin phase correction, using a uniform $2^b$-point grid for each
accumulated eigenphase, then has the rigorous bound
\begin{equation}
 F_e^{\rm terminal,b}(\theta ,R)\ge
 \cos^2\!\left(\frac{\pi}{2^b}\right)F_e^{\rm polar}(\theta ,R).
 \label{eq:terminal-precision-bound}
\end{equation}
For each history, the positive branch amplitudes $a_x$ acquire phase errors
$|\delta_x|\le\pi/2^b$.  Since $b\ge1$ makes
$\cos(\pi/2^b)\ge0$, we may square the nonnegative lower bound
$|\sum_x a_xe^{i\delta_x}|\ge
\cos(\pi/2^b)\sum_xa_x$ and sum histories to prove the result.
At $b=1$ the bound is trivial, and at $R=0$ the empty-history action is
identity.  
The result bounds phase quantization under the specified control assumptions; physical gate synthesis and long-history sampling require additional analysis.

Finally, the effects $F_s$ alone do not determine $V_s$.  The physical angles
$\theta $ and $-\theta $ have identical syndrome laws but opposite branch phases.
Implementing the specified phase-canceling correction therefore requires
signed-angle information beyond this passive record in general.  Pauli recovery and sign-robust control can still improve recovery, and degenerate angles can make the sign irrelevant.  The next example concerns the different task of identifying the
stored input, which remains impossible even when recovery is available.

\subsection{An unidentifiable record-only target}
\label{sec:decision}

We now specify a decision task, keeping noise detection separate from
input-dependent consequences.  A preparation device selects one of two
known logical pure states with equal prior probability and withholds the
preparation label from a monitor.  The monitor knows $L$, $\theta $, the
instrument, and the number of rounds, and receives only the complete
syndrome history.  It receives neither a final logical measurement nor any
side information about the label.  In the simultaneous logical-$X$ basis,
choose
\begin{equation}
 |a\rangle=|++\rangle,\qquad
 |b\rangle=\frac{|++\rangle+|--\rangle}{\sqrt2}.
 \label{eq:decision-inputs}
\end{equation}
Both have parity $P=+1$.  Their history laws are exactly equal by
\cref{prop:exact-record}, so every record-only classifier has success
probability $1/2$ under equal priors.

Define expected memory loss relative to the prepared state by
\begin{equation}
 \ell_j(\theta ,R)=1-\langle j|\Lambda_L(\theta )^R
 (|j\rangle\langle j|)|j\rangle,\qquad j\in\{a,b\}.
 \label{eq:decision-loss}
\end{equation}
Every $K_s$ acts as a scalar on $|a\rangle$, hence $\ell_a=0$ exactly.
On the schedule in \cref{eq:explicit-schedule},
\cref{thm:operational-separation} gives
\begin{equation}
 \ell_b(\theta ,R_*(\theta ))\longrightarrow
 1-\left|\frac{e^{-i\pi/4}+e^{i\pi/4}}{2}\right|^2
 =\frac12.
 \label{eq:decision-loss-limit}
\end{equation}
For all sufficiently small nonzero $\theta $, deciding whether the prepared state
has zero expected loss or expected loss exceeding $1/4$ therefore has
record-only error probability $1/2$.  The statement concerns state-fidelity loss averaged over the experiment.
It does not require the two candidate preparations to be orthogonal.

This example exploits exact blindness within a parity sector.  The
$\theta ^{2L}$ onset is needed for the stronger statement in
\cref{thm:operational-separation}: even the most distinguishable pair of
inputs, including opposite parity sectors, becomes indistinguishable on
the drift schedule.  A known preparation label, additional logical probes,
or a different instrument changes the decision problem.

\section{Finite-angle validation at \texorpdfstring{$L=3$}{L=3}}
\label{sm:contents:05}
\label{sec:numerics}

\subsection{Record and channel separation}

We enumerate all $2^{18}=262144$ physical $X$ supports, assign their complete
syndrome and minimum-weight recovery with the stated tie rule, and count
supports by syndrome, residual logical class, and weight.  Evaluating the finite trigonometric sums gives the finite-angle channel directly by exhaustive support enumeration.  Integer counts reproduce
$[\theta ^6]B_{0,P}=-21/16$ and the leading channel coefficient $\alpha_3=3/4$.
The enumeration procedure and numerical checks are described
in \cref{app:numerics}.

\begin{figure*}[t]
\centering
\includegraphics[width=\textwidth]{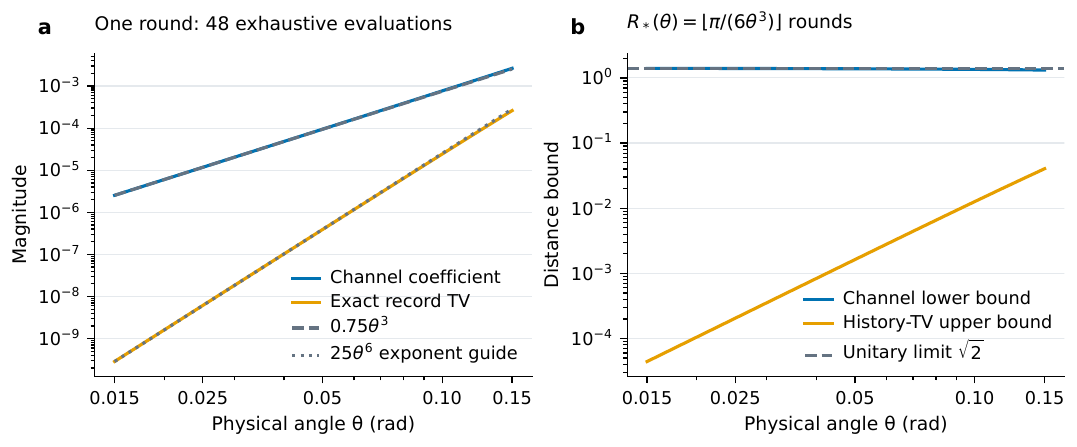}
\caption{Finite-angle separation for the 18-qubit toric instrument of Sec.~\ref{sec:model}.
(a) The channel carrier magnitude and exact maximal one-round record total-variation (TV) distance
approach orders $\theta ^3$ and $\theta ^6$, respectively.  The dotted curve is only an
exponent guide.  (b) An evaluated input-state trace distance lower-bounds the
unhalved channel diamond distance, while $R_*\eta_3$ upper-bounds the
maximal history TV.  The history curve bounds the distance from above.
All solid curves are exhaustive support-sum evaluations at 48 angles.}
\label{fig:finite-angle}
\end{figure*}

Figure~\ref{fig:finite-angle} compares the one-round response with the
repeated-round schedule.  To avoid identifying a state distance with a
channel norm, we display the explicitly computed lower bound
\begin{equation}
 d_b(\theta ,R):=\|\Lambda_3(\theta )^R(|b\rangle\langle b|)
                 -|b\rangle\langle b|\|_1
 \le\|\Lambda_3(\theta )^R-\id\|_\diamond.
 \label{eq:numeric-displacement}
\end{equation}
The history curve evaluates the rigorous upper bound $\min\{1,R_*\eta_3(\theta )\}$.
For $\theta =0.05$, $R_*=4188$, and the calculation gives
\begin{equation}
 d_b\simeq1.37934,\qquad
 \Delta_{\rm rec}(\theta ,R_*)\le R_*\eta_3(\theta )\simeq0.00162721.
 \label{eq:numeric-example}
\end{equation}
Thus the worst-case equal-prior record classification success is at most
approximately $0.500814$, while the corrected channel has order-one
unhalved diamond displacement.  For the within-parity pair in
\cref{eq:decision-inputs}, success remains exactly $1/2$ and the superposition state $|b\rangle$ has expected loss approximately $0.504346$.

The finite-angle evaluations use exhaustive floating-point sums with independent 60-digit checks; formal interval certification remains separate.  They demonstrate a regime for this
18-qubit instrument defined above, without extrapolation to larger codes or faulty
extraction circuits.  In contrast, the global analytic remainder constant
is already $\kappa_3\simeq4.81\times10^{15}$ and is used only to prove the
fixed-code asymptotic limit.

\section{From exact information to justified recovery updates}
\label{sm:contents:06}
\label{sec:bridge}
The exact calculation identifies recovery-relevant information missing from the passive record and motivates the signed-calibration construction below. Three steps
connect it to the experiments that follow.

First, select the decision target. The leakage
$\eta_L(\theta )$ compares logical inputs at a fixed physical angle. The order-$\theta ^{2L}$ onset describes logical-input sensitivity of the instrument; the recovery loss separately quantifies stale-update consequences. The record-only memory-loss
example in Sec.~\ref{sec:decision} instead makes a particular target
unidentifiable even at unlimited history length. Neither result prevents
improved recovery of an unknown stored state.

Second, specify what distinguishes the noise conditions relevant to an
action. Proposition~\ref{prop:exact-record} makes $+\theta $ and $-\theta $ exactly
equivalent under the passive experiment. Section~\ref{sec:recovery}
shows how syndrome-conditioned phases informed by signed calibration can reduce infidelity. A
larger predictor cannot extract the absent sign from identical laws.
The relevant remedy is an observation that changes the equivalence
class, or an action beneficial throughout that class. The supporting sentinel study implements a separate signed probe as one possible additional observation channel. The main demonstration instead uses a terminal logical measurement on known encoded states in the same toric instrument. Active code probes and control dithering offer alternative observation experiments.

Third, compare the proposed recovery with the incumbent over every
deployment condition still compatible with the evidence. A noise point
estimate need not be identifiable if 1 action is uniformly beneficial;
conversely, a precisely measured past angle need not justify a current
update. The evaluator uses a confidence set propagated to deployment time and a bound on $D_u$ computed separately from the proposer.

The proof and experiment regimes remain distinct. The theorem fixes
finite odd $L$ before $\theta \to0$, with
$R_*(\theta )=\lfloor\tau_*/|\theta |^L\rfloor$. The finite-angle recovery demonstration
includes $\theta =0.05$, $R=4188$, and precision-limited controls. The calibration
robustness tests instead fix $L=3$, $H=300$, and fixed phase tables,
with calibration-angle magnitudes in $[0.08,0.12]$. These use the same underlying
rotation, logical basis, and physical Pauli recovery, but evaluate different
declared operating conditions. Direct finite-angle channel calculations determine the robustness results. Commutativity allows terminal phase application in both tiers, making deferred recovery available to every comparator.

The measurement, certification, and activation workflow is shown in Supplemental Fig.~\ref{cal:fig:contract}.

\section{Measurement assumptions and conditional acceptance}
\label{sm:contents:07}
\label{cal:sec:rule}
\subsection{What must the observations distinguish?}
Proposition~\ref{prop:exact-record} establishes sign invariance for the passive record at every angle and every history length. This is separate from Theorem~\ref{thm:record-onset}, which compares logical inputs at a fixed angle. The following acceptance construction uses the sign equivalence and the recovery risks in Sec.~\ref{cal:sec:example}; logical-input sensitivity and calibration failure risk are distinct quantities.

Let $[\theta ]_{\obs}$ denote the parameter conditions with the same observation law under an allowed experiment $\obs$. A deterministic update is uniformly beneficial on that class only if
\begin{equation}
 \sup_{\theta '\in[\theta ]_{\obs}}D_u(\theta ')<0.
 \label{cal:eq:equivalence}
\end{equation}
For the declared deterministic action catalog, the operational obstruction is an empty intersection of beneficial action sets across the indistinguishable conditions. Randomized policies enlarge the action class and require a separate risk calculation, including how randomization is composed over repeated rounds. Parameter nonidentifiability alone is insufficient: an action may improve both signs, and the detailed finite-horizon risk matters. A learned model cannot resolve identical observation laws by increasing its capacity, but changing the allowed observations can remove the obstruction.

As one additional observation channel, we add a sentinel prepared in $|0\rangle$, exposed to the nominal $R_X(\theta )$ rotation, and measured in the $Y$ basis. With error-free preparation and readout, the mean is $-\sin \theta $; a direct matrix-exponential test checks this sign. For readout contrast $a$ and offset $b$, the measurement model is
\begin{equation}
 \mathbb E[Y]=b-a\sin \theta ,\qquad
 a\in[0.98,1],\quad b\in[-0.001,0.001].
 \label{cal:eq:sentinel}
\end{equation}
The code angle can differ from the sentinel angle by at most $\eta=0.001$ rad. This transfer radius $\eta$ is distinct from the logical-input contrast $\eta_L(\theta )$. The generative readout uses $a=0.99$, $b=0$. The readout and transfer bounds are declared inputs to the simulation. If $b$ is unrestricted, a sign change can be concealed by an offset change. Correct identification of an acquisition does not establish that its readout and transfer assumptions describe the device. Under the stated readout and transfer assumptions, the channel distinguishes recovery-relevant signs; finite-shot and nuisance uncertainty determine whether that information suffices for certification.

A known-state encoded probe or active syndrome dither could supply different information. The implemented sentinel supplies an additional observation channel. In particular, active self-calibration changes the observation experiment; its positive results are compatible with a passive sign ambiguity~\cite{gong2026}.

\subsection{Record integrity and the threat model}
A calibration record contains the measurements used to estimate noise and their acquisition time. The allowed recovery operations form a fixed catalog. Certification means that the evaluator bounds the proposed operation's excess infidelity below $-\delta$ throughout the deployment confidence set. The software checks below preserve the association between those measurements, the experiment, and the operation actually applied.

The evaluator owns an immutable registry containing an acquisition identifier, workload identifier, request nonce, plus-outcome count, shot count, and acquisition time. The candidate is an integer indexing the fixed table. The evaluator rejects requests with unknown records, mismatched workload or nonce identifiers, future timestamps, malformed actions, or exhausted budgets. The proposer can select an action or evidence identifier and can request a new probe acquisition within its budget. It cannot insert records, alter timestamps, change the reference channel, or activate a table directly (Fig.~\ref{cal:fig:contract}).

The attacker can supply misleading recommendations, including an instruction to ignore probe sign, or induce the use of authentic old evidence. Separate tests submit mismatched workload and nonce identifiers. The primary physical drift remains inside a publicly declared rate bound; an explicit out-of-bound challenge tests that premise. The registry's trust boundary assumes an uncompromised process, reliable acquisition electronics, and physical evolution within the declared model. The prototype implements control of evidence creation and modification within one evaluator process; cryptographic authentication, attestation, and post-quantum deployment protocols are separate implementation tasks.

\subsection{From finite measurements to a deployment set}
Let $k$ plus outcomes be observed in $n$ independent stationary sentinel shots. Form a two-sided Clopper--Pearson interval $[p_-,p_+]$ for the Bernoulli probability~\cite{clopper1934}. With $m=2p-1$, invert Eq.~\eqref{cal:eq:sentinel} through \emph{all} allowed $a,b$ to obtain a calibration-angle interval $C_{\rm cap}$. The sine response is one-to-one on the domain $|\theta |\leq0.18$ for fixed contrast and offset. Invert the probability interval at every corner of the allowed contrast--offset rectangle. Take the smallest angle interval containing the feasible endpoints and intersect it with the physical domain. An empty interval is a model-inconsistency rejection.

For evidence age $A$ and declared angular drift rate $v$, the stationary-deployment set is
\begin{equation}
 C_{\rm dep}=C_{\rm cap}\oplus[-(\eta+vA),\eta+vA].
 \label{cal:eq:deployment}
\end{equation}
The evaluator rejects a proposal if its deployment region extends beyond the supported angle domain. Define a bound $U_u$ over this entire set and accept only when
\begin{equation}
 U_u\geq\sup_{\theta \in C_{\rm dep}}D_u(\theta ),\qquad
 U_u\leq-\delta,\quad \delta=10^{-3}.
 \label{cal:eq:accept}
\end{equation}
The reference is independent of the model used to propose $u$. Learned uncertainty guides acquisition, while the confidence set supplies the activation criterion. Rejection leaves the incumbent in place; a workflow can subsequently probe again if a separately budgeted acquisition is available.

\begin{proposition}[Conditional protection of accepted updates]
\label{cal:prop:conditional}
Consider a finite declared family of acquisition looks $j$, with simultaneous deployment sets whose failure probabilities are at most $\alpha_j$ and $\sum_j\alpha_j\leq\alpha$. Suppose the physical drift, readout, and transfer premises hold; each computed bound is valid for every allowed action; and the activated action is exactly the evaluated action. Then, for any proposer selecting actions and available looks adaptively,
\begin{equation}
 \Pr\{\exists\text{ accepted }(j,u):D_u(\theta _j)>-\delta\}
 \leq\alpha.
 \label{cal:eq:family}
\end{equation}
\end{proposition}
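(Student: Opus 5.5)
The plan is a deterministic inclusion followed by a union bound, the argument already sketched after Eq.~\eqref{eq:guarantee} in the main text. For each declared look $j$, let $E_j$ be the event that the true deployment angle $\theta_j$ lies in the deployment set $C_{\rm dep}^{(j)}$ that the evaluator built from look $j$. This set is built from the Clopper--Pearson interval $[p_-,p_+]$, inverted over all allowed $(a,b)$, and then dilated by $\eta+vA$ as in Eq.~\eqref{cal:eq:deployment}. The hypothesis on simultaneous deployment sets gives $\Pr(E_j^c)\le\alpha_j$.

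First I will check that this hypothesis is consistent with the construction. The count $k$ is binomial with parameter $p=(1+b-a\sin\theta_c)/2$ for the true calibration angle $\theta_c$ and the true nuisance values $(a,b)$. Clopper--Pearson coverage gives $p\in[p_-,p_+]$ with probability at least $1-\alpha_j$. Since the inversion ranges over every allowed $(a,b)$, and the sine response is monotone on the supported domain, this coverage event implies $\theta_c\in C_{\rm cap}$. One point needs care: taking the smallest interval over the corners must contain every feasible preimage. Monotonicity in $a$ and $b$ separately ensures that the extremes are attained at corners.

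Next I will use the premises. The transfer premise gives $|\theta_{\rm code}-\theta_c|\le\eta$. The drift premise gives a further displacement of at most $vA$, where $A$ is computed by the trusted clock. Together these show that $\theta_c\in C_{\rm cap}$ implies $\theta_j\in C_{\rm dep}$. The coverage event therefore implies $E_j$.

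The deterministic core follows. On $E_j$, fix any action $u$ that is accepted against look $j$. Acceptance means $U_u\le-\delta$, and validity of the bound for every allowed action gives $U_u\ge\sup_{C_{\rm dep}}D_u$. Hence $D_u(\theta_j)\le-\delta$. Because the activated action equals the evaluated one, this is the deployed excess risk.

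This step must not depend on how $u$ was chosen. Validity holds uniformly over the finite catalog, and $E_j$ is an event about the data alone. So any adaptive, data-dependent, or adversarial choice of $u$ or of which looks to use is covered, with no conditioning on acceptance. Repeated proposals against the same look draw on the same event $E_j$ and consume no extra allowance.

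Consequently $\{\exists\text{ accepted }(j,u):D_u(\theta_j)>-\delta\}\subseteq\bigcup_jE_j^c$. The union bound over the finite declared family then gives probability at most $\sum_j\alpha_j\le\alpha$.

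The main obstacle is adaptivity in which looks are acquired. If the proposer's earlier outputs determine whether, or when, a later look occurs, then $\Pr(E_j^c)\le\alpha_j$ must hold conditionally on that history. I will handle this by defining $E_j$ for every look in the predeclared family, whether or not it is realized, with coverage holding for independent fresh shots given the past. The union then runs over the fixed index set, and unrealized looks only make the bad event smaller.
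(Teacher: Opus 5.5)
Your proof is correct and follows the paper's own argument. On the coverage event, acceptance forces $D_u(\theta_j)\le U_u\le-\delta$ uniformly over the catalog, so data-dependent choices are covered, and the union bound over the declared looks gives $\sum_j\alpha_j\le\alpha$. The proposition takes coverage as a hypothesis, so your derivation of it from Clopper--Pearson, the transfer and drift premises, and Eq.~\eqref{cal:eq:deployment} is not needed, nor is your conditional-coverage treatment of adaptively triggered looks. Both are consistent with the paper's construction and with its remark after the proof.
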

\begin{proof}
On the event that every deployment condition lies in its set, Eq.~\eqref{cal:eq:accept} implies $D_u(\theta _j)\leq U_u\leq-\delta$ for every admitted action. This event holds uniformly over the action catalog and therefore also for a data-dependent choice from it. Its complement has probability at most $\sum_j\alpha_j$ by the union bound.
\end{proof}

For histories that determine a new acquisition policy, the corresponding coverage must hold conditionally on that history, or a simultaneous construction must cover all permitted choices. The experiments use predetermined looks or a budget selected before independent newly acquired measurements. Repeated queries to the \emph{same} simultaneous physical set do not require a new failure-probability allocation; new acquisition looks do. Acquisition is restricted to the declared, statistically budgeted looks.

Proposition~\ref{cal:prop:conditional} is a standard confidence-set argument applied to this QEC contract. It bounds the unconditional margin-violation probability over the declared experiment family. The protected quantity is improvement relative to the incumbent; an absolute logical-risk ceiling would require an additional criterion. The analytical statement requires a valid numerical enclosure. The implementation supports that requirement with a tested $10^{-8}$ allowance; the deployment scope is discussed in Supplemental Material, Sec.~\ref{sec:completion}. Supplemental Material, Sec.~\ref{cal:app:numerics} specifies this distinction.

\subsection{Age, acquisition duration, and deployment duration}
Equation~\eqref{cal:eq:deployment} assumes stationary capture and a stationary deployment block. Finite capture duration, clock error, within-block drift, and correction cost need additional allowances. For time-dependent angles, a conservative channel-distance argument gives the excess-risk allowance
\begin{equation}
 B_{\rm path}=Nv\Delta H(H+1),
 \label{cal:eq:path}
\end{equation}
when $|\theta _r-\theta _{\rm cap}|\leq vr\Delta$. Supplemental Material, Sec.~\ref{cal:app:drift} derives it and reports an 81-rate sweep with direct composition of changing channels. The extension incorporates execution duration into the freshness criterion under the stipulated physical bounds.

\section{Robust acceptance under misleading and stale advice}
\label{sm:contents:08}
\label{cal:sec:security}

\subsection{Mechanism stress tests and baselines}
We compare authorization alone, authorization with a fixed time-to-live (TTL), confidence without age propagation, and the full confidence-and-freshness rule. All gates use the same simulated episodes, counts, proposed actions, and incumbent. Authorization verifies that the record was issued by the evaluator and matches the workload and request; the risk certificate assesses physical usefulness. The confidence-only rule includes readout and transfer uncertainty but treats capture evidence as current. Fixed TTLs are $0,0.1,0.3,1,3$ in units $T_0$ of this supporting scenario. No gate and the complete TTL results remain in the research record.

Here the label \emph{fresh} denotes the acquisition-time condition, before the tested staleness or delay; it does not by itself certify an update. Each of 7 cases contains 200 independent simulated noise paths, with capture magnitude uniform on $[0.08,0.12]$, random sign, 8192 probe shots, and $H=300$. The proposer maps measurement-based advice to the fixed catalog; hidden angles are used only to evaluate loss and configure the physical challenge. The selected conditions make honest proposals reduce infidelity, so rejection has a measurable opportunity cost. These mechanism stress tests measure harm and acceptance within the specified challenge distributions.

In the bounded-stale case, the true sign reverses after age $0.5T_0$ under the public bound $v=0.6$ rad/$T_0$. The largest required change is $0.24$ rad, below the permitted $0.3$ rad. Benign delay instead uses age $2T_0$, $v=0.001$ rad/$T_0$, and no change in the noise angle. These cases have \emph{different public drift contexts}. A globally fixed expiry can sacrifice utility across different drift contexts; context-specific tuning is a relevant alternative. The drift-bound violation challenge reverses the sign after age $0.01T_0$ while declaring $v=0.001$ rad/$T_0$, and consequently lies outside the guarantee.

\begin{table}[!htbp]
\caption{Accepted/harmful updates out of 200 paths per constructed mechanism stress test. Harm means positive excess infidelity from numerical evaluation of the exact channel. Both confidence-only and full rules include authorization. The drift-bound violation violates the drift premise.}
\label{cal:tab:security}
\centering\small\setlength{\tabcolsep}{3pt}
\begin{tabular}{lrrr}
\toprule
Case & Authorization & Confidence & Full\\\midrule
Fresh honest & 200/0 & 190/0 & 190/0\\
Wrong advice & 200/200 & 0/0 & 0/0\\
Bounded stale & 200/200 & 178/178 & 0/0\\
Benign delay & 200/0 & 182/0 & 168/0\\
Wrong workload & 0/0 & 0/0 & 0/0\\
Wrong nonce & 0/0 & 0/0 & 0/0\\
Drift bound violated & 200/200 & 187/187 & 187/187\\
\bottomrule\end{tabular}

\end{table}

Confidence alone blocks misleading fresh recommendations but admits 178 harmful stale updates. The full rule admits none in that bounded-stale case (Table~\ref{cal:tab:security}). It accepts 190 honest fresh updates and 168 under benign delay, all beneficial in the numerical channel evaluation. The pointwise acceptance proportions are 95\% and 84\%; the latter's two-sided 95\% binomial interval is approximately $[78.2\%,88.8\%]$. The interval's lower endpoint falls below the study's stated 80\% acceptance target in this regime. Zero observed harms in 200 independent trials of one fixed challenge distribution gives a one-sided 95\% binomial upper limit of approximately 1.49\% for that distribution. It is not a bound for arbitrary attacks or different workload distributions.

The number of accepted updates does not specify the size of their recovery benefit. Define retained recovery gain relative to authorization-only deployment on the same honest paths:
\begin{equation}
 G=\frac{\sum_i[\risk(\theta _i,u_0)-\risk(\theta _i,u_{\rm full,i})]}
 {\sum_i[\risk(\theta _i,u_0)-\risk(\theta _i,u_{\rm auth,i})]}.
 \label{cal:eq:gain}
\end{equation}
Rejected updates use $u_0$ in the numerator. The full gate retains \FreshGain\ of fresh-evidence gain, with paired path-bootstrap 95\% interval $[94.9\%,98.9\%]$, and \DelayedGain\ under benign delay, with interval $[86.8\%,94.2\%]$. This is the mechanism study's utility result: rejecting uncertain updates preserves most of the aggregate improvement in these tested distributions. The retained fraction depends on the workload distribution.

\begin{figure*}[t]\centering\includegraphics[width=\textwidth]{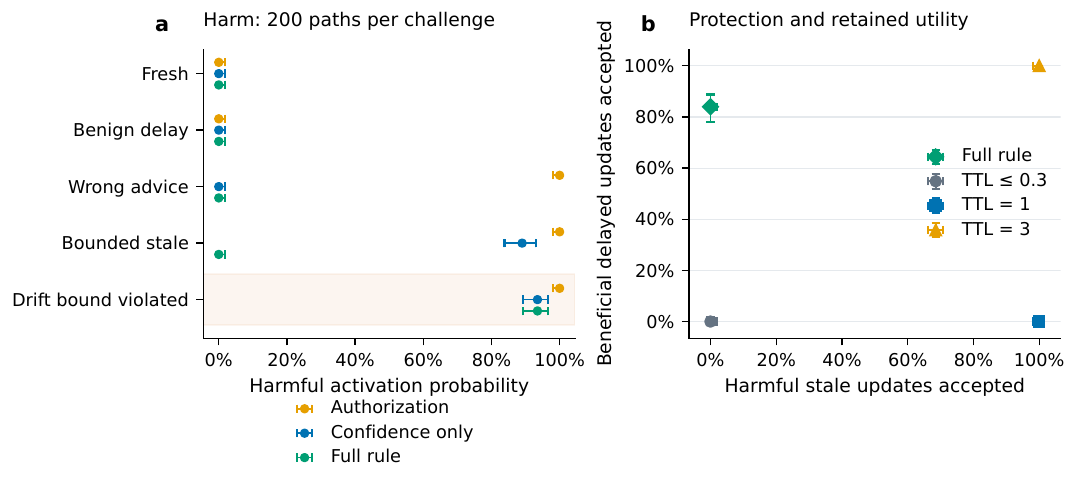}\caption{Harm and utility in paired mechanism stress tests. (a) Harmful-activation fractions and pointwise exact 95\% binomial intervals, 200 paths per case; the shaded case violates the drift premise. (b) The same stale-harm and benign-delay acceptance endpoints for the full rule and fixed expiry, with pointwise intervals on both axes. Time-to-live (TTL) values 0, 0.1, and 0.3 coincide. The benign-delay and bounded-stale challenges use different declared drift contexts.}\label{cal:fig:security}\end{figure*}

\subsection{Failure when the drift bound is exceeded}
The out-of-bound step causes 187 harmful activations out of 200 under the
full rule, exactly as under confidence-only acceptance. The acquisition
can be correctly authorized while its physical interpretation is wrong.
The extended readout, transfer, and age sweeps in
Supplemental Material, Sec.~\ref{app:validity-stress} expose this same boundary. They report
5400 evaluations of condition--replicate pairs satisfying the assumptions with no observed harmful activation and
4767 harmful activations among 22600 evaluations of condition--replicate pairs violating the assumptions.
Reused count blocks make these totals unsuitable as independent binomial
trials. These failures locate the physical boundary of the guarantee.

\section{Recovery improvement, workflow completion, and transfer}
\label{sm:contents:09}

\label{sec:automation-summary}
The acceptance tests establish what evidence supports an action. We next assess whether the controllers complete calibration and activation workflows that improve recovery. Detailed workflows and all secondary outcomes remain in
Supplemental Material, Secs.~\ref{cal:sec:llm}--\ref{cal:sec:surface}.

The controller studies distinguish beneficial recovery from workflow completion. A beneficial completion deploys an operation with excess risk below $-\delta$; harmful activation has excess risk above $\delta$. Retention and incomplete workflows remain in the denominators. The expanded multistep evaluation in Supplemental Material, Sec.~\ref{sec:completion} measures beneficial completion and calibration cost under misleading advice, limited initial data, stale evidence and identity failures. Its matched deterministic comparator receives the same observations and permitted actions. Early interface pilots are summarized in Sec.~\ref{cal:sec:llm}.

\subsection{Learned recovery and a Bayesian comparison}
Learned quantum control also has hardware implementations: Xu et al.~\cite{xu2022neuralcontrol} implement a neural predictor of pulse parameters on a field-programmable gate array (FPGA). This connects learned control decisions to their implementation cost; the recovery comparison below evaluates predictors through the same calibrated acceptance rule.

The 5 gated recurrent units (GRUs) and 5 multilayer perceptrons (MLPs) are trained on signed-probe histories, with whole
simulated noise paths and parameter blocks separated across training,
development, and test. The learned and classical references have the same
observations, recovery catalog, and acquisition budgets. Across 600
held-out paths, the guarded GRU's mean infidelity is \GRURisk, compared
with \RetainRisk\ for retaining the incumbent and \BayesRisk\ for a
development-tuned Bayesian filter. The paired 95\% interval $[-0.002675,+0.000070]$ leaves the neural-versus-Bayesian comparison inconclusive. Both the learned predictor and Bayesian filter improve recovery relative to retention.

These models receive signed-probe information; the passive-record impossibility follows analytically from identical observation laws for every postprocessing architecture. In a separate acquisition study with fixed checkpoints,
Bayesian allocation reduces mean newly acquired calibration shots by \ShotReduction\ while
mean infidelity increases from 0.0365583 to 0.0372372. Learned allocation
traces a similar cost--risk frontier. Supplemental Material, Sec.~\ref{cal:sec:learning} reports the risk--cost tradeoff, common historical shot cost, and changed confidence allocation.

\subsection{Acceptance at calibration in the profile-changing circuit challenge}
A separate stochastic surface-code study uses Stim and PyMatching at
distances 3 and 5 with 30 rounds and paired scoring of matching
priors on the same records~\cite{gidney2021,higgott2025}. All 400 candidates at the calibration noise settings pass the calibration rule; simultaneous confidence intervals for both memory-basis risks of candidate and incumbent
establish benefit in 346 cases, leaving 54 unresolved. Under stale
deployment, all 400 individual effects remain unresolved. The stale gate rejects every candidate because its valid bound is vacuous, so this profile-changing challenge supports acceptance at the calibration settings. These stochastic Pauli circuits form a separate test of acceptance; their noise model differs from the coherent toric instrument. Supplemental Material, Sec.~\ref{cal:sec:surface} reports the pilot, confirmation, and limitations of these results.

\renewcommand{\appendixname}{Supplemental Material}
\appendix
\renewcommand{\thesection}{\SupplementLetter{\value{section}}}
\crefalias{section}{appendix}
\crefname{appendix}{Supplemental Material, Sec.}{Supplemental Material, Secs.}
\section{Channel proofs and the general privacy reference}
\label{sm:contents:10}
\label{app:dynamics}
\subsection{Proof of the corrected-channel expansion}
\label{app:drift-main-proof}
\begin{proof}
Each $\theta ^m$ coefficient of $\kappa_{s,c}$ is $(-i)^m$ times a real number.  Let
$w(s)=\min\{|a|:\widetilde H_Za=s\}$.  A nonidentity class cannot occur below
$\max\{w(s),L-w(s)\}$.  Since $L$ is odd, a product of two nonidentity Kraus
coefficients starts at order at least $L+1$.  Trace preservation makes the
identity--identity channel coefficient equal to $1+O_L(\theta ^{L+1})$.  Thus
order-$L$ terms only pair identity and nonidentity amplitudes of orders $w(s)$
and $L-w(s)$.  Their purely imaginary common phase makes a commutator, not a
dissipator.

Every nontrivial weight-$L$ cycle is one of $L$ straight representatives of
class $(1,0)$ or $(0,1)$; the product class has minimum weight $2L$.  Row or
column projection shows that for $A$ on such a line with $|A|>L/2$, its
complement is the unique global minimum recovery for its boundary.  Each line
therefore gives the odd repetition-code majority sum
\begin{equation}
 -\frac{i}{2^L}\binom{L-1}{(L-1)/2},
 \label{eq:per-line-coefficient}
\end{equation}
using
$\sum_{k=(L+1)/2}^L(-1)^k\binom Lk
=(-1)^{(L+1)/2}\binom{L-1}{(L-1)/2}$.
Multiplying by $L$ lines per direction yields
\cref{eq:channel-expansion,eq:alpha-result}.  Iverson and Preskill already
obtained this contribution from each logical line and the magnitude of the summed coefficient, up
to logical labels and signs, in this setting \cite{iverson2020coherence}; their
Appendix~H also identifies weight-$2L$ double-logical paths.  We reproduce the established onset and coefficient in common conventions to compare the channel with its effects.  Related repetition-code reductions and
minimum-logical-support perturbative expansions appear in coherent-error analyses
\cite{greenbaum2018modeling,huang2019coherent,rajmohan2026correlated}.

Finally, each channel class-matrix coefficient is a finite trigonometric
polynomial whose integral Taylor remainder is bounded by
$(2L^2)^{L+1}2^{4L^2}|\theta |^{L+1}/(L+1)!$.  The $16$ left--right logical-$X$
carriers have diamond norm one because unitary multiplication preserves
singular values, giving \cref{eq:channel-remainder}; \cref{app:drift} gives the
order squeeze and line projection.
\end{proof}
\subsection{Proof of the nonresonant operational separation}
\label{app:separation-proof}
\begin{proof}
Put $h=|\theta |^L$ and $\sigma=\sgn(\theta )$.  By \cref{thm:logical-drift},
$\Lambda_L(\theta )=\id+\sigma h\mathcal G_L+E_L(\theta )$.  Compare one round with
$e^{\sigma h\mathcal G_L}$.  The integral exponential remainder and
\cref{eq:channel-remainder} give
\begin{equation}
 \|\Lambda_L(\theta )-e^{\sigma h\mathcal G_L}\|_\diamond
 \le\kappa_L|\theta |^{L+1}
 +\tfrac12h^2g_L^2e^{hg_L},
 \; g_L=\|\mathcal G_L\|_\diamond.
 \label{eq:one-step-product-bound}
\end{equation}
Both maps are channels.  A mixed telescoping sum over
$R=\lfloor\tau/h\rfloor$
rounds, together with the error from rounding the round count down to an integer, gives, uniformly for $0\le\tau\le T$,
\begin{multline}
 \left\|\Lambda_L(\theta )^{\lfloor\tau/h\rfloor}
 -e^{\sigma\tau\mathcal G_L}\right\|_\diamond\\
 \le \kappa_LT|\theta |
 +\frac{T}{2}g_L^2e^{g_Lh}h+g_Lh.
 \label{eq:product-bound}
\end{multline}
The right side vanishes at fixed $L,T$; $\tau=\tau_*$ proves
\cref{eq:channel-limit}.

By \cref{thm:record-onset}, fixed-$L$ constants $A_L,\delta_L>0$ satisfy
$\eta_L(\theta )\le A_L|\theta |^{2L}$ for $|\theta |\le\delta_L$.  Hence
\begin{equation}
 \Delta_{\rm rec}(\theta ,R_*)
 \le R_*\eta_L(\theta )
 \le A_L\tau_*|\theta |^L,
 \label{eq:record-schedule-bound}
\end{equation}
proving the upper bound in \cref{eq:record-limit}. The matching lower bound is established below in Sec.~\ref{sec:history-lower-bound}.

For unitary channels, the unhalved norm obeys
\begin{equation}
 \|\Ad_U-\Ad_V\|_\diamond
 =2\bigl[1-\nu(U^\dagger V)^2\bigr]^{1/2},
 \label{eq:unitary-channel-distance}
\end{equation}
where $\nu(W)$ is the minimum modulus in the convex hull of $W$'s spectrum.
With an arbitrary ancilla, the output overlap is $\Tr(\rho U^\dagger V)$ for
some density operator $\rho$; these values fill that convex hull, and the
pure-state trace-distance formula proves \cref{eq:unitary-channel-distance}.
Here $\cH$ has spectrum $\{2,0,0,-2\}$, so $e^{-i\pi\cH/8}$ has spectrum
$\{e^{-i\pi/4},1,1,e^{i\pi/4}\}$ and convex-hull minimum modulus $1/\sqrt2$,
giving distance $\sqrt2$ from identity.
\end{proof}

\subsection{A rare syndrome attains the history exponent}
\label{sec:history-lower-bound}
Fix odd $L\ge3$. Split the simple diagonal staircase in Eq.~\eqref{eq:diagonal-staircase} into its two alternating matchings
\begin{equation}
 E_0=\{h_{j,j}:j\in\mathbb Z_L\},\qquad
 E_1=\{v_{j+1,j}:j\in\mathbb Z_L\}.
 \label{eq:rare-matchings}
\end{equation}
They have the same syndrome $s_*$, supported on the $2L$ distinct vertices of the dual cycle. An error edge can meet at most two of these odd vertices, so every support with this syndrome has weight at least $L$. Both matchings attain that weight. Their symmetric difference has winding $(1,1)$ and hence logical class $P$.

Let $r_{s_*}$ be the declared minimum-weight recovery, necessarily of weight $L$. The difference between it and any weight-$L$ support has even cardinality. For odd $L$, the logical classes $(1,0)$ and $(0,1)$ have odd cardinality, so the leading corrected terms lie only in classes $I$ and $P$. Let $a$ and $b$ count these weight-$L$ terms. Both are positive: the two matchings differ by $P$. All weight-$L$ amplitudes have the same phase, and multiplication by the $X$ recovery introduces no Pauli sign. Consequently,
\begin{align}
 K_{s_*}(\theta)&=(-i\theta/2)^L(aI+bP)+o_L(|\theta|^L),\nonumber\\
 p_{s_*}^\pm(\theta)&=\frac{(a\pm b)^2}{2^{2L}}|\theta|^{2L}
                      +o_L(|\theta|^{2L}),\nonumber\\
 p_{s_*}^+(\theta)-p_{s_*}^-(\theta)
     &=c_L|\theta|^{2L}+o_L(|\theta|^{2L}),
 \quad c_L=\frac{4ab}{2^{2L}}>0.
 \label{eq:rare-probability}
\end{align}
These expansions hold from either angle sign. In a fixed parity sector the history law is a product law. For the event $B_R$ that $s_*$ occurs at least once,
\begin{equation}
 |P_+^{(R)}(B_R)-P_-^{(R)}(B_R)|
 =|(1-p_{s_*}^-)^R-(1-p_{s_*}^+)^R|.
 \label{eq:rare-event}
\end{equation}
At $R=R_*(\theta)$, both $Rp_{s_*}^\pm=O_L(|\theta|^L)$ tend to zero. The mean-value theorem applied to $(1-p)^R$ therefore gives
\begin{equation}
 |P_+^{(R_*)}(B_{R_*})-P_-^{(R_*)}(B_{R_*})|
 =\tau_*c_L|\theta|^L[1+o_L(1)].
 \label{eq:history-lower-bound}
\end{equation}
Total variation is at least the difference on any event. Choosing logical inputs in opposite parity sectors proves the lower bound in Eq.~\eqref{eq:record-limit}; Eq.~\eqref{eq:record-schedule-bound} supplies the upper bound. This result concerns stored-input distinguishability at fixed noise. Exact blindness to the sign of the noise is unchanged.

As implementation checks, enumeration of minimum supports at $L=3,5,7,9,11$ gives positive counts in both classes, and exhaustive weight-three support enumeration at $L=3$ agrees with the matching construction. For the checked $L=3$ syndrome, $a=b=1$ and $c_3=1/16$. The event contrast divided by $|\theta|^3$ is $0.03205$, $0.03255$, and $0.03268$ at $\theta=0.04,0.02,0.01$, approaching $\tau_*/16=0.032725$. These finite checks support the all-odd-$L$ argument above.

\subsection{The general complementary-channel baseline}
\label{sec:privacy-baseline}

The qualitative vanishing in \cref{eq:record-limit} already follows from
correctability--privacy duality.  To apply it with the same norms, put
$\mathcal N_\theta =\Lambda_L(\theta )^{R_*(\theta )}$ and let $\mathbf s$ range over complete
histories, with $K_{\mathbf s}=K_{s_{R_*}}\cdots K_{s_1}$.  A dilation and
its complementary channel are
\begin{align}
 W_\theta &=\sum_{\mathbf s}K_{\mathbf s}\otimes|\mathbf s\rangle,
 &W_\theta ^\dagger W_\theta &=I_4,\label{eq:history-dilation}\\
 \mathcal N_\theta ^c(\rho)&=\sum_{\mathbf s,\mathbf u}
 \Tr(K_{\mathbf s}\rho K_{\mathbf u}^\dagger)
 |\mathbf s\rangle\langle\mathbf u|.\nonumber
\end{align}
Dephasing the history register gives exactly the classical record channel
$\mathcal M_\theta (\rho)=\sum_{\mathbf s}P_{\rho,\theta }^{(R_*)}(\mathbf s)
|\mathbf s\rangle\langle\mathbf s|$.
By \cref{eq:product-bound}, the unhalved diamond error
$\epsilon_\theta =\|\mathcal N_\theta -\mathcal U_\sigma\|_\diamond
=O_L(|\theta |)$, where $\sigma=\sgn(\theta )$.  Inverting $\mathcal U_\sigma$ is
therefore an $\epsilon_\theta $-accurate recovery.  Theorem~5 of
Kretschmann, Kribs, and Spekkens \cite{kretschmann2008complementarity}
and contractivity under dephasing give a constant classical channel
$\mathcal C_\theta (\rho)=\Tr(\rho)\omega_\theta $ satisfying
\begin{equation}
 \|\mathcal M_\theta -\mathcal C_\theta \|_\diamond\le2\sqrt{\epsilon_\theta }.
 \label{eq:generic-privacy-channel}
\end{equation}
For any pair of inputs, the triangle inequality contributes two such
errors, and the factor $1/2$ in total variation cancels that factor two:
\begin{equation}
 \Delta_{\rm rec}(\theta ,R_*)\le2\sqrt{\epsilon_\theta }
       =O_L(\sqrt{|\theta |}).
 \label{eq:generic-privacy-tv}
\end{equation}
Thus the general bound shows that logical-input distinguishability vanishes in this limit even during nontrivial unitary evolution.  The toric effect calculation improves this guarantee
to \cref{eq:record-schedule-bound}, constraining the success advantage of
\emph{every} equal-prior binary history classifier to
$\Delta_{\rm rec}/2=O_L(|\theta |^L)$.  This sharper bound supplies a quantitative
reference for monitoring at the coherent accumulation time.  The rare-syndrome construction in Sec.~\ref{sec:history-lower-bound} attains the exponent at this schedule. The result fixes the small-angle order of the maximum contrast; it does not determine its leading constant or a generally optimal sample schedule.

\begin{remark}[Why the schedule is explicit]
\label{rem:resonance}
For any sequence approaching zero with eventually fixed sign $\sigma$ and
$R(\theta )|\theta |^L\to\tau<\infty$, the same proof gives
$e^{\sigma\tau\mathcal G_L}$.  Resonant values exist.  At
$\tau=m\pi/\alpha_L$, $m\in\mathbb Z_{\ge0}$, the limiting channel is identity.
Thus $R=\Theta_L(|\theta |^{-L})$ alone is insufficient;
$\tau_*=\pi/(8\alpha_L)$ is a specified nonresonant choice.
\end{remark}

\section{Finite-precision recovery comparisons}
\label{sm:contents:11}
\label{app:precision-results}
\subsection{A controlled recovery comparison}
\label{sec:recovery-numerics}

Using the same exhaustive instrument, we evaluate entanglement fidelity at
$\theta =0.05$ and $R=4188$, the reference finite-angle example.  Table~\ref{tab:recovery}
compares evaluated policies and explicitly labeled bounds.  All receive the
same signed calibration.  At this angle, one-round optimal Pauli selection
equals minimum-weight recovery, so there is no Pauli-only improvement at the
anchor.  Coherent feedback improves recovery, while the fixed-unitary
comparison isolates a benefit from conditioning on the syndrome.

\begin{table*}[t]
\caption{Entanglement infidelity at $L=3$, $\theta =0.05$. Entries labeled as bounds are analytical comparators. The fixed controller is tuned to the storage horizon, while the Pauli policy optimizes one-round fidelity. Extra logical quantum actuation is ideal except for the stated phase discretization.}
\label{tab:recovery}
\centering
\begin{tabular}{lrr}
\hline
Recovery or comparator & One round & 4188 rounds \\
\hline
Minimum-weight Pauli & $7.26443\times10^{-6}$ & $0.291997$ \\
One-round optimal Pauli & $7.26443\times10^{-6}$ & $0.291997$ \\
Fixed terminal unitary: lower bound & $7.24664\times10^{-6}$ & $0.0296724$ \\
Fixed controller tuned to storage horizon & --- & $0.0296724$ \\
Exact conditional polar recovery & $4.221\times10^{-8}$ & $0.000176744$ \\
Per-round nearest-bin 12-bit feedback & $4.81288\times10^{-8}$ & $0.0685292$ \\
Per-round randomized 12-bit feedback & $9.55821\times10^{-8}$ & $0.000400171$ \\
Terminal 12-bit correction: upper bound & --- & $0.000177333$ \\
\hline
\end{tabular}
\end{table*}

Nearest-bin 12-bit feedback reduces one-round infidelity by a factor of approximately
$151$ relative to minimum-weight recovery, yet accumulated phase bias makes
it worse than the implemented fixed controller at the declared storage
horizon.  An exploratory extension randomizes independently between adjacent
phase bins with unbiased phase expectation.  Summing all 16 possible
quantized diagonal actions per syndrome gives the exact ensemble-average
channel.  Its final-memory infidelity is approximately $4.0017\times10^{-4}$,
smaller by a factor of about $74$ than that of the implemented fixed controller.  Figure~\ref{fig:recovery-main}
shows both the improvement and the failed deterministic-rounding policy.
The selection probabilities and classical arithmetic use float64; only the
realized phases are 12-bit.  The guarantee averages over the declared randomization resource; individual rounding sequences can have different errors.

History-aware deferred control remains a relevant alternative to online feedback.  With the extra correction postponed until the
end, \cref{eq:terminal-precision-bound} gives 12-bit infidelity below
$1.774\times10^{-4}$, already smaller than the per-round randomized result.
The terminal entry evaluates a rigorous analytical upper bound from the exact overlap.  It assumes accurate
classical phase accumulation and an ideal final diagonal logical action.

\begin{figure*}[t]
 \centering
 \includegraphics[width=\textwidth]{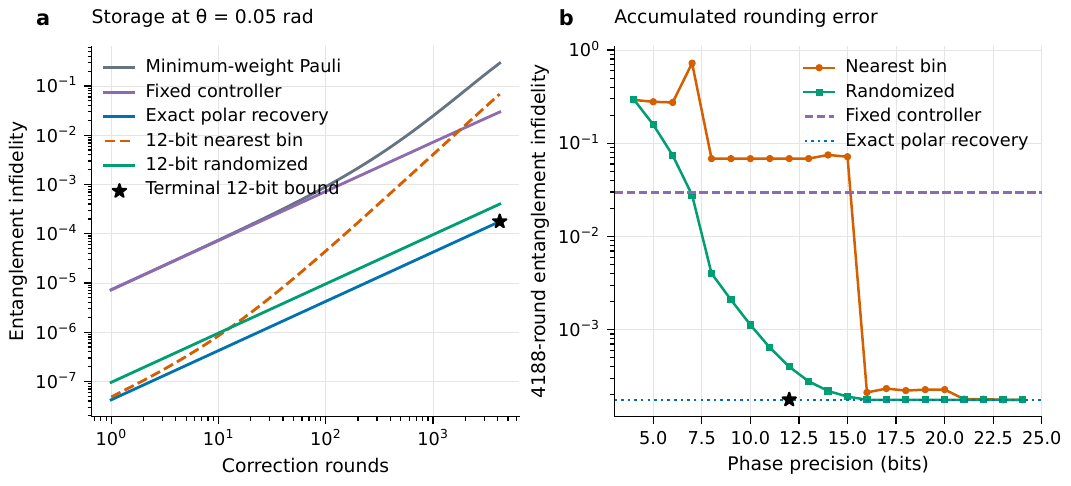}
 \caption{Recovery and accumulated control precision for the exact $L=3$
 instrument at $\theta =0.05$.  (a) Final-memory entanglement infidelity versus
 rounds.  The fixed diagonal controller is tuned to $R=4188$; it is an
 implemented feasible policy, close to the terminal-unitary bound there.
 (b) The complete integer precision sweep at the same horizon.  Ordinary
 rounding leaves coherent bias; randomized rounding trades it for dephasing.
 The terminal 12-bit entry reports the analytical guarantee.  The comparison assumes ideal extraction and signed calibration; logical-gate implementation costs would add to the reported resources.}
 \label{fig:recovery-main}
\end{figure*}

Wrong-sign calibration reverses the benefit: the randomized controller's
memory infidelity rises to approximately $0.7521$, versus $0.2920$ without
extra feedback.  Under an explicit additional complete-logical-depolarization
model after every feedback round, its one-round advantage over the Pauli
baseline survives only below a probability of complete logical depolarization of approximately
$7.65\times10^{-6}$ per whole logical action.  The stringent budget applies to repeated actuation; deferred correction has a different resource requirement.  The angle, calibration, precision, and
actuation-noise sweeps, including failures, are retained in
\cref{app:recovery-protocol}.  These calculations illustrate recovery within the stated instrument and identify the control resources needed for its benefit.

\section{Extended validity tests}
\label{sm:contents:12}

\label{app:validity-stress}
\subsection{Where the guarantee fails}
The drift-bound violation produces 187 harmful activations under both the confidence-only and full rules. No software binding failure is required: the physical validity premise is false. An additional age--drift grid and readout/transfer stress study expose the same boundary (Fig.~\ref{cal:fig:validity}). Across 5400 evaluations of condition--replicate pairs satisfying the assumptions there are no observed harmful activations; across 22600 evaluations of condition--replicate pairs violating the assumptions there are 4767. Each count block is shared across 5 transfer values; statistical uncertainty must account for that grouping.

The offset curves integrate every possible count against its binomial probability at 201 offsets. The count distribution is summed exhaustively for the selected model and nuisance settings. Floating-point evaluation remains the computational uncertainty; Monte Carlo sampling intervals do not apply to this deterministic sum. Deployment requires a justified physical interpretation of the observations in addition to an authentic record.

\begin{figure*}[t]\centering\includegraphics[width=\textwidth]{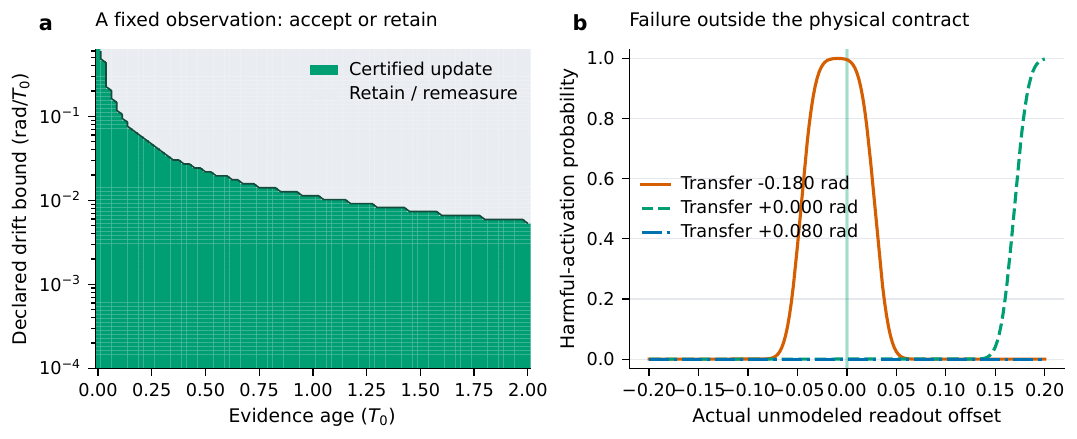}\caption{Physical validity of the acceptance contract. Left: the acceptance region on an 81-by-81 grid of evidence ages and declared drift rates for one fixed observed block. Right: exhaustively summed binomial probabilities of harmful activation across 201 simulated offsets and selected transfer mismatches. The shaded offset band is the declared interval. Transfer outside $\pm0.001$ rad violates the contract even at zero offset.}\label{cal:fig:validity}\end{figure*}

\section{Language-model maintenance and preliminary interface tests}
\label{sm:contents:13}
\label{cal:sec:llm}
The multistep controller may request a new probe acquisition, propose a recovery with its evidence identifier, or retain the incumbent recovery. The evaluator lists actions supported by the observations and rechecks each proposal at activation. The adviser can select from this menu but cannot change the reference calculation. The expanded evaluation in Supplemental Material, Sec.~\ref{sec:completion} supplies the principal multistep evidence, including incomplete workflows, acquisition cost and a matched deterministic baseline.

Earlier interface pilots exposed repeated rejected advice and wrong-sign proposals after new calibration measurements. A revised interface achieved 23 beneficial guarded completions in 24 episodes, with 1 incomplete workflow; authorization alone admitted 6 harmful updates, and the deterministic comparator completed all 24 beneficially. The two interfaces used different physical seeds, so this is not a paired causal estimate of an interface improvement. The initial and subsequent pilots contain 24 and 48 model interactions, respectively. These pilots motivate the interface but are not additional independent evidence for the central toric experiment.

\section{Learning, classical references, and acquisition cost}
\label{sm:contents:14}
\label{cal:sec:learning}
The same observation and acceptance rules also apply to learned predictors. We train 5 gated recurrent units (GRUs) and 5 multilayer perceptrons (MLPs) on the available measurement histories. A GRU has 13,121 parameters. Inputs contain 24 historical means from 128 shots each, plus an input feature giving the elapsed time between measurements; next-observation training targets use 1024 shots. True angles, path families, and evaluation risks are stored separately. Whole simulated noise paths and base-parameter blocks are disjoint across training, development, and test. Architecture details and the limits of the ablations are in Supplemental Material, Sec.~\ref{cal:app:learning}.

The references include a development-tuned Bayesian random-walk filter, a tuned sliding window, the latest observation, fixed retention, and robust finite-catalog search. They use the same observation histories, permitted recovery actions, and budgets for new calibration shots. The comparison uses local reference implementations of the specified adaptive policies. The comparison measures recovery improvement relative to retaining the incumbent and the difference between neural and classical predictors under the same observation and acquisition constraints. Conditional protection applies to both neural and classical proposers.

In the first held-out test, all methods spend 8192 new calibration shots through 4 predetermined looks with total $\alpha=0.01$. Across 600 paths, mean guarded GRU infidelity is \GRURisk, compared with \RetainRisk\ for retaining the incumbent and \BayesRisk\ for the Bayesian method. Thus learned advice lowers mean infidelity relative to retaining the incumbent in this simulation. A classical estimator also achieves most of this recovery improvement under the same contract. The paired GRU-minus-Bayesian difference is $-0.001278$, with hierarchical 95\% bootstrap interval $[-0.002675,+0.000070]$, including zero. The measured neural improvement falls short of the 10\% risk-reduction target. All methods incur the same cost of new calibration measurements in this 4-look protocol. The statistical comparison remains inconclusive, including for equivalence. These predictors receive signed-probe histories; the passive sign-blindness result in Proposition~\ref{prop:exact-record} concerns a different observation experiment.

\subsection{Choosing a budget before acquisition}
The acquisition extension uses 600 new simulated noise paths and keeps the predictor checkpoints unchanged. It chooses 1 budget $n\in\{0,128,512,2048,8192\}$ before the new measurements. At each of 5 fixed prices $\lambda$, the chosen budget maximizes expected recovery gain minus shot cost:
\begin{align}
 n^*(h)&\in\arg\max_n\{V(n\mid h)-\lambda n\},\\
 V(n\mid h)&=\mathbb E_{\theta \mid h}\mathbb E_{k\mid \theta ,n}
 [\risk(\theta ,u_0)-\risk(\theta ,u(k,n))].
 \label{cal:eq:voi}
\end{align}
For every possible count, $u(k,n)$ is the best robustly certified catalog action, or retention. The inner binomial sum includes every count and is evaluated numerically. The outer expectation uses a finite physical grid. Bayesian acquisition uses its posterior; neural and window forecasts use development-fitted Gaussian uncertainty. The predictive distributions guide measurement allocation; a confidence region formed from newly acquired calibration outcomes supplies the acceptance certificate. This region does not use the predictor's uncertainty estimate as a coverage guarantee.

At $\lambda=10^{-6}$, Bayesian acquisition uses \BayesShots\ new calibration shots on average, a \ShotReduction\ reduction from 8192, while mean infidelity changes from 0.0365583 to 0.0372372. The common historical cost is another 3072 shots. The GRU traces a similar cost--risk frontier (Fig.~\ref{cal:fig:acquisition}); the result describes a one-look risk--cost tradeoff. Unlike the first test, this extension makes 1 acquisition with $\alpha=0.01$. Its stronger single-look confidence allocation and new simulated noise paths prevent direct attribution of the cross-experiment risk decrease to learning.

\begin{figure*}[t]\centering\includegraphics[width=\textwidth]{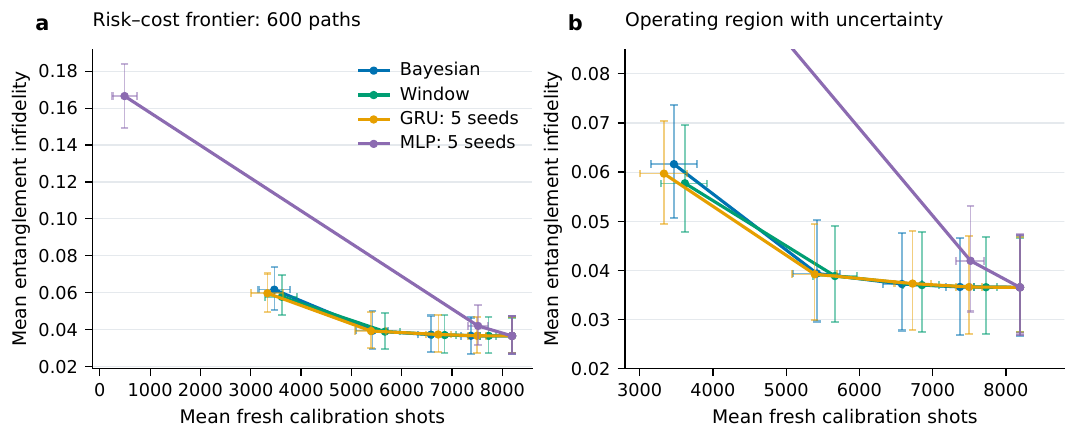}\caption{Risk and calibration cost on 600 held-out simulated noise paths. The 5 evaluated prices per method are connected by visual guides. Horizontal and vertical bars show the saved pointwise hierarchical bootstrap 95\% intervals for mean new calibration shots and infidelity. Neural methods are gated recurrent units (GRUs) and multilayer perceptrons (MLPs), each with 5 training seeds. The right panel expands the operating region. The horizontal axis counts new calibration shots beyond the common 3072 historical shots per path.}\label{cal:fig:acquisition}\end{figure*}

\section{Calibration and transfer to circuit-level decoder updates}
\label{sm:contents:15}
\label{cal:sec:surface}
The coherent toric model gives losses by numerical evaluation of the exact channel under the extraction and phase-correction assumptions of Sec.~\ref{sec:model}. A separate circuit-level experiment tests whether finite-data acceptance can admit classical decoder updates that reduce logical failure risk in stochastic circuits. We use Stim 1.16.0~\cite{gidney2021} and PyMatching 2.4.0~\cite{higgott2025}, distances 3 and 5, 30 syndrome rounds, and both $X$- and $Z$-basis memories. Noise consists of explicit stochastic Pauli faults. This experiment uses Stim's stochastic Pauli circuits and a catalog of classical matching decoders.

The incumbent uses a fixed matching prior. Five alternative priors emphasize measurement faults, idle-$Z$ faults, local-gate faults, paired faults, or a mixture. Each candidate is scored on the \emph{same} immutable memory record as the incumbent. Paired binary failure differences and simultaneous binomial bounds compare the worst of the two basis risks. Supplemental Material, Sec.~\ref{cal:app:surface} gives the bound. A separately implemented raw measurement parity conversion is compared against Stim's detector/observable conversion on every sampled batch.

A 16-condition pilot uses 1,310,720 complete memory shots and certifies 10 updates. A confirmation with the same fixed prior catalog samples 200 new parameter paths per distance, focusing on distance-3 local-gate drift and distance-5 idle-$Z$ drift. Their comparison changes noise family as well as distance. Each basis uses 8192 calibration shots and two independent final batches of 4096 shots: the calibration channel and a return-to-base channel. The confirmation adds 13,107,200 shots, for 14,417,920 across both stages. Calibration admits all 400 candidates at the calibration noise settings. Separate pointwise final risk boxes establish benefit for all 200 distance-3 cases and 146 distance-5 cases; the remaining 54 are unresolved.

In the original return-to-base challenge, acceptance succeeds at the calibration settings while the bound for later deployment remains vacuous. Under stale deployment all 400 \emph{individual} effects remain unresolved by the prescribed final boxes. The full stale gate refuses every update by adding a worst-case total-variation allowance of one, or two in excess risk. This bound is valid but vacuous; a fixed expiry can also refuse this single-delay challenge. Estimated means, their maximum-operation bias, and a separately labeled basis-averaged diagnostic appear in Supplemental Material, Sec.~\ref{cal:app:surface}. This circuit experiment establishes benefit at the calibration settings; stale-update certification requires a noise-family-specific drift bound.

\section{Geometry. Disconnected supports and the parity law}
\label{sm:contents:16}
\label{app:geometry}

The labeled dual graph $G_L^*$ has $L^2$ vertices, $2L^2$ edges, and incidence
matrix $H_Z$; at $L=2$, parallel edge labels remain distinct.  Connectivity
gives $\operatorname{rank}H_Z=L^2-1$, as does the primal calculation for
$H_X$.  Therefore
\begin{equation}
 \dim\ker H_Z=L^2+1,
 \qquad
 \dim(\ker H_Z/\row H_X)=2.
 \label{eq:quotient-dimension}
\end{equation}

For an arbitrary Eulerian support $a$, define seam parities
\begin{align}
 \omega_x(a)&=\sum_y a_{v_{0,y}}\pmod2,\nonumber\\
 \omega_y(a)&=\sum_x a_{h_{x,0}}\pmod2.
 \label{eq:seam-parities}
\end{align}
They vanish on dual-face boundaries and equal $(1,0),(0,1)$ on the straight
logical cycles.  By \cref{eq:quotient-dimension} they identify the quotient
with $\Ftwo^2$, avoiding an integer winding for disconnected supports.

Every Eulerian support decomposes into edge-disjoint labeled circuits; a
nontrivial support has a component with nonzero seam parity.  If it lies in a
cut, each component does and is even by cycle--cut orthogonality.  Thus
\cref{eq:affine-distance} reduces to one even nontrivial circuit.

Orient that circuit and let $n_x^\pm,n_y^\pm$ count its lifted signed steps.
For displacement $L(p,q)$ and length $w$,
\begin{align}
 w&=n_x^++n_x^-+n_y^++n_y^-,\nonumber\\
 Lp&=n_x^+-n_x^-,
 &Lq&=n_y^+-n_y^-,\nonumber\\
 w-L(p+q)&=2(n_x^-+n_y^-).
 \label{eq:winding-details}
\end{align}
This proves \cref{eq:winding-bound}; seam parity is
$(p\bmod2,q\bmod2)$.  For odd $L$, even $w$ makes $p+q$ even, so a nonzero
parity pair is $(1,1)$, proving the $2L$ bound without assuming a staircase.

For achievability, follow
$f_{j,j-1}\xrightarrow{h_{j,j}}f_{j,j}
\xrightarrow{v_{j+1,j}}f_{j+1,j}$.  For $L\ge3$, the $2L$ projected vertices
and tagged edges are distinct before closure, forming a simple circuit.  Each
edge has one endpoint in $\{f_{j,j}\}$, hence lies in its cut.  At $L=2$, the
main proof's parallel-edge witness avoids falsely simplifying the graph.

\section{Effect character expansion and exact circuit count}
\label{sm:contents:17}
\label{app:record}

For independent angles, write $c_e=\cos(\epsilon_e/2)$ and
$q_e=\sin(\epsilon_e/2)$.  Then
\begin{equation}
 U_{\bm\epsilon}=\sum_{a\in\Ftwo^n}
 (-i)^{|a|}\prod_{e\in a}q_e\prod_{e\notin a}c_e\,X(a).
 \label{eq:physical-support-expansion}
\end{equation}
Projection selects $\widetilde H_Za=s$; since $X$ operators commute and
$a+r_s\in\ker H_Z$ there,
\begin{equation}
 K_s(\bm\epsilon)=
 \sum_{a:\widetilde H_Za=s}
 (-i)^{|a|}\prod_{e\in a}q_e\prod_{e\notin a}c_e\,
 \overline X_{[a+r_s]}.
 \label{eq:multivariate-kraus}
\end{equation}
In $F_s=K_s^\dagger K_s$, recovery cancels between paired supports.  Writing
their difference $f\in\ker H_Z$ gives
$\beta_{s,f}=\sum_{a:\widetilde H_Za=s}\overline a_a a_{a+f}$ for scalar
coefficient $a_a$ in \cref{eq:physical-support-expansion}.  Insert
\begin{equation}
 \mathbf1_{\{\widetilde H_Za=s\}}
 =2^{-r}\sum_{z\in\Ftwo^r}(-1)^{z\cdot(s+\widetilde H_Za)}.
 \label{eq:syndrome-character}
\end{equation}
For one edge, the four sums over $a_e$, indexed by $(f_e,(h_z)_e)$, are
\begin{equation}
 1,\qquad \cos\epsilon_e,\qquad0,
 \qquad-i\sin\epsilon_e.
 \label{eq:four-local-factors}
\end{equation}
Their product gives \cref{eq:character-formula}.  For a nonzero Taylor
multi-index $\bm a$, sine coordinates are exactly those with odd $a_e$,
proving the support-parity argument used in the proof of
\cref{thm:record-onset} without a numerical-0 test.

For a covered weight-$2L$ circuit $g$, only $0$ and $g$ have zero boundary, so
the restriction of $\widetilde H_Z$ to $g$ has rank $2L-1$.  The nonempty
fiber of $z\mapsto h_z|_g$ over the all-one word therefore has size
$2^{r-(2L-1)}$, proving the selector factor in
\cref{eq:effect-coefficient}.

For the circuit count, orient a covered minimum circuit.  Equality in
$2L\ge L(|p|+|q|)$, together with the established $(1,1)$ winding parity,
gives $|p|=|q|=1$ and forces every step to have the corresponding winding
sign.  Thus its root, four sign pairs, and a word with $L$ horizontal and
vertical letters determine it.  Coincident
intermediate projected vertices require an intervening subword to change each
coordinate by a multiple of $L$; a proper repeated segment therefore contains
all letters of one type and none of the other.  The nonsimple words are exactly
the $2L$ cyclic shifts of $H^LV^L$.  Dividing by the $2L$ roots and two
orientations of each simple cycle yields
\begin{equation}
 M_L=\frac{L^2\cdot4\,[\binom{2L}{L}-2L]}{2L\cdot2}
 =L[\binom{2L}{L}-2L].
 \label{eq:ML-appendix}
\end{equation}

\section{Coefficient-level comparison with stabilizer effects and toric paths}
\label{sm:contents:18}
\label{app:prior-comparison}

\subsection{Generator coefficients and the cut constraint}

Hu, Liang, and Calderbank express syndrome probabilities as diagonal
generator-coefficient terms plus logical cross terms
\cite[Sec.~4.2, Eq.~(91)]{hu2022designing}.
To translate their diagonal-$Z$ presentation, exchange physical $X$ and $Z$
by Hadamard conjugation.  Their classical spaces then obey
$C_2=\row H_Z$, $C_1^\perp=\row H_X$, and
$C_2^\perp/C_1^\perp=\mathcal L_X$ in our notation.  All stabilizer signs
are positive here.  Their syndrome coset representative $\mu$ may be
chosen as our $r_s$, so $s=\widetilde H_Z\mu$; their logical coset label
$\gamma$ is our $c\in\Ftwo^2$.  With the physical rotation angle $\theta $ and
the half-angle amplitudes in \cref{eq:kappa-formula}, their
$A_{\mu,\gamma}$ is $\kappa_{s,c}$.  Changing the recovery representative
only relabels these coefficients and does not change the effect.

In particular their Eq.~(91), with the conjugation from $K_s^\dagger$
written explicitly, becomes
\begin{align}
 B_{s,d}&=\sum_{c\in\Ftwo^2}
       \overline{\kappa_{s,c}}\kappa_{s,c+d},
       \qquad d\in\Ftwo^2,\label{eq:hlc-effect-map}\\
 p_s(\rho)&=\sum_c|\kappa_{s,c}|^2+
       \sum_{d\ne0}B_{s,d}\Tr(\overline X^d\rho).
       \label{eq:hlc-probability-map}
\end{align}
Thus the basic criterion for input dependence is already present in that
work.  The additional issue is which physical cross terms cancel in this
toric specialization.  Write $u_a$ for the physical amplitude of $X(a)$ in
\cref{eq:physical-support-expansion}.  Expanding the first line gives
\begin{equation}
 B_{s,d}=\sum_{\substack{f\in\ker H_Z\\{}[f]=d}}
       \sum_{a:\widetilde H_Za=s}\overline{u_a}u_{a+f}
       =\sum_{[f]=d}\beta_{s,f}.
 \label{eq:hlc-support-map}
\end{equation}
The character transform of the inner sum is
\cref{eq:character-formula}: the local pair sum is zero when
$f_e=1$ and $(h_z)_e=0$.  This is the extra cut-containment cancellation.
At the first allowed degree $2L$, every surviving nonscalar support is a
simple circuit in class $P$; at zero syndrome all have the same sign and
the same selector factor $-2^{-(2L-1)}$.  This noncancellation establishes
an attained effect order, rather than just a lower bound from even logical
weight.  It also shows precisely how the support multiplicity enters the
observable coefficient.

\subsection{Seam-linked paths and distinct simple supports}

Iverson and Preskill count monotone paths with
$\binom{l_1+l_2}{l_1}$ in Eq.~(H.1), then use two seam linkings in
Eq.~(H.2) to estimate weight-$2L$ double-logical strings
\cite[Appendix~H]{iverson2020coherence}.  After exchanging $Z$ and $X$
and identifying the square lattice with its dual, their double-logical
class is our $P$.  Let $i,j=0,\ldots,L-1$ label the two chosen seam edges.
Denote their Eq.~(H.2) by
\begin{align}
 N_{ij}&=A_{ij}+A_{L-1-i,j},\label{eq:ip-seam-count}\\
 A_{ij}&=\binom{i+j}{i}
       \binom{2L-2-i-j}{L-1-i}.\nonumber
\end{align}
The two terms correspond to the two relative winding signs.  A simple
minimum circuit crosses each seam exactly once, so its seam-edge pair and
linking are unique; no choice of root or traversal orientation remains in
this representation.  Within each linking there is also one nonsimple
path pairing: a straight horizontal loop joined to a straight vertical
loop at their common vertex.  It has the selected seam edges but contains
two odd length-$L$ circuits, so cannot be contained in a cut.  The two
linkings represent this same excluded support twice.  These are the only
nonsimple cases, by the repeated-vertex argument in \cref{app:record}.
Consequently the conversion to our distinct tagged simple supports is
\begin{equation}
 M_L=\sum_{i,j=0}^{L-1}(N_{ij}-2)
     =L\binom{2L}{L}-2L^2.
 \label{eq:ip-exact-conversion}
\end{equation}
For the last equality, fix $k=i+j$.  Vandermonde convolution gives
$\sum_{i+j=k}A_{ij}=\binom{2L-2}{L-1}$ for every
$k=0,\ldots,2L-2$, with binomial coefficients outside their natural range
set to zero.  The second linking has the same double sum.  Hence
$\sum_{i,j}N_{ij}=2(2L-1)\binom{2L-2}{L-1}
=L\binom{2L}{L}$.

Equivalently, rooted oriented words have $L^2$ roots and four winding sign
pairs.  Removing the $2L$ cyclic shifts of $H^LV^L$ per root and sign pair,
then dividing by $2L$ roots and two orientations per simple support, yields
\cref{eq:ML-appendix}.  Translated supports are distinct because the edge
tags are fixed.  The two normalizations therefore agree exactly.
An independent integer-arithmetic check verifies the seam-sum identity and
enumerates the tagged simple supports at $L=3,5$.
Equation~(H.3) of Ref.~\cite{iverson2020coherence} uses the $4^L/\sqrt{\pi L}$ path-growth scale for its large-$L$ comparison, whereas the tagged-support multiplicity here scales as $M_L\sim\sqrt{L/\pi}\,4^L$.  The enumeration uses standard seam-linking and path-counting arguments.

The exact coefficient measures the full leading zero-syndrome parity
contrast, $-M_L/2^{2L-2}$, rather than the existence of one surviving
circuit.  It supplies a parameter-free normalization for perturbative
checks of the syndrome instrument and distinguishes an attained leakage
order from a support obstruction alone.  A single nonzero witness would
suffice for the exponent and the qualitative separation.  The contribution is the cancellation and common-sign survival in the effect, the identified parity observable, and the quantitative history bound.

\subsection{A related fixed-code perturbative privacy bound}

Shen and Zhong also obtain a fixed-$(d,T)$, weak-amplitude-damping bound
$O(\gamma^{d_Z})$ on the syndrome contrast between logical $Z$ eigenstates
\cite[Proposition~1]{shen2026transcript}.
That result excludes leakage below a permitted order without establishing
general attainment.  The present coherent-rotation calculation proves a
nonzero $\theta ^{2L}$ coefficient for the product parity $P$ of the odd torus.
The fixed-instrument calculation complements the established privacy comparison for growing-distance noisy memories.

\section{Leading-order line reduction and the channel remainder}
\label{sm:contents:19}
\label{app:drift}

Write $\kappa_{s,c}(\theta )=\sum_m \theta ^m\kappa_{s,c}^{(m)}$.  From
\cref{eq:kappa-formula},
\begin{equation}
 \kappa_{s,c}^{(m)}=(-i)^m r_{s,c}^{(m)},
 \qquad r_{s,c}^{(m)}\in\mathbb R.
 \label{eq:kappa-reality}
\end{equation}
A weight-$w$ support contributes only at powers congruent to $w$ modulo two.
For $c\ne0$, $a+r_s$ is nontrivial, and
\begin{equation}
 |a|\ge w(s),
 \qquad |a|\ge L-w(s)
 \label{eq:amplitude-order-bound}
\end{equation}
give the main proof's threshold; for class $(1,1)$, the second becomes
$2L-w(s)$.

To show that the following inequalities are equalities, consider an order-$L$ cross term with
nontrivial residual $g=a+r_s$ and identity-class support $a'$ of the same
syndrome.  Taylor degree dominates support weight, while
\cref{eq:amplitude-order-bound} gives $|a|+|a'|\ge L$; thus equality holds.
Minimality of $r_s$ gives $w(s)\le |a'|$, so
\begin{equation}
 L\le |g|\le |a|+w(s)\le |a|+|a'|=L.
 \label{eq:order-L-equality-squeeze}
\end{equation}
All inequalities are equalities.  Specifically, $|g|=L$, $|a'|=w(s)$,
$a\cap r_s=\varnothing$, $r_s\subset g$, and $a=g\setminus r_s$.  Also
$|a|=L-w(s)$ and, because $L$ is odd, $w(s)\le(L-1)/2$.  The cycle
$a'+r_s$ has weight at most $2w(s)<L$ and is stabilizer-trivial.  Every
nontrivial weight-$L$ $X$ cycle is a straight row or column.  The projection
below makes $r_s$ its unique boundary minimizer; hence the other minimizer
$a'=r_s$.  Exchanging the pair is identical.  Thus complementary subsets of
straight logical lines exhaust the order-$L$ coefficient.

The channel has the class-matrix form
\begin{align}
 \Lambda_L(\theta )(\rho)&=\sum_{c,c'\in\Ftwo^2}
 M_{c,c'}(\theta )\overline X^c\rho\overline X^{c'},\nonumber\\
 M_{c,c'}(\theta )&=\sum_s\kappa_{s,c}(\theta )\overline{\kappa_{s,c'}(\theta )}.
 \label{eq:channel-class-matrix}
\end{align}
At odd $L$, coefficients with both logical-class indices nonzero start above degree $L$; trace
preservation gives $\sum_cM_{c,c}(\theta )=1$ and
$M_{0,0}=1+O_L(\theta ^{L+1})$.  At degree $L$, only
$(m,m')=(L-w(s),w(s))$, whose phase from
\cref{eq:kappa-reality} is
\begin{equation}
 (-i)^m(i)^{m'}=(-1)^mi^L.
 \label{eq:pairing-phase}
\end{equation}
This phase is purely imaginary, so the leading generator is a sum of commutators.

For global line minimization, take a horizontal dual-graph logical row
$R_{y_0}$ and let
$\pi_0^{(y_0)}(x,y)=(x,y_0)$.  On edge basis elements, define
$\pi_1^{(y_0)}(e_{x,y}^{\parallel})=e_{x,y_0}^{\parallel}$ for edges parallel
to the row and $\pi_1^{(y_0)}(e_{x,y}^{\perp})=0$, extending linearly.
On basis edges, $\partial\pi_1^{(y_0)}=\pi_0^{(y_0)}\partial$; mod-two
collisions only decrease weight.  If $A\subset R_{y_0}$ has boundary $D$,
every boundary-$D$ chain projects to $A$ or $R_{y_0}\setminus A$.  Hence
\begin{equation}
 w(D)=\min\{|A|,L-|A|\}.
 \label{eq:line-minimum}
\end{equation}
If $|A|<L/2$ and $|r|=|A|$, projection must equal $A$ with no weight loss,
forbidding transverse edges and collisions.  The boundary excludes off-target
partial rows, and a full off-target row weighs $L>|A|$; thus $r=A$ uniquely.
Exchanging coordinates proves the column case.  Odd $L$ excludes half-line
ties, and winding equality places every active residual on such a line.

On a fixed line, complementary subsets share a syndrome; the lighter corrects
to identity and the heavier to the line class.  Their leading products sum to
\begin{align}
 \beta^{\rm line}
 &=\frac{i^L}{2^L}
 \sum_{k=(L+1)/2}^L(-1)^k\binom Lk\nonumber\\
 &=-\frac{i}{2^L}\binom{L-1}{(L-1)/2}.
 \label{eq:line-sum-detail}
\end{align}
Pascal induction proves the partial alternating-binomial identity.  There are
$L$ representatives per one-logical class and none of weight $L$ for the
product class, giving \cref{eq:alpha-result}.

For the remainder, equip finite Fourier polynomials with
$\|f\|_{\mathcal F,1}=\sum_\omega|\widehat f_\omega|$.  Each support amplitude
$\cos(\theta /2)^{n-|a|}[-i\sin(\theta /2)]^{|a|}$ has Fourier coefficient norm at most
one.  If $N_{s,c}$ counts supports in cell $(s,c)$, then
$\|\kappa_{s,c}\|_{\mathcal F,1}\le N_{s,c}$, and submultiplicativity gives
\begin{align}
 \|M_{c,c'}\|_{\mathcal F,1}
 &\le\sum_sN_{s,c}N_{s,c'}\nonumber\\
 &\le\Bigl(\sum_sN_{s,c}\Bigr)
       \Bigl(\sum_sN_{s,c'}\Bigr)\nonumber\\
 &\le2^{2n}=2^{4L^2}.
 \label{eq:fourier-l1-bound}
\end{align}
The frequencies of $M_{c,c'}$ have magnitude at most $n=2L^2$, so every
$(L+1)$st derivative is bounded by $n^{L+1}2^{2n}$.  Integral Taylor remainder
therefore gives, for real $\theta $,
\begin{equation}
 \left|M_{c,c'}(\theta )-\sum_{j=0}^{L}[\theta ^j]M_{c,c'}(\theta )\theta ^j\right|
 \le\frac{(2L^2)^{L+1}2^{4L^2}}{(L+1)!}|\theta |^{L+1}.
 \label{eq:scalar-remainder}
\end{equation}
The $16$ maps that multiply an operator by logical Paulis on the left and right have diamond norm one; the triangle inequality lifts \cref{eq:scalar-remainder} to
\cref{eq:channel-remainder}.

\section{Product limit and null-prefix histories}
\label{sm:contents:20}
\label{app:sequential}

For a history prefix $z$, compose \cref{eq:history-law}'s branch maps to obtain
its unnormalized conditional state.  At positive trace, divide by it; at zero
trace, every child has zero probability regardless of the filler kernel.  Thus
the recursion for probabilities of partial histories is exact and $\eta_L(\theta )$ controls every next-round kernel
pair.  First-disagreement maximal coupling proves \cref{eq:history-bound}
without common-support or likelihood-ratio assumptions.

Second, define $\mathcal V_h=e^{\sigma h\mathcal G_L}$, a unitary channel.
For any $R$,
\begin{equation}
 \Lambda_L(\theta )^R-\mathcal V_h^R
 =\sum_{j=0}^{R-1}\Lambda_L(\theta )^{R-1-j}
 [\Lambda_L(\theta )-\mathcal V_h]\mathcal V_h^j.
 \label{eq:mixed-telescoping-sum}
\end{equation}
Each channel factor has diamond norm one, avoiding powers of the generally
nonpositive Euler map $\id+\sigma h\mathcal G_L$.  Exponential remainder gives
$\|\mathcal V_h-\id-\sigma h\mathcal G_L\|_\diamond
\le h^2g_L^2e^{hg_L}/2$, and
$R\le T/h$ gives the first two terms of \cref{eq:product-bound}.  For
$R=\lfloor\tau/h\rfloor$, $0\le\tau-Rh<h$, and Duhamel's formula bounds the remaining time interval by $hg_L$, proving the uniform limit including floor discontinuities.

\section{Exhaustive numerical protocol}
\label{sm:contents:21}
\label{app:numerics}

For $L=3$ we use the first 8 plaquette rows in the coordinate order of
\cref{eq:edge-labels} as the independent syndrome basis.  For each support
$a\in\Ftwo^{18}$, we compute its syndrome, Hamming weight, and seam
parities.  Comparing weight and then the ascending edge-index tuple over
all supports gives exactly the declared recovery $r_s$.  Let
$N_{s,c,w}$ count supports of weight $w$ with syndrome $s$ and residual
logical class $c=[a+r_s]$.  Then
\begin{equation}
 \kappa_{s,c}(\theta )=\sum_{w=0}^{18}N_{s,c,w}
 \cos(\theta /2)^{18-w}[-i\sin(\theta /2)]^w.
 \label{eq:numeric-counts}
\end{equation}
The integer counts are evaluated exhaustively. Independent decimal-arithmetic checks assess numerical precision; the plotted values follow directly from these counts.

For a simultaneous logical-$X$ eigenstate labeled by $x\in\Ftwo^2$, let
$z_s(x)=\sum_c(-1)^{x\cdot c}\kappa_{s,c}$.  The averaged channel multiplies
$|x\rangle\langle y|$ by
\begin{equation}
 C_{xy}(\theta )=\sum_s z_s(x)\overline{z_s(y)}.
 \label{eq:numeric-schur}
\end{equation}
Consequently repeated rounds multiply this matrix element by $C_{xy}^R$,
without constructing any histories.  Taking $x=++$, $y=--$ gives
$d_b=|1-C_{xy}^{R_*}|$ and
$\ell_b=(1-\operatorname{Re}C_{xy}^{R_*})/2$.
The plotted one-round channel coefficient is
$|\sum_s\kappa_{s,(1,0)}\overline{\kappa_{s,(0,0)}}|$, tending to
$\alpha_3\theta ^3$.  This quantity is the indicated channel-expansion coefficient; the diamond norm is evaluated separately.  The exact identity \cref{eq:eta-exact} supplies the record
curve without optimization over input states.

The checks verify support-count completeness, trace preservation,
absence of single-logical-$X$ terms in every effect, sign symmetry,
and selected direct two-round compositions against
\cref{eq:exact-history-mixture}.  The two leading coefficients are checked
with rational arithmetic on integer support counts.  Independent
60-digit trigonometric sums and repeated complex multiplication at
$\theta =0.015,0.025,0.05,0.1,0.15$ check floating-point accumulation.  No
normalization is imposed to force the trace-preservation check to pass.
The grid evaluates finite-angle behavior over $0.015\le \theta \le0.15$; the analytical results establish the asymptotic exponent and stated all-angle identities.

\section{Recovery calculations and validation}
\label{sm:contents:22}
\label{app:recovery-protocol}

The calculation regenerates all $2^{18}$ support counts and recovery representatives before use.  An
independent 18-qubit computational-basis construction applies every physical
$R_X(\theta )$ gate, projects onto all 256 syndrome sectors, and recovers their
logical Kraus matrices at $\theta =0,0.015,0.05,0.15,-0.05,0.3,0.6$.
The tolerance is $10^{-12}$, using complex128 and float64 without probability
clipping or forced renormalization.  Unit tests check Kraus versus full
superoperator composition, parity overlap, Pauli optimization, feasible fixed
control, the spectral bound, and enumeration of all randomized actions.
The additional terminal check explicitly sums every two-round history and
tests the analytical precision bound independently.

The fixed comparison uses the principal eigenvector of $C^{\circ R}$,
projects its entries to unit modulus, and distributes the resulting phase
correction over $R$ rounds.  The eigenvector phase is fixed before choosing
angle branches.  At the anchor, the feasible policy lies approximately $2.41\times10^{-8}$ above the spectral lower bound, quantifying its proximity within the stated comparison.  The Pauli policy
maximizes one-round fidelity.  The optimization here selects the best one-round Pauli correction.

For grid spacing $\Delta=2\pi/2^b$, write the desired phase as
$\phi=(n+f)\Delta$, with $n\in\mathbb Z$ and $0\le f<1$.  Independently choose
$n\Delta$ or $(n+1)\Delta$ with probabilities $1-f$ and $f$.  With
$w_s(x)=\mathbb E e^{i\phi_s(x)}$, the corrected Schur multiplier is
\begin{equation}
 C^{\rm rand}_{xy}=\begin{cases}
 \sum_s z_s(x)z_s(y)^*w_s(x)w_s(y)^*,&x\ne y,\\
 \sum_s|z_s(x)|^2=1,&x=y.
 \end{cases}
 \label{eq:randomized-rounding-channel}
\end{equation}
The diagonal uses $\mathbb E|e^{i\phi}|^2=1$, not
$|\mathbb E e^{i\phi}|^2$.  Independently sampled controller randomness is
assumed each round.  This applies the established benefit of mixing nearby
unitaries to phase quantization \cite{campbell2017mixing}; it differs from
the Pauli-twirling construction of randomized compiling
\cite{wallman2016noise}.  

For complete logical depolarization with probability $\epsilon$ after each
additional control round, unitality gives
\begin{equation}
 F_e^{(\epsilon)}(R)=(1-\epsilon)^R F_e(R)
                  +\frac{1-(1-\epsilon)^R}{16}.
 \label{eq:recovery-cost}
\end{equation}
The phenomenological budget treats the correction as a whole action; pulse synthesis, measurement faults, and latency-dependent gate noise require a resolved implementation model.  It also charges per-round control that
can be deferred in this commuting storage example.

\begin{figure*}[t]
 \centering
 \includegraphics[width=0.98\textwidth]{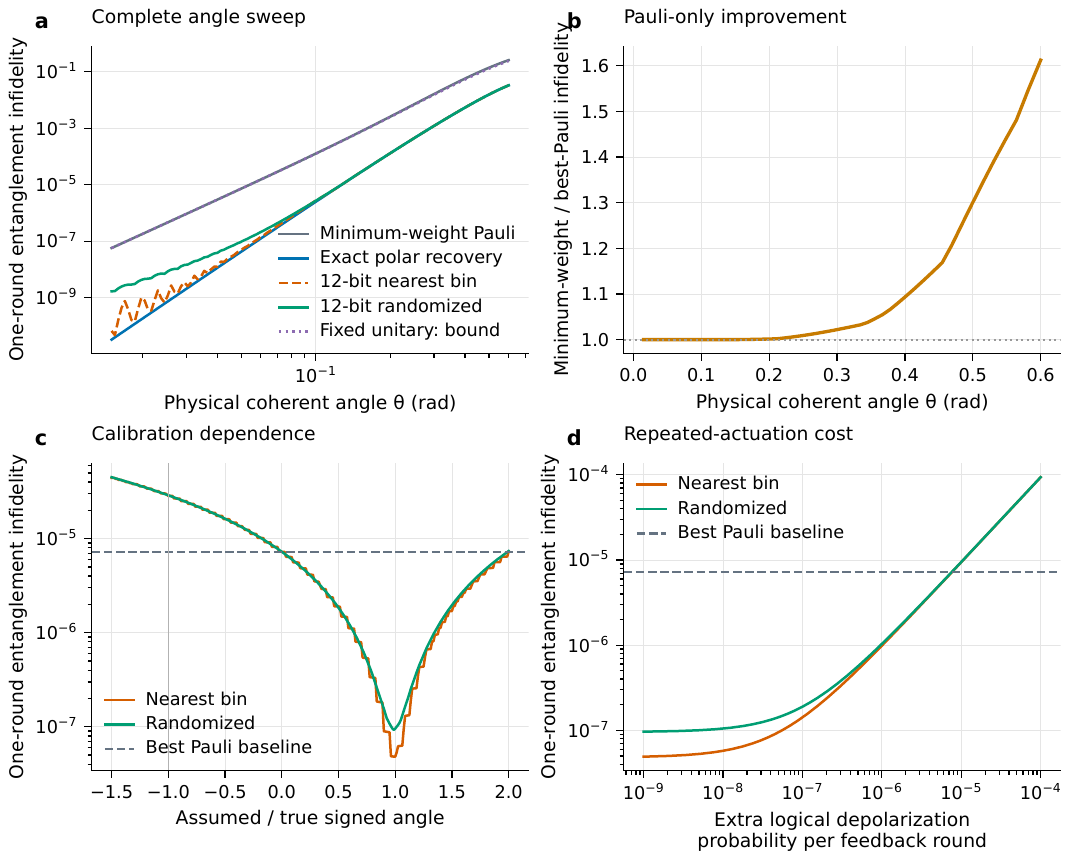}
 \caption{Complete recovery robustness sweeps.  (a) One-round infidelity
 across 122 angles including the reference angle $\theta=0.05$.  (b) Relative Pauli-only
 improvement occurs at stronger noise, where logical infidelities are high.
 (c) Assumed calibration divided by true $\theta =0.05$; $-1$ is the wrong-sign case.
 (d) Extra per-round logical depolarization erases the benefit.  Dashed and dotted curves labeled as bounds are analytical comparators.  Every curve is an exhaustive instrument calculation, so sampling
 error bars are inapplicable.}
 \label{fig:recovery-robustness}
\end{figure*}

The parameter sweeps contain 122 angles from $0.015$ to $0.6$, 124 distinct
round counts through 4188, 281 signed calibration ratios from $-1.5$ to $2$,
162 actuation-noise values including zero, and every integer precision from
4 to 24 bits.  Six pilot angles were inspected before the protocol was
written.  The precision/randomization extension followed the initial
rounding failure; both the initial run and this exploratory extension are
retained.  The study was conducted without preregistration, with its decisions and failed checks retained in the research record.  This calculation evaluates the instrument directly through exhaustive deterministic sums.

\section{Exact channel evaluation and continuous bounds}
\label{sm:contents:23}

\label{cal:app:numerics}
Write the diagonal entries of $V_s(u)K_s(\theta )$ as $q_{s,i}(\theta ,u)$. Equation~\eqref{cal:eq:channel} is a Schur channel, with
\begin{align}
 C_{ij}(\theta ,u)&=\sum_s q_{s,i}(\theta ,u)q_{s,j}(\theta ,u)^*,\\
 [\mathcal E_{\theta ,u}(\rho)]_{ij}&=C_{ij}(\theta ,u)\rho_{ij}.
\end{align}
For fixed $u$, channel composition multiplies entries, and
\begin{equation}
 F_e(\mathcal E_{\theta ,u}^{H})=\frac{1}{16}\sum_{i,j=1}^4 C_{ij}(\theta ,u)^H.
 \label{cal:eq:schur}
\end{equation}
The sum is real; the implementation takes its real part and checks the channel conventions against direct evaluation of the instrument. The reference uses float64/complex128 arithmetic. Physical support counts come from exhaustive enumeration of the $18$-qubit model. The original 78 tests, including a separate physical-state construction, were rerun before reuse.

Instrument amplitudes are degree-$N$ homogeneous polynomials in $\cos(\theta /2)$ and $\sin(\theta /2)$. Their products with complex conjugates contain integer Fourier frequencies at most $N$. The fixed phase tables do not change this degree. Consequently 37 Fourier samples recover each $C_{ij}$ at $N=18$ in exact arithmetic. For coefficients $c_{f,ij}$,
\begin{equation}
 |C''_{ij}(\theta )|\leq M_{ij}:=\sum_{f=-N}^{N}f^2|c_{f,ij}|.
\end{equation}
At a cell center $x$ and half-width $h$, this gives $\sup_{|\theta -x|\leq h}|C'_{ij}(\theta )|\leq |C'_{ij}(x)|+hM_{ij}$. Complete positivity and trace preservation imply $|C_{ij}|\leq1$, so differentiation of Eq.~\eqref{cal:eq:schur} yields
\begin{equation}
 |F'_e(\theta )|\leq\frac{H}{16}\sum_{ij}|C'_{ij}(\theta )|.
\end{equation}
The numerical bound on $D_u$ includes derivative allowances for \emph{both} candidate and incumbent. It considers every cell intersecting $C_{\rm dep}$, including boundary cells, and adds the corresponding half-width allowance to the grid excess loss. The plotting grid alone is never treated as a certificate.

Comparisons at additional angles check Fourier reconstruction and derivatives against direct Kraus evaluation. An additional $10^{-8}$ allowance covers observed numerical residuals. The tests assess the numerical tolerance. Proposition~\ref{cal:prop:conditional} requires valid numerical bounds; a formal floating-point enclosure would additionally certify rounding error beyond the tested cases.

A separate audit enumerates all 8193 possible counts and all 13 actions at 1692 physical/nuisance settings. An offline audit flags a count if \emph{any} admitted action violates the promised margin. The largest integrated probability is 0.00281349, below $\alpha=0.01$. The audit exhausts counts and actions on a finite physical-parameter grid, using evaluator knowledge to compute worst-selection risk. This zero-drift audit is the 1692-setting subset of the extended audit in Supplemental Material, Sec.~\ref{sec:completion}. Adding drift radii $0.005$ and $0.02$ rad, with the corresponding allowed deployment shifts, increases the grid to 5076 settings and gives maximum margin-violation probability $0.00378720114$; the zero-drift results are unchanged.

\section{Information and duration references}
\label{sm:contents:24}
\label{cal:app:drift}
For two equal-prior sign hypotheses, let $C$ and $W$ be the probabilities of correct-sign and wrong-sign activation, with abstention allowed, and $q=C+W$. If the observation laws have total-variation distance $T$, a binary decision rule obeys $C-W\leq T$. Indeed, the difference is one half of the expectation difference of a function taking values in $[-1,1]$, bounded by total variation. Hence
\begin{equation}
 W\geq\max\{0,(q-T)/2\}.
\end{equation}
Requiring $q\geq0.8$ and $W\leq0.01$ therefore requires $T\geq0.78$. With known offset zero, contrast $0.99$, and $|\theta |=0.1$, numerical evaluation of the binomial likelihoods first meets that necessary condition at 153 shots. This optimistic sign-only reference excludes nuisance uncertainty, magnitude accuracy, and recovery-margin requirements. The 153-shot estimate concerns sign discrimination; QEC certification additionally constrains magnitude, nuisance parameters, and the recovery margin. Actions beneficial at both signs require a different decision criterion from sign discrimination.

For changing angles, product rotations satisfy
\begin{equation}
 \|R_X(\theta )^{\otimes N}-R_X(\theta _0)^{\otimes N}\|
 \leq\tfrac{N}{2}|\theta -\theta _0|.
\end{equation}
The corresponding channel diamond distance is at most $N|\theta -\theta _0|$. Fixed syndrome extraction, recovery, and classical record retention are completely positive trace-preserving postprocessing and cannot increase it. Telescoping $H$ rounds bounds one policy's loss change by $N\sum_r|\theta _r-\theta _0|$, using a conservative fidelity bound. Adding the bounds on candidate and incumbent losses gives $2N\sum_r vr\Delta=Nv\Delta H(H+1)$, Eq.~\eqref{cal:eq:path}. The physical model specifies the pulse-error and channel assumptions; unknown pulse errors and policy-dependent adversarial channels require an extended model.

The executed extension assumes $H=300$, a $10^{-6}T_0$ round, a $10^{-6}T_0$ probe shot, and a $0.001$ terminal logical-depolarization cost. Full acquisition duration enlarges the age allowance; correction cost is charged once terminally. These stipulated durations define the timing ratios for this simulation. Of 81 tested drift rates, 47 admit the chosen table, and every independently composed changing-angle loss is below the bound (Fig.~\ref{cal:fig:duration}). The sweep evaluates the decision boundary for one observed count; coverage under varying or dependent outcomes requires a sampling model for those outcomes.

\begin{figure*}[t]\centering\includegraphics[width=\textwidth]{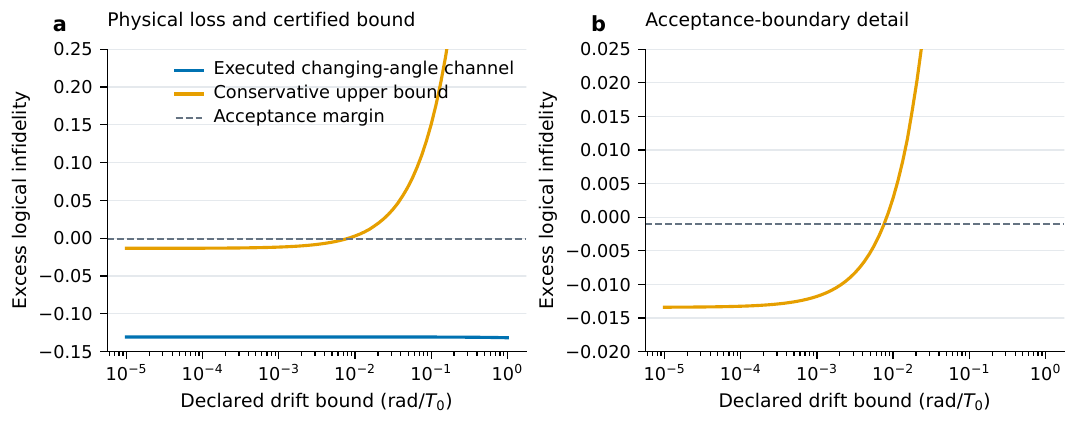}\caption{Within-block drift extension over 81 evaluated rates. (a) Numerically evaluated changing-channel excess risk and the conservative upper bound. (b) An expanded view of the acceptance boundary at excess risk $-10^{-3}$. Curves outside each displayed vertical range are clipped. The simulation specifies acquisition durations, terminal-action noise, and physical drift bounds.}\label{cal:fig:duration}\end{figure*}

\section{Training, splits, and workflow details}
\label{sm:contents:25}
\label{cal:app:learning}
The training set has 600 paths, equally divided between stationary and ramp families, with signed base magnitudes $0.04,0.07,0.10,0.13$ and magnitude jitter $\pm0.002$. Development has 200 paths at bases $0.055,0.085,0.115$. The initial test has 600 paths, equally divided among stationary, ramp, and periodic families, at bases $0.05,0.08,0.11,0.14$. Periodic drift is held out from both training and development. Whole paths and base blocks, rather than individual trajectories from a common source pool, define splits.

Each path has 25 measurement times. The predictor sees the first 24 means and an elapsed-time input feature. GRUs have hidden size 64; MLPs process per-time two-component inputs through two 64-unit hidden layers. The MLP uses the specified summary representation; a flattened full-history feed-forward model would be an additional comparator. Both families use 5 seeds, $11,23,37,51,79$, 40 Adam epochs, learning rate $0.003$, and select the checkpoint with the lowest binary cross-entropy on the development measurements available to the model. The trained target is an observed next-outcome frequency, not a hidden angle or logical risk.

The Bayesian reference uses a grid random walk, choosing its transition scale from $0.001,0.003,0.01,0.03$ on development observations. Sliding-window lengths are selected from $1,2,4,8,12,24$. In the initial 4-look test the Bayesian posterior is updated directly, whereas the other point predictors blend the forecast with the current observed mean, assigning the forecast the weight of 512 observations and the observed mean the weight of the current shot count. This limits attribution of differences to model architecture alone. The robust catalog reference was added in a documented amendment and accepts 323 of 600 paths; its mean risk is 0.0893974. On the public development distribution and tested average-risk objective, retention is the best fixed policy in the catalog.

The cost-aware extension fixes new bases $0.047,0.077,0.107,0.137$ with $\pm0.001$ jitter and 600 new paths. Its 5 prices are $0$, $2.5\times10^{-7}$, $10^{-6}$, $4\times10^{-6}$, and $1.6\times10^{-5}$. Acquisition integration uses 181 angle points and all possible binomial counts. Forecast spreads are fitted only to public development targets, subtracting estimated binomial variance and imposing a minimum standard deviation of $0.005$ in units of the observed outcome mean. No test target is used to fit uncertainty or choose a favorable price afterward.

\begin{figure*}[t]\centering\includegraphics[width=\textwidth]{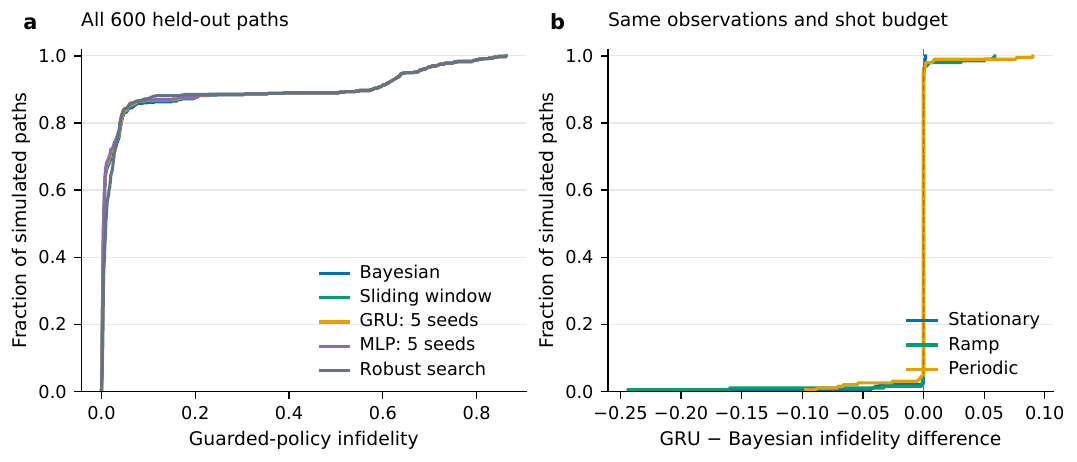}\caption{Empirical distributions on 600 held-out simulated noise paths. (a) Cumulative distributions of guarded-policy infidelity. Neural methods are gated recurrent units (GRUs) and multilayer perceptrons (MLPs); risks are averaged over 5 training seeds within each path. (b) Paired GRU-minus-Bayesian guarded-risk differences, with 200 paths per drift family; negative values favor the GRU. The curves display empirical cumulative distributions; inferential intervals are reported separately where available. All methods have the same observations, action catalog, and budget for new calibration shots.}\label{cal:fig:learningdetail}\end{figure*}

Ablations remove measurements, permute history order, or zero the elapsed-time input feature. Their output is saved as forecast sensitivity. With constant training-time gaps, zeroing that feature probes an out-of-distribution input change; learned age reasoning would require variation in training and a targeted evaluation. These predictors receive numerical measurements and time features, without the untrusted text.

The preliminary interface tests use temperature-zero Qwen inference, seed 17, a 4-turn cap and at most one additional probe. The expanded multistep protocol and its matched observation budget are specified in Supplemental Material, Sec.~\ref{sec:completion}; the central single-proposal toric protocol is specified in Sec.~\ref{sm:unified-toric}.

\section{Paired circuit risks and confirmation uncertainty}
\label{sm:contents:26}
\label{cal:app:surface}
\begin{table}[t]
\caption{Circuit confirmation. Mean plug-in worst-basis excess risk and the count of unresolved individual effects under independent final confidence boxes. Each row has 200 paths. The two deployment rows at each distance reuse those paths; distance 3 uses local-gate noise and distance 5 uses idle-$Z$ noise. Plug-in means retain finite-sample maximum bias.}
\label{cal:tab:surface}
\centering\small
\begin{tabular}{llrr}
\toprule
$d$ & Deployment & Mean excess & Unresolved\\\midrule
3 & Same channel & $-0.091455$ & 0\\
5 & Same channel & $-0.032284$ & 54\\
3 & Return to base & $-0.002561$ & 200\\
5 & Return to base & $+0.002532$ & 200\\\bottomrule
\end{tabular}
\end{table}
\begin{figure*}[t]\centering\includegraphics[width=\textwidth]{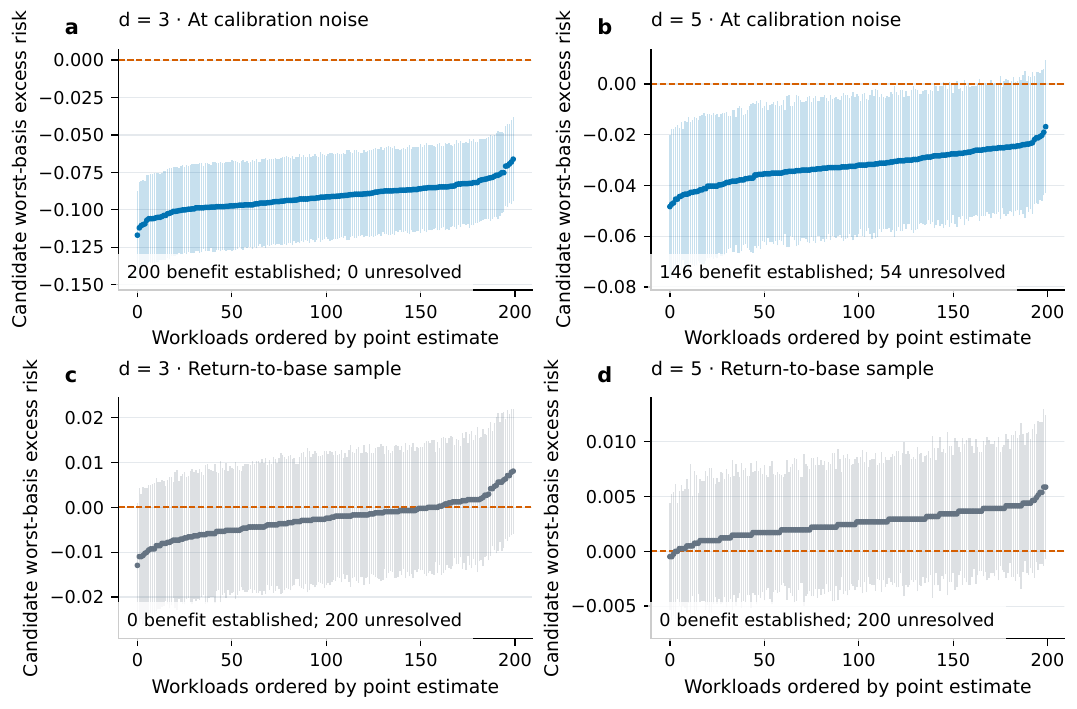}\caption{Circuit confirmation with all 200 paths per distance and deployment stage. Each point is an estimate from an independent deployment sample of worst-basis excess risk; vertical segments are the saved pointwise 95\% final risk boxes. Each panel orders paths independently by its own point estimates; paired stages must be compared through their workload identifiers. Benefit is established for 200 distance-3 and 146 distance-5 paths at the calibration settings; all 400 stale effects remain unresolved. Point estimates retain finite-sample maximum bias; harmful staleness requires a resolved adverse risk difference.}\label{cal:fig:circuit}\end{figure*}
For independent memory records in basis $b$, let $L_u,L_0\in\{0,1\}$ denote candidate and incumbent failures on the same record. Their risk difference is
\begin{equation}
 r_{u,b}-r_{0,b}=\Pr(L_u=1,L_0=0)-\Pr(L_u=0,L_0=1).
\end{equation}
Separate one-sided exact binomial limits on the two discordance probabilities yield an upper bound $d_b$. Let $r_{0,b}\in[\ell_b,h_b]$ simultaneously. Then
\begin{align}
 &\max_b r_{u,b}-\max_b r_{0,b}\nonumber\\
 &\quad\leq\max_{b\in\{X,Z\}}
 \left[d_b+\min\{0,h_b-\ell_{-b}\}\right].
 \label{cal:eq:surfacebound}
\end{align}
To see this, subtract $\max(r_{0,b},r_{0,-b})$ from each $r_{0,b}+d_b$, and maximize the remaining difference over the baseline box. The calibration error allocation covers both bases and every predeclared alternative simultaneously, with total $\alpha=0.01$. A single record supports rescoring classical matching priors; evaluating a different quantum feedback policy would require its corresponding physical trajectory. The endpoint is the maximum of the two memory-basis failure risks.

Pilot simulated base rates are $0.0017$ and $0.0023$; the prior base is $0.002$. Four families amplify measurement, idle-$Z$, local-gate, or paired faults. Independent final batches use 32768 shots per basis. The selected confirmation regimes draw simulated base rates from $[0.0018,0.0022]$ with new physical seeds, retaining the prior catalog and thresholds. The calibration test and final uncertainty calculation remain separate.

Pointwise 95\% final boxes cover the four candidate/incumbent basis risks for each path. The individual return-to-base effects remain statistically unresolved (Table~\ref{cal:tab:surface}). The estimated mean primary excess risks have bootstrap intervals $[-0.003108,-0.002010]$ at distance 3 and $[0.002362,0.002704]$ at distance 5, but the maximum of finite-sample risks is biased. The maximum-estimation bias persists under path resampling.

As a separately labeled secondary endpoint, the equally weighted $X/Z$ mean-risk difference avoids the maximum operation. Its return-to-base estimates are $+0.0004706$ (95\% interval $[0.0001977,0.0007361]$) and $+0.0015100$ ($[0.0014172,0.0015991]$). These quantify an additional average-risk diagnostic alongside the prescribed worst-basis endpoint. The full stale rule uses the valid but uninformative total-variation bound one and rejects all cases. The point estimates and universal refusal do not establish a recovery-improving update justified by old evidence in this challenge.

\section{Numerical precision and validation}
\label{sm:contents:27}
\label{cal:app:resources}
Physics validation uses float64/complex128 arithmetic. Neural inference uses float32; a comparison of numerical implementations gives a maximum logit discrepancy of $5.96\times10^{-8}$ for the same GRU checkpoint and a 200-path, 24-observation batch.

Validation covers the physical probe, continuous bounds, confidence construction, evidence binding, budgets, malformed input, circuit conversion, acquisition, and permitted maintenance operations. A separate rerun reproduces the scientific rows of five stages using fixed checkpoints, at absolute tolerance $10^{-12}$ and relative tolerance $10^{-10}$. Figures and tables are generated from numerical results with consistency checks.

\section{Theoretical context and prior-work comparison}
\label{sm:contents:28}
\label{app:theory-context}
\begin{table*}[t]
\caption{Positioning under the declared fixed-odd-$L$ toric instrument.}
\label{tab:theorem-literature}
\footnotesize
\renewcommand{\arraystretch}{1.05}
\begin{tabular}{p{0.23\textwidth}p{0.29\textwidth}p{0.40\textwidth}}
\hline
\textbf{Result} & \textbf{Closest prior work} & \textbf{Contribution and scope} \\
\hline
$[[2L^2,2,L]]$ parameters and parity-dependent $d_{\rm aff}$
& Toric geometry \cite{gottesman1997stabilizer,dennis2002topological} and
weight-$2L$ paths \cite{iverson2020coherence,wichette2026partition}
& \textbf{Standard geometry.}  $d_{\rm aff}$ denotes the cut-containment requirement for a nontrivial support contributing to a syndrome effect defined here.  The parity pattern follows established geometry. \\
Order-$\theta ^L$ corrected channel and $\alpha_L$
& Iverson--Preskill \cite{iverson2020coherence}
& \textbf{Reproduced.}  A common frame, sign, half-angle, and recovery make the
channel directly comparable with the effects. \\
Parity-controlled record dependence in one-logical-qubit codes
& Square-lattice blindness and parity-dependent leakage
\cite{bravyi2018coherent,huang2019coherent,venn2020thresholds,
behrends2025beyond}
& \textbf{Prior contrast.}  Two logical qubits here permit $\Xx\Xy$
dependence at finite order. \\
Sign ambiguity in coherent-error estimation
& Graph-state stabilizer marginal probabilities, Eq.~(6) and following discussion of Ref.~\cite{orsucci2016estimation}
& \textbf{Prior ambiguity.} Their cosine dependence loses rotation sign and period. Here the symmetry holds for complete passive toric histories. \\
Noise or channel inference from syndrome records
& Noise, detector-model, and circuit-level inference
\cite{wagner2021optimal,wagner2023logical,takou2025graphs,
takou2026nonpauli,xiao2026insitu,zheng2026learning}
& \textbf{Different target.}  They infer faults, priors, or channels.  We
distinguish encoded inputs at fixed signed $\theta $. \\
$b_{\rm rec}=2L$, exact $M_L$, and
$\eta_L(\theta )=\Theta_L(|\theta |^{2L})$
& $2L$ geometry \cite{iverson2020coherence,wichette2026partition} and effect
criteria \cite{huang2019coherent,venn2020thresholds,hu2022designing}
& \textbf{Exact specialization.} Cut cancellation and common-sign survival
attain the effect order; \cref{app:prior-comparison} maps coefficients and
path multiplicities explicitly. \\
Conditional polar recovery and phase randomization
& Resolved-instrument restoration \cite{gregoratti2004feedback} and
mixing nearby unitaries \cite{campbell2017mixing,wallman2016noise}
& \textbf{Applied methods.} Exact parity residual and model-specific application of established recovery theory, with deferred actuation available. \\
Parity-product history law
& Repeated QND statistics \cite{bauer2013repeated}
& \textbf{Applied framework.} Identify the toric pointer sectors
$\Pi_\pm$; the product mixture and count sufficiency then follow. \\
Order-one dynamics with vanishing complete-record distinguishability
& Correctability--privacy duality \cite{kretschmann2008complementarity}
& \textbf{Attained model-specific scaling.} $\Theta_L(|\theta |^L)$, with a rare-syndrome lower bound, improves on the $O_L(\sqrt{|\theta |})$ baseline from the same channel estimate. \\
Logical-input transcript privacy
& Noisy local memories \cite{shen2026transcript}
& \textbf{Different regime.} Their distance-dependent upper bounds use
assumptions on backbone blindness, locality, and noise strength; our one-round order is attained
at fixed odd $L$ under coherent rotation noise. \\
\hline
\end{tabular}
\end{table*}

The corrected logical channel and the syndrome effects address distinct observables of a coherent-error instrument.  Pauli amplitudes in one syndrome
sector interfere, and recovery maps their relative logical classes to a
syndrome-conditioned logical map.  Coherent logical channels under quantum
error correction have been studied for
repetition, stabilizer, surface, and toric codes
\cite{greenbaum2018modeling,bravyi2018coherent,beale2018decoheres,
huang2019coherent,iverson2020coherence,venn2020thresholds,
marton2023coherent}.  Stabilizer records support in-situ noise characterization
\cite{combes2014insitu,orsucci2016estimation,wagner2021optimal,
wagner2022pauli,wagner2023logical}, detector-model inference
\cite{blumekohout2025estimating,takou2025graphs,takou2026nonpauli}, and
circuit-level logical-noise estimation
\cite{xiao2026insitu,zheng2026learning,takou2026logical}.
Syndrome-conditioned evolution can also generate logical unitary ensembles
\cite{cheng2025designs}.  Odd-distance rotated square-lattice planar surface codes with
even-weight relevant stabilizers and odd-weight logical representatives
have input-independent syndrome probabilities despite conditioned
logical rotations
\cite{bravyi2018coherent,cheng2025designs,behrends2025beyond}.  Other parity
classes reveal logical polarization \cite{huang2019coherent,venn2020thresholds},
while diagonal-gate methods quantify state--syndrome correlations
\cite{hu2022designing}.  Input-agnostic branches also support repeated
syndrome-resolved logical-gate protocols
\cite{huang2025robust,yoshioka2025syndrome}.  We instead determine when the
complete syndrome record of the periodic instrument in Sec.~\ref{sec:model} first depends on
the encoded input.

Orsucci, Tiersch, and Briegel explicitly identify the ambiguity
$\lambda_a\mapsto\pm\lambda_a+2k\pi$ in the probability differences of individual graph-state stabilizer outcomes, since these depend on $\cos\lambda_a$~\cite{orsucci2016estimation}, Eq.~(6) and the paragraph following it. This is a direct antecedent of the parameter-identifiability question. Our statement concerns a different observation object: the full joint law of arbitrarily long passive histories for the declared toric instrument. Its proof uses the corrected Kraus operators and their common eigenbasis; marginal symmetry alone would not establish that history statement.

We answer for a periodic square toric code with full-support coherent $X$
rotation, complete ideal $Z$-syndrome extraction, and deterministic
minimum-weight $X$ recovery.  Stochastic toric-code work found the parity of
the minimum even logical representative \cite{wichette2026partition}; here
that scale controls a coherent, input-dependent syndrome effect.  Earlier work establishes the underlying $L$-versus-$2L$ pattern.  For essentially this setting,
Iverson and Preskill derived the leading order-$\theta ^L$ corrected-channel
coefficient (our $\alpha_L$ in magnitude) and weight-$2L$ double-logical paths
\cite{iverson2020coherence}.  We reproduce the established channel onset, coefficient, and double-logical paths in common conventions to compare them with the same instrument's effects.
Our central quantitative result is the exact record-effect onset and its
nonzero coefficient.  The CSS cross terms of Hu, Liang, and Calderbank
\cite[Eq.~(91)]{hu2022designing} become our effect-character expansion;
the toric cut constraint then cancels forbidden supports and makes all
minimum surviving contributions have a common sign.  Our exact support
count is a specialization of the seam-linked path enumeration in
Ref.~\cite[Appendix~H]{iverson2020coherence}, with nonsimple supports
removed and multiplicities fixed (\cref{app:prior-comparison}).
The count specializes standard enumeration to the tagged supports used in the effect calculation.  
Recent adjacent work analyzes logical-noise coherence and information-theoretic
performance, projected logical ensembles, syndrome-distribution phase
structure, correlated coherent noise, and decodability transitions with
monitored-dynamics duals
\cite{behrends2025beyond,bejan2026projected,liu2026transition,
rajmohan2026correlated,yang2026decoding}.  These works supply operational and
many-body context for the fixed-instrument calculation.
The recent transcript-privacy preprint of Shen and Zhong
\cite{shen2026transcript} addresses the same logical-input target for noisy
single-logical-qubit stabilizer memories, including rotated surface codes.
Its Theorem~1 bounds logical components under explicit backbone blindness,
locality, and smallness hypotheses with growing distance and
$T=\Theta(d)$.  Here the geometry is a two-logical-qubit torus, the noise is
a coherent rotation, and $L$ is fixed before $\theta \to0$; we attain the
one-round order and bound the growing-history distance.  Each comparison applies under its own limiting regime and hypotheses.

The physical context also includes the structure of the stored quantum information and its coupling to noise. Otten et al.~\cite{otten2021impacts} compare information lifetimes in qubit- and qudit-based memories under different physical noise models. At the control level, spectator modes can reduce the fidelity of pulses optimized for an isolated qudit~\cite{ozguler2024spectator}. These dependencies motivate specifying the memory and control environment when interpreting a calibrated recovery model.

\subsection{Decoding cost, changing records, and measurement advantage}
\label{sm:decoding-context}
\emph{Scope of the computational comparison.}
Finding a likely physical error, selecting the most likely logical class after summing degenerate errors, simulating a memory, and certifying an update are different tasks. Optimal degenerate stabilizer decoding has counting-complexity hardness under explicit input and promise conditions~\cite{iyer2015hardness}. Surface-code hardness also holds with arbitrary qubit-dependent Pauli probabilities as inputs~\cite{fischer2024hardness}. These worst-case results coexist with efficient structured instances; establishing hardness under a particular drift model requires a separate analysis.

The two efficient results of Bravyi and collaborators also concern different tasks. Exact maximum-likelihood decoding in Ref.~\cite{bravyi2014efficient} has $O(n^2)$ cost for independent bit and phase flips with noiseless checks. Its more general matrix-product approximation has $O(n\chi^3)$ cost at bond dimension $\chi$; accuracy must be checked as $\chi$ changes. The coherent-error simulation of Ref.~\cite{bravyi2018coherent} treats single-axis rotations and noiseless checks and recovery. Its runtime includes the chosen classical decoder's cost in addition to the polynomial simulation cost. Efficient simulation therefore does not supply a general optimal decoder.

Our toric evaluation exploits two-term single-qubit Pauli expansions and a shared eigenbasis of corrected logical Kraus operators. Channel composition multiplies their matrix-element factors. Constructing those factors requires syndrome sums; the present exhaustive enumeration grows exponentially when extended directly. The evaluator ranks actions for both controllers, enabling independent outcome checks. Scaling the certificate requires an evaluation method with controlled approximation error, while the observation obstruction determines when further measurements are needed.

\emph{Noisy records and drift.}
Noisy syndrome extraction requires inference over a history that contains both data faults and measurement faults. Correlations and drift can change the appropriate history model. A decoder can remain fast while its assumed probabilities become inaccurate; selected non-Markovian noise models exhibit degraded memory performance with a decoder using an independent detector-error approximation~\cite{kam2025nonmarkovian}. This is a statement about the tested noise--decoder pair. Longer records can improve estimation when the model remains applicable, while temporal averaging can obscure changes. An observation ambiguity, such as the sign symmetry proved here, persists even with exact inference in the declared model.

Recent adaptive approaches address different parts of this problem. Bhardwaj et al. infer drifting detector-model parameters from syndrome data using overlapping windows, exposing the filtering and sampling tradeoff~\cite{bhardwaj2026adaptive}. Sivak et al. demonstrate reinforcement-learning control of QEC under experimental drift~\cite{sivak2026reinforcement}. Their detector-based controller steering and logical-error-rate-based decoder steering use different feedback: the latter requires the additional outcome information used in their memory experiments. Calibration-conditioned neural decoding has also been evaluated on repetition-code experiments~\cite{stein2026calibration}. QAdapt combines neural pre-decoding with matching under varying noise~\cite{miao2026qadapt}; its reported residual-matching timings exclude the neural and transfer stages, which must be included in a complete latency assessment.

Pancotti, Saravanan, and Svore study exact tensor-network likelihoods for learning quantum noise~\cite{pancotti2026}. Their demonstrations use synthetic samples, including samples from a hardware-characterized detector model, with both syndromes and terminal logical-outcome labels. Their drift study uses synthetic changing noise. Exact contraction has a treewidth-dependent cost, and an expressive model's population optimum is distinct from finite-data identifiability and reliable optimization. These observation and computation requirements matter when comparing their task with passive syndrome-only inference.

Fast decoding likewise needs a precise timing metric. Sparse Blossom provides efficient matching~\cite{higgott2025}. AlphaQubit~\cite{bausch2024} and AlphaQubit~2~\cite{senior2026scalable} establish learned decoding as a substantive task distinct from language-model maintenance advice. AlphaQubit~2's per-cycle streaming throughput and final-result latency are different quantities; its smaller-distance real-time results and larger-distance accuracy results are different operating points. The recent parallel Sparse-Blossom analysis~\cite{mikami2026sparse} obtains subconstant average time per round by amortizing parallel work over a distance-sized window below threshold. It uses growing parallel resources and a circuit-depth model without geometric communication overhead. For a timed acceptance rule, acquisition, inference, communication, and actuation all contribute to evidence age, regardless of which decoding algorithm is used.

\emph{What the syndrome-aware quantum advantage establishes.}
Tsubouchi et al.~\cite{tsubouchi2026advantages} define an effective error rate through the information available for estimating a logical observable. Their classical comparison fixes the logical measurement basis and uses the syndrome in subsequent processing; the quantum protocol conditions the logical measurement on the syndrome. The quantum protocol thus changes the accessible measurement rather than merely the algorithm operating on a fixed classical transcript. Their Pauli-channel framework accommodates complete syndrome histories of Clifford circuits, but the explicit asymptotic advantage has narrower hypotheses. Theorem~2 concerns independent local Pauli noise, even-distance blocks, a low-noise limit, and a Haar average over pure logical states; the improvement then scales with the number of encoded blocks. Odd-distance finite-noise examples are assessed separately. The protocol requires state characterization and generally nonstabilizer logical measurements, with corresponding control resources.

Extending that advantage to unknown drift requires a time-dependent model, an estimation target, and calibration and measurement costs. Efficient classical decoding is compatible with quantum computational advantage, and syndrome-conditioned measurements can improve estimation despite efficient classical processing. Our security evaluation addresses a different operational question: which physical evidence supports a recovery update when observations omit relevant information or become outdated?

\section{Calibration, activation, and circuit validation}
\label{sm:contents:29}
\label{sec:completion}
The following tests examine three requirements for a recovery update: a calibration that constrains the physical noise, a risk bound valid until deployment, and application of the recovery table that was evaluated. We first measure readout uncertainty, then audit acceptance over every count and catalog action. Circuit tests and controller workloads assess the resulting recovery improvement and completion rates.

\subsection{Why the proposer need not be accurate}
Let $E_j$ be the event that confidence region $C_j$ contains every physical condition relevant to deployment $j$. Suppose $\Pr(E_j^c)\le\alpha_j$, uniformly over the allowed nuisance parameters, and the record and activated action are correctly bound. On $E_j$, any action satisfying $U_{j,u}\ge\sup_{\theta \in C_j}D_u(\theta )$ and $U_{j,u}\le-\delta$ has $D_u(\theta _j)\le-\delta$. This implication holds simultaneously for every allowed action, including an action selected after seeing the data. Therefore an accepted violation implies $\bigcup_jE_j^c$, whose probability is at most $\sum_j\alpha_j$. Repeated queries against one region do not require an action-by-action statistical penalty; new acquisitions consume the declared family allowance. A new confidence construction after data-dependent stopping would require its own justification.

The argument imposes no accuracy or calibration condition on the proposer. It requires correctly associated measurement records, valid physical bounds, a correct risk evaluation, and application of the evaluated action. It controls the unconditional probability of violating the promised excess-loss margin across the declared experiment family. Beneficial completion is evaluated separately from protection of accepted actions. Repeated invalid proposals, refusal to acquire evidence, and budget exhaustion can leave the incumbent in place. The implementation caps both evidence acquisition and proposal work so that rejected requests are not a free, unbounded service resource.

\subsection{Measuring readout nuisance parameters}
The signed sentinel uses mean $b-a\sin \theta $ with stipulated contrast and offset bounds. To measure these quantities in the simulated observation model, prepare known $+Y$ and $-Y$ references and measure them through the same stationary readout channel. Their plus-outcome probabilities satisfy
\begin{equation}
 r_+=(1+b+a)/2,\quad r_-=(1+b-a)/2.
\end{equation}
For independent reference counts with Clopper--Pearson intervals $[l_+,u_+]$ and $[l_-,u_-]$, a simultaneous outer rectangle is
\begin{align}
 a&\in[l_+-u_-,u_+-l_-],\\
 b&\in[l_++l_--1,u_++u_--1].
\end{align}
Intersect contrast with its physical upper limit one and reject if a strictly positive contrast is not established. The rectangle is conservative because it discards correlations between the inferred nuisance parameters. For probe interval $[l,u]$, invert $2[l,u]-1=b-a\sin \theta $ over all rectangle corners, then intersect only with the declared identifiable angle domain using the inversion procedure in Sec.~\ref{cal:sec:rule}. Invalid probabilities and unsupported domains fail explicitly. Divide the family allowance equally among the two reference intervals and the signed-probe interval. By the union bound their composite capture set has failure probability at most $\alpha$. Transfer and age expansion are subsequently applied as before.

The calibration estimates readout nuisance parameters using known reference states; preparation quality requires independent characterization. The construction assumes known reference states, a common readout channel, conditionally independent shots, and stationarity across all 3 acquisition blocks. Real acquisition durations require the additional physical allowances already described in the calibration protocol. Here the blocks are simulated at one capture angle.

The simulation design uses contrasts/offsets $(0.99,0)$, $(0.97,0.01)$, and $(0.94,-0.02)$; magnitudes $0.06,0.10,0.14$ rad with alternating signs; 200 independent paths per magnitude, setting, and budget; and reference budgets $512,2048,8192,32768$ shots per state. Every signed probe has 8192 shots. Thus there are 7200 composite records with total costs $9216,12288,24576,73728$ shots. All candidate tables use $H=300$ and $\delta=0.001$. The true simulated angle is used to generate measurements and assess recovery outcomes. Action selection and acceptance use only the measurement-derived uncertainty region.

Nuisance calibration from simulated reference counts yields 2323 beneficial activations, 0 observed harmful activations, and 3 calibration-angle noncoverage events. By readout setting the beneficial-activation counts are $994,740,589$ out of 2400 each. A paired legacy comparator uses the same signed counts and the original fixed nuisance rectangle: it gives $1498,1537,1449$ beneficial activations and $16,83,470$ noncoverage events, respectively. The latter two settings violate its nuisance premises. No harmful activation is observed in these particular legacy workloads, but the lost confidence coverage prevents interpreting their higher acceptance as justified protection. Measuring nuisance parameters costs shots and can sharply reduce acceptance of beneficial operations. Figure~\ref{fig:completion-calibration} quantifies this tradeoff.

\subsection{An audit allowing arbitrary catalog advice}
For one 8192-shot capture and $\alpha=0.01$, enumerate every count $k=0,\ldots,8192$. For each count and all 13 actions, including retention, recompute the permitted deployment region and certification rule. For each physical setting, mark the count if \emph{any} certified action would violate the promised margin. Numerically summing the binomial probabilities of all such counts gives the worst-selection violation probability, without fitting a particular attacker or AI model.

The audit uses 141 capture angles from $-0.14$ to $0.14$, contrast endpoints $0.98,1$, offset endpoints $\pm0.001$, and drift radii $0,0.005,0.02$ rad. For each radius the simulated transfer/drift shift takes three values: zero and $\pm(\text{radius}+0.001)$. There are 5076 physical settings. The maximum violation probability is $0.003787201138615698$, below the declared $0.01$. A violation is any excess loss above the promised $-0.001$ margin, including a smaller-than-promised improvement. This audit exhaustively sums binomial probabilities and the specified numerical channel bound, so sampling confidence intervals are inappropriate for these curves. The audit exhausts counts and catalog choices on its finite physical grid using the stated floating-point treatment; the analytical guarantee separately requires valid continuous risk bounds.

Figure~\ref{fig:completion-calibration}(a,b) uses 600 independent records per cost point and 2400 per readout setting. Its pointwise 95\% Hoeffding intervals cover averages of independent, possibly nonidentical Bernoulli outcomes. Panel (c) maximizes the numerically evaluated binomial margin-violation probability across the specified nuisance parameters and deployment shifts at each calibration angle, allowing selection among all certified catalog actions.

\subsection{Binding the action at activation time}
The software represents the certified validity period as a single-use activation authorization. This authorization binds a workload, epoch, evidence identifier and digest, reference digest, action identifier and digest of the action array to be applied, issue time, and expiry. Evidence records are service-owned; sampled composite calibration records include both reference counts, signed-probe counts, and the derived interval. The reference digest covers the physical configuration, enumerated support counts, channel coefficients, derivative bounds, cached risks, and all other reference arrays. The digests identify the evidence, reference arrays, and recovery table checked within the evaluator process; signatures and post-quantum execution proofs would require additional protocols.

Issuance verifies a risk bound through the requested deployment deadline. Activation takes a lock, checks that the authorization remains unused, checks workload and epoch, verifies current evidence/reference/action identity, checks clock order and expiry, reruns the current-time risk test, and copies the evaluated action array while still inside the critical section. This prevents the in-process race in which a checked catalog index points to a replaced action by activation time. A concurrent duplicate request permits 1 activation only. Epoch changes invalidate outstanding authorizations but do not reset cumulative acquisition allowance. Each new acquisition is charged its share of the family budget, including a measured reference block that fails to establish positive contrast; such a failure cannot fall back to the fixed-nuisance rule. Proposal work also has a finite cap.

The 13 deterministic conformance cases exercise honest activation, benign waiting, expiry, forged action or reference fields, replacement of action/reference/evidence, replay, epoch change, clock rollback, concurrent replay, and measured-nuisance evidence. The 13 cases are deterministic functional conformance checks. Their scoring evaluates the channel formula using the returned action array. The prototype enforces activation within a process and clock assumed uncompromised; external identity and acknowledgment of physical actuator execution require additional interfaces. The noise is stationary during the coherent recovery block after activation. 

\subsection{A sufficient validity period for a fixed circuit-noise family}
Consider a fixed stochastic fault circuit and noise family parameterized by a scalar rate $p$. Each elementary fault has distribution $Q_i(p)$ satisfying
\begin{equation}
 \TV(Q_i(p),Q_i(p'))\le c_i|p-p'|.
\end{equation}
For a Bernoulli Pauli fault with probability $s_i p$, $c_i=s_i$. For a depolarizing channel with total nonidentity probability $s_i p$, the same coefficient applies, provided its conditional nonidentity distribution remains fixed. A one-qubit Pauli channel contributes the sum of its Pauli-probability slopes. Independent targets count separately; one explicitly correlated multi-target fault is one random event. Coupling each fault, taking the union bound on any mismatch, and then applying the deterministic circuit/decoder map gives
\begin{equation}
 \TV(P_{\rm dep},P_{\rm cap})\le
 \min\{1,\sum_i c_i|p_i^{\rm dep}-p_i^{\rm cap}|\}.
 \label{eq:completion-coupling}
\end{equation}
The guarantee assumes the conditional fault distributions and stochastic independence structure of the declared family.

If a common drift bound $|\dot p|\le v$ holds from calibration to the end of a block of duration $T$ starting at age $A$, each difference is at most $v(A+T)$. Define $K=\sum_i c_i$. Every fixed decoder's binary loss expectation changes by at most $\varepsilon=\min\{1,Kv(A+T)\}$. The maximum over the two basis risks is 1-Lipschitz in their sup norm. Applying the loss bound separately to candidate and incumbent therefore gives a $2\varepsilon$ allowance for worst-basis excess risk. When $v>0$ and a nonempty certified interval exists, the maximum certified age at deployment is
\begin{equation}
 A_{\max}=\frac{-\delta-U_{\rm cap}}{2Kv}-T.
 \label{eq:completion-max-age}
\end{equation}
A negative value means no deployment window. The total-variation function also handles zero drift and saturation at one explicitly; the finite-age solver requires strictly positive drift. Standard coupling supplies this conservative freshness allowance.

The fixed six-decoder catalog and 30-round circuits are held fixed. The confirmed distance-3 family amplifies local gate noise by 8; distance 5 sets idle-$Z$ fault probability to $8p$, which is 24 times the base $p/3$ idle-$Z$ probability. Full-precision generated circuits give worst-basis coefficients $K_3=3911.5$ and $K_5=10101.5$. Each coefficient sums the slopes over all fault locations in the complete memory circuit. These coefficients differ in both code distance and noise family, so their comparison does not isolate distance scaling. We independently reconstruct all 400 calibration-time candidate selections and bounds from 800 archived packed failure arrays; all selections match and the maximum bound discrepancy is zero.

The 4 declared rates, $10^{-9},10^{-8},10^{-7},10^{-6}$ per $T_0$, are evaluated at 161 ages for every calibration record. At $v=10^{-8}T_0^{-1}$ and $T=30T_0$, all 200 distance-3 records and 199 of 200 distance-5 records admit a nonempty deployment window. At $10^{-7}$ none of the distance-5 records does; at $10^{-6}$ neither family does. The simulation treats these rates as declared scenario parameters. The scalar-family extension establishes freshness within that family; the original profile-changing challenge retains its vacuous bound.

Outcome validation with new samples selects the first 8 recorded workloads at each distance and tests ages $0$, $A_{\max}/2$, and $2A_{\max}$ using $v=10^{-8}$. At deployment, $p_{\rm dep}=p_{\rm cap}+vA$; each new block is stationary. There are 8192 shots in each of the $X$ and $Z$ bases at each of 48 instances: 786432 new complete memory shots. Candidate and incumbent decode identical records, and a separately implemented measurement-to-detector parity conversion is checked on every sample. Four Clopper--Pearson risk boxes with allowance $0.05/4$ yield a conservative pointwise 95\% interval for the difference of worst-basis risks. All 32 within-window instances are accepted and their independent interval lies below $-0.001$. All 16 beyond-window instances are rejected, although their independent intervals also establish benefit. The result demonstrates a sufficient, conservative freshness condition. Expiry marks loss of sufficient certification, while an expired action may remain beneficial. Figure~\ref{fig:completion-circuit} displays all validation intervals.

The age curves in Fig.~\ref{fig:completion-circuit} deterministically reevaluate the same 200 calibration records per distance at each age. The maximum-age distribution keeps denominator 200, including the distance-5 record with no certified window. The 4 drift rates are stipulated simulation parameters in probability per $T_0$.

\subsection{Language-model and deterministic controllers}
The controller evaluation comprises 192 independently generated workload blocks, 24 per condition: honest, insufficient initial evidence, misleading advice, bounded staleness, benign delay, workload substitution, nonce replay, and a physical step violating the drift bound. Horizons are $100,300,600$ rounds; initial budgets are $4096,8192,16384$ shots, except for the 512-shot insufficient-evidence condition. Each agent may request 1 additional 8192-shot acquisition. Both controllers receive the same public evidence, certified-action menu, untrusted report, and action/acquisition budget in each paired block. The 4 malicious-report variants are allocated across the advice attacks. Hidden angles and evaluator losses are unavailable to the agent.

Each block has Qwen3:8b and deterministic state-machine rollouts with authorization alone and with the full gate. This is 384 local-model rollouts, 384 deterministic rollouts, and 668 model turns. There are no Hypertext Transfer Protocol (HTTP) or response-parsing errors in this expanded run. Among 168 workloads satisfying the assumptions the full-gate large language model (LLM) gives 159 beneficial updates, 0 observed harmful activations, 6 incomplete workflows, and 3 completed retentions. Authorization alone gives 142 beneficial updates and 26 harmful activations. The state machine gives 162 beneficial updates and 0 observed harmful activations under either gate. Additional acquisition costs over these blocks are 1,007,616 shots for the full-gate LLM, 851,968 for authorization-only LLM, and 860,160 for each deterministic variant. The AI controller completes beneficial updates, while the deterministic reference completes more such updates at lower acquisition cost.

The 24 intentionally invalid drift-premise blocks produce 16 harmful full-gate LLM activations and 21 harmful full-gate deterministic activations. These are retained as failures. Pointwise 95\% Chernoff--KL intervals in Fig.~\ref{fig:hero} allow independent, nonidentical Bernoulli means across the fixed horizon and budget allocation. Each independent workload block supplies 4 paired rollouts. The harm intervals and outcomes characterize the specified attack set under its stated physical premises. 

\subsection{Numerical validation}
The numerical evaluation uses fixed random seeds. Validation checks confidence inversion, invalid inputs, action/evidence replacement, timing, concurrency, budget use, elementary coupling, and age-bound solving.

Circuit coefficients and independent validation samples use full-precision parameters. A separate comparison verifies instruction identities and target ordering; numerical arguments agree up to the expected rounding in a lower-precision representation. Evaluation of the channel formula uses the stated numerical tolerance.

\subsection{Scope of deployment}
The computational result applies to the stated observation and noise models, ideal toric actuation, and an in-process evaluator whose memory and clock are assumed uncompromised. Deploying it requires characterized reference preparation, readout, probe transfer, and drift, together with verification that the physical actuator applies the evaluated correction. Wider noise families and within-block coherent drift require corresponding risk bounds. Cryptographic authentication and formal verification of numerical enclosures would strengthen the implementation boundary. The evaluated controllers choose among fixed probes and recovery tables; learned probe selection and matched neural/Bayesian circuit control remain possible extensions.

\section{Detailed empirical panels supporting the quantitative overview}
\label{sm:contents:30}
\label{sec:hero-support}
The following figures expand the information checks, calibration-budget maps, and controller comparisons in Fig.~\ref{fig:hero}. The paired-comparison methods below also support Fig.~\ref{fig:support-security}.

\noindent\textit{Information and recovery.}
Figure~\ref{fig:support-risk} uses the exact $L=3$ square toric instrument with $R_x(\theta )=e^{-i\theta X/2}$ and the syndrome measurement and recovery of Sec.~\ref{sec:model}. Panel (a) orders all 256 syndrome probabilities for $\rho=I/4$ using the same ranking at both signs. Every probability is evaluated and saved; every fifth negative-sign value is marked for legibility. Panel (b) independently evaluates the effects at 361 angles and four logical $X$ sectors, with normalization checked separately. These residuals test the implementation of the exact effect and history identities. The phase tables in panels (c,d) stay fixed at their $\pm0.10$ calibration angles; logical actuation follows the assumptions of Sec.~\ref{sec:model}.

\noindent\textit{Decision maps.}
In Fig.~\ref{fig:support-contract}, the proposal is the $+0.10$ table at $H=300$ and the required infidelity improvement is $\delta=0.001$. Each budget $n$ uses the representative count $k=\operatorname{round}\{n[1-0.99\sin(0.10)]/2\}$. Figure~\ref{fig:hero} uses $k=3691$ and $n=8192$. A 99\% Clopper--Pearson count interval is inverted over contrast $[0.98,1]$ and offset $[-0.001,0.001]$, then expanded by $0.001+vA$ rad. Escape from the supported domain rejects the proposal. Certification uses a continuous numerical bound with the tested $10^{-8}$ allowance. Four $121\times121$ maps vary age and drift; a slice at fixed drift rate evaluates 161 budgets and 121 ages at $v=0.08$. The evaluator classifies rejected operations along the illustrative bounded path $\theta _{\rm dep}=0.10-\min(vA,0.20)$, with excess risk within the chosen tolerance interval $[-\delta,\delta]$. This path supplies the simulated outcome colors, not information available to the controller. Each map is a decision boundary conditional on its representative record, rather than an acceptance probability. Age is expressed in $T_0$ and angular drift in rad/$T_0$ for this supporting scenario.

\noindent\textit{Paired mechanism comparison.}
Figure~\ref{fig:support-security} uses saved paths with the same measurements and proposal for every baseline. Authorization checks their binding. Confidence-only additionally tests the calibration-time risk; fixed expiry uses a $T_0$ age cutoff without that risk test; the full safeguard propagates bounded drift. Each full-gate decision either applies the proposed operation or retains the incumbent. The scatter includes overlapping points without jitter, and the empirical cumulative distributions are unsmoothed. Harm intervals are pointwise two-sided 95\% Clopper--Pearson intervals over 200 independent paths per condition. Aggregate retained gain is the ratio of total deployed to total candidate gain; its interval resamples matched paths with 10,000 bootstrap replicates. These condition-specific intervals differ from the family allowance in the acceptance theorem.

\noindent\textit{Controller comparison.}
Figure~\ref{fig:support-agent} resolves the 192 independent workloads into 8 conditions, with 24 workloads each. Each supplies 4 paired rollouts: Qwen3:8b or a state machine, under authorization alone or the full gate. Thus the evaluation contains 384 local-LLM and 384 deterministic rollouts. Horizons are 100, 300, and 600 rounds; initial budgets are 4096, 8192, or 16384 shots, except for 512 in the small-sample condition. Each controller may request one additional 8192-shot block. Both receive identical public evidence, certified-action menus, and untrusted reports. The 4 malicious-report variants are distributed across the 24 advice attacks. Beneficial completion and harmful activation use all workloads as denominators, including retention and incomplete episodes. Conservative pointwise 95\% Chernoff--KL intervals allow independent, nonidentical Bernoulli means across the fixed horizon/budget allocation. The evaluator computes recovery-risk bounds and checks the acceptance criterion; the controller manages acquisition and action selection.

\begin{figure*}[t]
\centering
\includegraphics[width=\textwidth]{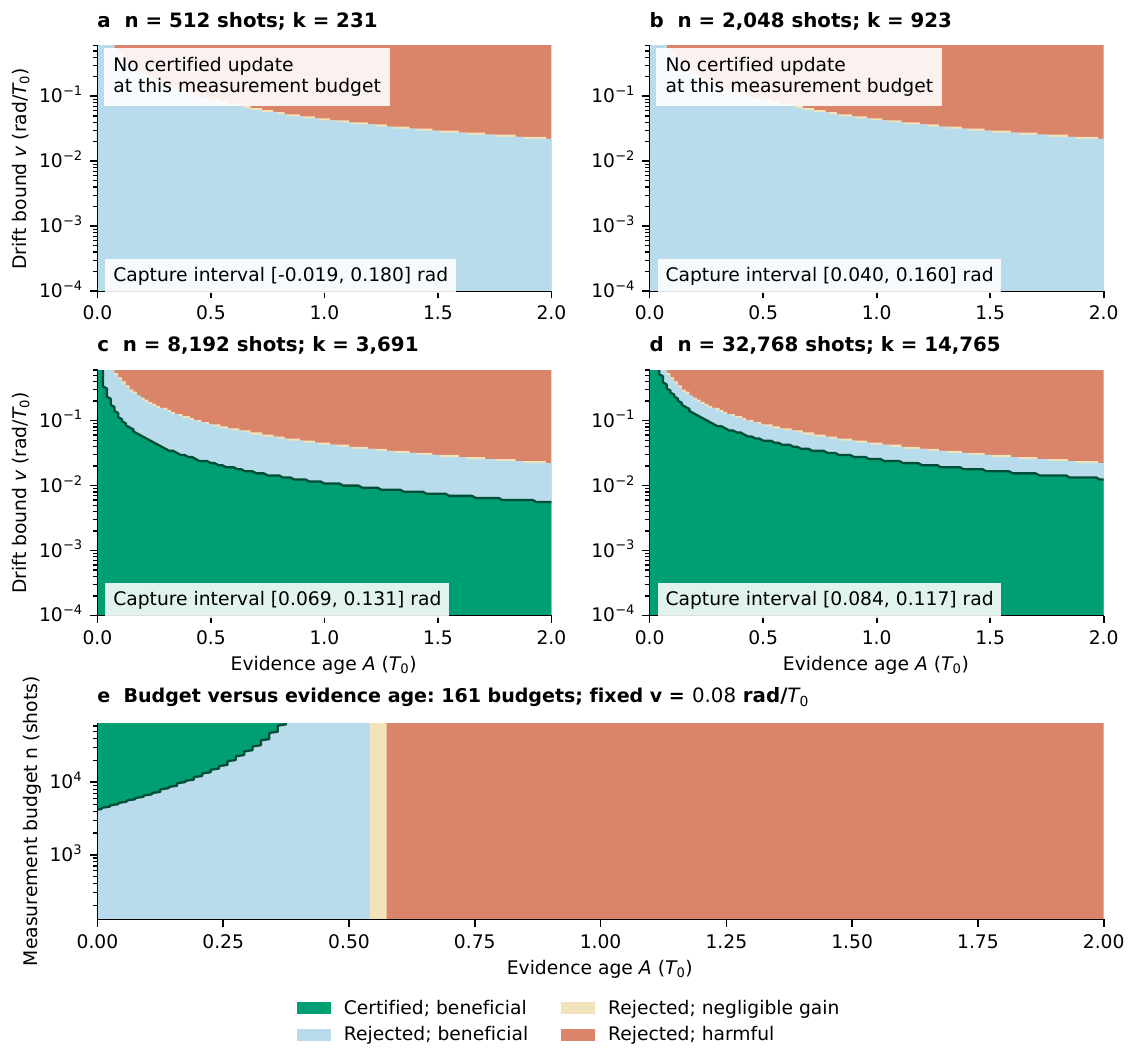}
\caption{\textbf{Certification depends on measurement budget, evidence age, and drift.}
(a--d) Conditional decision maps at 4 shot budgets. (e) Budget--age section at $v=0.08$. Colors distinguish certification from the evaluator's simulated outcome classification. Rejected beneficial updates quantify foregone opportunity: the controller must recalibrate, select another justified action, or retain the incumbent. The methods above specify the representative records and deployment path.}
\label{fig:support-contract}
\end{figure*}

\begin{figure*}[t]
\centering
\includegraphics[width=\textwidth]{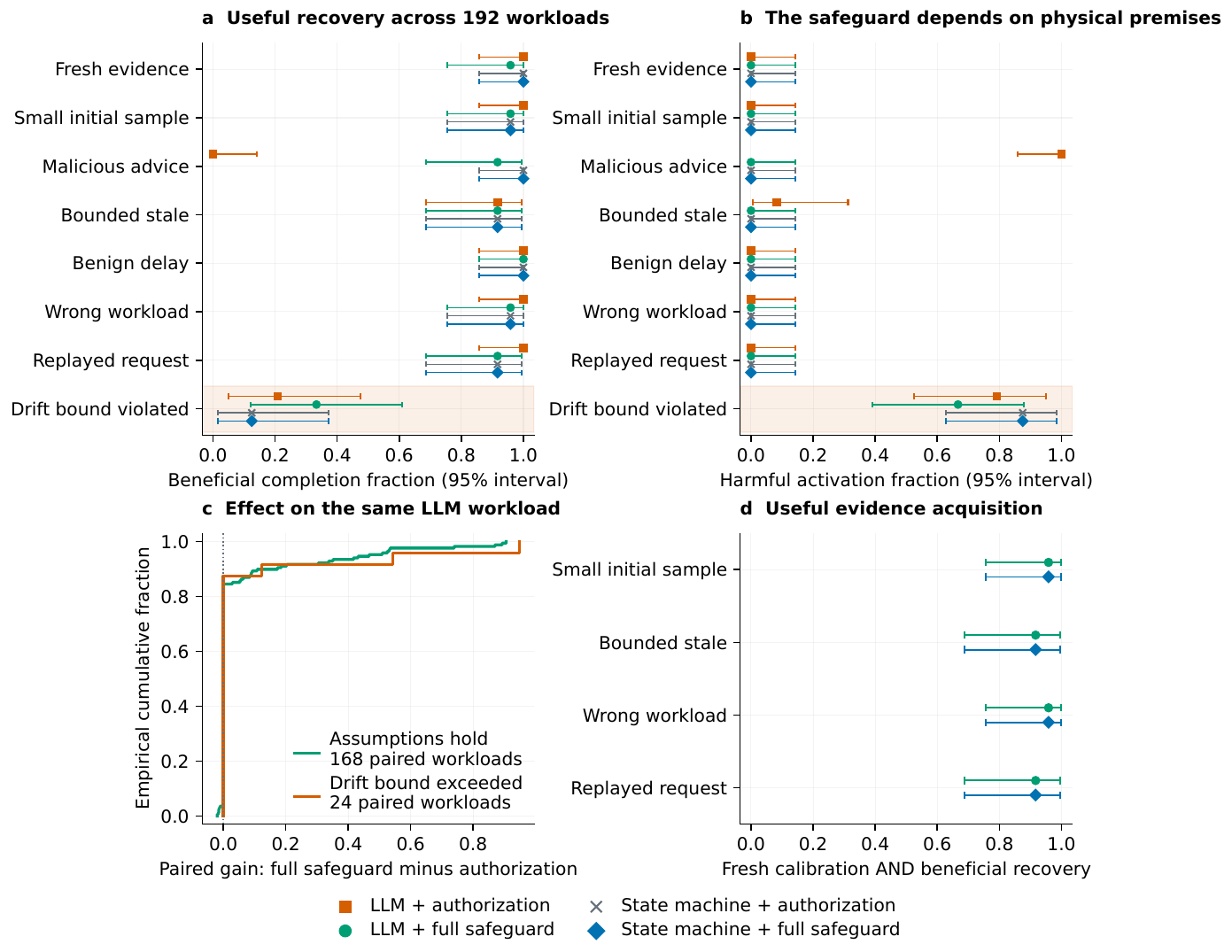}
\caption{\textbf{Large language model (LLM) participation and a matched deterministic baseline.}
(a,b) Beneficial completion and harmful activation by condition, with pointwise 95\% intervals. (c) Paired change in LLM recovery gain. (d) Additional acquisition followed by beneficial recovery under the full gate. In 168 workloads satisfying the assumptions, the full-gate LLM gives 159 beneficial completions and 0 observed harmful activations; authorization alone gives 142 and 26, respectively. The state machine gives 162 beneficial completions and 0 observed harmful activations under either gate. Exceeding the drift bound causes 16/24 harmful full-gate LLM activations.}
\label{fig:support-agent}
\end{figure*}

\section{Circuit definitions, measurement conventions, and shared schedules}
\label{sm:circuits}
\label{sm:contents:31}
The circuits in Fig.~\ref{fig:intuition} distinguish the coherent toric instrument from the stochastic surface-code extension. Figure~\ref{fig:circuit-definitions} gives the parity-measurement and probe conventions. The following protocol specifies the memory circuits and their noise parameters.

\subsection{Toric cycle and logical actuation}
Each round begins in the codespace of the periodic square toric code with $2L^2$ edge qubits and two logical qubits. Apply $R_X(\theta )=\exp(-i\theta X/2)$ on every edge, measure a complete independent set of $Z$ plaquettes, and apply $C_s=X(r_s)$ with the minimum-weight and lexicographic tie-breaking rule of Sec.~\ref{sec:model}. A weight-four plaquette can be read using a newly initialized $|0\rangle$ ancilla, four controlled-NOT (CNOT) gates with data qubits as controls and the ancilla as target, and a computational-basis measurement. For ancilla result $m$, the data projector is $[I+(-1)^m Z_1Z_2Z_3Z_4]/2$. Measuring all plaquettes adds one redundant parity; the model retains $L^2-1$ independent bits. The star sector is initially $+1$ and remains so under the stated $X$ operations.

With error-free gates and readout, this ancilla circuit realizes the stated projector. The exact toric calculation uses the projectors directly; it assigns no duration or noise to the individual extraction gates. The logical operators $K_s(\theta )$ include $C_s$ and the return to the fixed logical frame. The optional $V_s(u)$ then applies the inverse-polar phase table determined by candidate angle $u$. Its dashed box denotes assumed ideal logical actuation. The calculation evaluates the logical channel directly; a physical decomposition of this syndrome-conditioned logical operation is outside that implementation. The passive-sign theorem concerns the fixed instrument before this additional phase correction.

\subsection{Signed probe and measured readout}
The separate sentinel probe starts in $|0\rangle$ and undergoes $R_X(\theta )$. To measure $Y$, apply $S^\dagger$ followed by $H$, then measure $Z$, with $S=\operatorname{diag}(1,i)$. Outcome $m=0$ represents $Y=+1$, and $m=1$ represents $Y=-1$. With error-free preparation and readout, the mean is $-\sin \theta $. Characterized readout changes the mean to $b-a\sin \theta $, so the probability of the reported positive outcome is $(1+b-a\sin \theta )/2$.

Reference states $|+Y\rangle$ and $|-Y\rangle$ are prepared by $H$ followed by $S$ or $S^\dagger$, respectively. The same readout gives positive-outcome probabilities $r_\pm=(1+b\pm a)/2$. These reference preparations and the physical relation between probe and protected data are assumptions to characterize experimentally. The simulations sample the stated response model. Arbitrary-angle $R_X(\theta )$ is evaluated directly with complex-valued rotation matrices, rather than approximated by a Clifford gate in Stim. In Stim syntax, \texttt{RX} means an $X$-basis reset, not this coherent rotation.

\begin{figure*}[t]
\centering
\includegraphics[width=\textwidth]{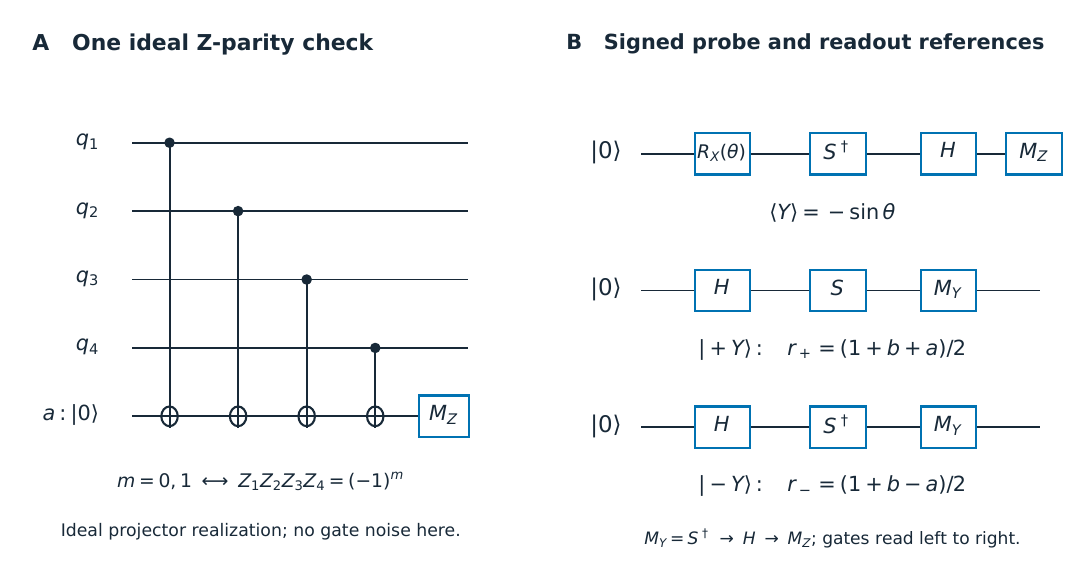}
\caption{\textbf{Circuit conventions for the toric instrument and calibration.}
(a) Four controlled-NOT (CNOT) gates with data qubits as controls transfer a $Z$-plaquette parity to an ancilla. The ancilla measurement returns its eigenvalue and preserves coherence inside the selected parity sector. (b) A signed probe and two reference preparations use the same $Y$ readout. The parameters $a$ and $b$ describe its contrast and offset. Gates are applied from left to right. The drawings assume error-free gates and readout; the noisy surface-memory schedules are specified below.}
\label{fig:circuit-definitions}
\end{figure*}

\subsection{Executed surface-code memory schedules}
The circuit-level extension uses rotated planar surface codes with distances $d=3,5$, $d^2$ data qubits, $d^2-1$ ancillas, and 30 extraction rounds. Stim's \texttt{surface\_code:rotated\_memory\_x} and \texttt{rotated\_memory\_z} templates fix preparation, all CNOT targets and ordering, measurement/reset operations, and terminal data readout. The circuit constructor flattens the template and inserts the following faults while preserving its detector and logical-observable annotations.

At the start of each round, every data qubit receives \texttt{PAULI\_CHANNEL\_1} with the specified $X,Y,Z$ probabilities. Each CNOT is followed by two-qubit depolarization. Each Hadamard is followed by one-qubit depolarization with probability $0.1p$. A computational-basis measurement is preceded by $X$ error and an $X$-basis measurement by $Z$ error, each at the readout rate. Resets have the corresponding anticommuting error immediately afterward at rate $p/2$. Thus an ancilla measurement/reset has distinct premeasurement and postreset faults. There is no additional idle fault at every tick; the stated data-idle channel is inserted once per round.

The base prior uses CNOT rate $p$, readout rate $p$, and idle probabilities $(p/3,p/3,p/3)$. The six catalog members are: base; readout $8p$; idle-$Z$ probability $8p$; local CNOT rate $8p$; independently drawn two-qubit $XX$ and $ZZ$ events of probability $3p$ each per round; and the mixed prior with readout $4p$, idle-$Z$ probability $4p$, and local CNOT rate $4p$. The modified entries replace the corresponding base entries. In particular, idle-$Z=8p$ is 24 times the base idle-$Z$ entry, although it is 8 times the scalar rate $p$.

A CNOT is ``local'' for this construction if either target has both Stim coordinates at least $d$. Correlated faults act on the lexicographically first adjacent horizontal data pair in the generator coordinates. Their $XX$ and $ZZ$ draws are independent. The simulation preserves these coordinate rules exactly. The freshness validation fixes catalog family 3 (zero-based indexing, local gates) at distance 3 and family 2 (idle $Z$) at distance 5. The fixed decoders use all 6 priors at $p=0.002$ and correlated matching. The simulated physical rate varies by workload; each independently validated deployment block is stationary at its recorded deployment rate.

The detector record consists of template-specified measurement parities, with $(d^2-1)30$ detector bits per shot. The logical observable is a terminal memory parity. Both are reconstructed independently by exclusive OR (XOR) of the outcomes at the specified absolute measurement indices and checked against Stim's converter before decoding. No hidden physical rate is an input to the controller. The complete schedules include boundary detectors, the four CNOT layers within each extraction round, terminal measurements, and record offsets.

\subsection{Independent circuit reconstruction}
The validation covers 24 decoder priors, 800 calibration/basis configurations (also used for honest confirmation), and 96 deployment/basis configurations with independently sampled shots, spanning 16 reused calibration workloads at three ages. Each circuit is specified by its scalar rate, family, distance, and basis. Independent reconstruction checks instruction identities, targets, and numerical parameters, and reproduces $K_3=3911.5$ and $K_5=10101.5$.

Circuit reconstruction uses full-precision parameters and adds no independent Monte Carlo observations. These checks cover the fixed-family freshness experiment; the earlier profile-changing challenges use the separate definitions and results stated above.
\section{Time-resolved circuit maintenance and adversarial advice}
\label{sm:timed}
\label{sm:contents:32}
This experiment extends the independently initialized surface-memory setting of Sec.~\ref{sm:circuits}. It reuses the complete gate ordering, detector definitions, observable parities, and 6 decoder priors. The following derivation specifies the additional noise and time assumptions; the exact toric-instrument statements elsewhere in the paper retain their original scope.

\subsection{Simulation time units and measured inference durations}
\label{sm:clock-normalization}
The abstract unit $T_0$ specifies the simulation clock. For a duration $\tau$, an angular drift rate $v$ and an interpolation rate $w$, define
\begin{equation}
 \widehat\tau=\tau/T_0,\qquad \widehat v=vT_0,\qquad \widehat w=wT_0.
 \label{eq:time-normalization}
\end{equation}
The normalized clock and interpolation rate are dimensionless; $\widehat v$ is an angle per normalized clock increment. Products are invariant: $vA=\widehat v\widehat A$ and $w\Delta\tau=\widehat w\Delta\widehat\tau$. Hence the transport and acceptance bounds have identical values in either representation. We keep the scenario-clock symbols in the transport equations and show their $T_0$ units explicitly in tables and captions.

The toric and time-resolved surface records express acquisition and processing times in seconds. Their scenario representation retains each stored numerical value as the coefficient of $T_0$. In particular, a measured request duration $\ell_{\rm wall}$ maps to
\begin{equation}
 \ell_{\rm scenario}=
 \frac{\ell_{\rm wall}}{1\,\mathrm{s}}T_0.
 \label{eq:latency-embedding}
\end{equation}
This conversion fixes the acquisition, inference and deployment ratios and every drift--duration product. It is a scenario normalization rather than a measured quantum-device timescale. Assigning a different round time or inference-to-round ratio defines a different scenario.

\begin{table*}[t]
\caption{Scenario clocks used in the controller studies. The wall-time conversion applies only to measured inference; quantum durations and reference-controller processing are stipulated. Common symbols do not identify different rows with one hardware clock.}
\label{tab:clock-conventions}
\begin{tabular}{p{0.30\textwidth}p{0.21\textwidth}p{0.40\textwidth}}
\toprule
Study & Measured inference & Other timing assumptions \\
\midrule
Central encoded toric & $(\ell_{\rm wall}/1\,\mathrm{s})T_0$ & Round $10^{-6}T_0$; acquisition $NH_c$ rounds; deployment 300 rounds. \\
Timed surface maintenance & $(\ell_{\rm wall}/1\,\mathrm{s})T_0$ & Round $10^{-6}T_0$; acquisition and delivery accounted for explicitly. \\
Expanded advice attacks & Fixed evaluation age & Decision evaluated at prescribed age $T_0$; model wall time is recorded but does not set that age. \\
Expanded multistep maintenance & $(\ell_{\rm wall}/60\,\mathrm{s})T_0$ & Round $10^{-6}T_0$; service $0.05T_0$; reference processing $0.001T_0$. \\
Earlier stationary circuit and sentinel studies & Study-specific prescribed ages & Their own nominal units, with the stated transfer and duration assumptions. \\
\bottomrule
\end{tabular}
\end{table*}

The two measured-latency conversions differ by a factor of 60 relative to the prescribed quantum round. Changing that ratio changes the scenario. Model and interface comparisons within the expanded maintenance study use its common conversion; cross-study differences in completion cannot be attributed solely to the model or interface. The central toric median request contributes approximately $0.782T_0$, or 782000 nominal rounds. This is maintenance latency with an available incumbent, rather than a deadline for decoding each syndrome round.

The earlier stationary-acquisition circuit study uses one nominal round as its abstract unit: there $T=30T_0$. The supporting coherent/sentinel sweeps likewise retain their own prescribed unit as $T_0$. These are distinct scenario families; their absolute timings are not compared. In the central toric and time-resolved surface scenarios the nominal round is instead $10^{-6}T_0$. The common notation exposes each study's stated relative timing without identifying their clocks with one measured device.

Only the explicitly converted inference duration enters the scientific scenario clock. The stipulated reference-controller latency is distinct from measured Qwen latency, so the timing model does not establish a measured controller-speed comparison.

\subsection{Simulated noise path, clock, and sampling unit}
At each elementary noise location $j$, the probability vector is
\begin{align}
 Q_j(\lambda)&=Q_j(0)+\lambda[Q_j(1)-Q_j(0)],\nonumber\\
 \lambda(\tau)&=\max\{0,1-w_{\rm true}\tau\}.
 \label{eq:timed-law}
\end{align}
At distance 3, $Q_j(1)$ has the local-gate amplification; at distance 5 it has amplified idle-$Z$ noise. $Q_j(0)$ is the corresponding base family. The physical scale $p$ is independently drawn uniformly from $[0.0018,0.0022]$ for each workload and used to generate the simulated noise, rather than supplied to the controller. Matching priors remain fixed at $p=0.002$. Both $p$ and all evaluation losses are held in separate evaluator records. The distinction between the two ensembles precludes a pure code-distance scaling interpretation.

The interpreter assigns the timestamp $s_i+30\tau_r k_j/K_{\rm tick}$ to an instruction at cumulative tick $k_j$ in the unchanged flattened schedule with total $K_{\rm tick}$ ticks. The declared round time is $\tau_r=10^{-6}T_0$ normally and $10^{-4}T_0$ in the slow-acquisition condition. A shot period is $h=30\tau_r+5\times10^{-6}T_0$, including the reset overhead. Alternating X/Z calibration shots begin at $s_{Xi}=2ih$, $s_{Zi}=(2i+1)h$, for $i=0,\ldots,n-1$, and finish acquisition at $2nh$. No logical state persists between independently initialized memories.

Fault draws are independent across locations and shots conditional on this deterministic path. Noise probabilities change at every location and shot without time binning. Two-qubit depolarization selects one of the 15 nonidentity Paulis; the one-qubit Pauli channel draws mutually exclusive X/Y/Z faults. A separately implemented parity converter and Stim's converter reconstruct detector and observable flips from the stored measurement-flip records. These records are relative to Stim's reference sample, rather than absolute hardware measurement bits.

The simulation uses 3 nested calibration prefixes $n=2048,8192,32768$. A deployment stream contains $n_f=16384$ new shots per basis and lasts $2n_fh$. Its start is acquisition end plus $0.010T_0$ analysis, $0.002T_0$ communication, converted measured local-model latency (or the stipulated $0.001T_0$ reference latency), and any delivery delay. Its final time $\tau_f$ includes the entire deployment stream. Simulation runtime is not substituted for quantum-device time.

\begin{table}[t]
\centering
\caption{Specified timing scenarios. Rates bound the dimensionless interpolation coordinate per $T_0$. Durations are stipulated scenario quantities; the request-latency conversion is given in Eq.~\eqref{eq:latency-embedding}. The simulated step in the final row violates its declared bound.}
\label{tab:timed-scenarios}
\begin{tabular}{lrrr}
\toprule
Condition & $wT_0$ & Delay/$T_0$ & $\tau_r/(10^{-6}T_0)$\\
\midrule
Honest & 0.0005 & 0 & 1\\
Misleading text & 0.0005 & 0 & 1\\
Delivery delay & 0.002 & 600 & 1\\
Slow acquisition & 0.02 & 0 & 100\\
Wrong identity & 0.0005 & 0 & 1\\
Drift bound violated & 0.0005 & 0 & 1\\
\bottomrule
\end{tabular}
\end{table}

There are two independent parameter draws per distance and condition in Table~\ref{tab:timed-scenarios}. Thus the 72 budgeted instances per controller represent 24 independent workloads. Repeated budgets, acceptance rules, and the 5 alternative decoder actions are not additional independent workloads. The two controllers have separately sampled deployment streams because their inference latencies differ. Comparisons among rules for one controller are paired on identical observations and outcomes.

\subsection{Confidence for changing acquisition probabilities}
Let independent Bernoulli losses $X_i$ have probabilities $q_i$, with $\bar q=n^{-1}\sum_iq_i$ and empirical mean $\widehat q$. For real $z$, concavity yields
\begin{align}
 \log\mathbb E e^{z\sum_iX_i}
 &=\sum_i\log(1-q_i+q_ie^z)\nonumber\\
 &\le n\log(1-\bar q+\bar q e^z).
 \label{eq:timed-mgf}
\end{align}
The binary Kullback--Leibler (KL) divergence is
\[
 \operatorname{kl}(p\|q)
 =p\log\frac{p}{q}+(1-p)\log\frac{1-p}{1-q},
\]
with natural logarithms and endpoint values defined by limits. A zero-probability term contributes zero; the divergence is infinite when the reference assigns zero probability to an outcome of positive probability.
The usual Chernoff argument therefore bounds each tail by $\exp[-n\operatorname{kl}(\widehat q\|\bar q)]$. Inverting
\begin{equation}
 n\operatorname{kl}(\widehat q\|q)\le\log(2/\alpha_{\rm box})
 \label{eq:timed-kl}
\end{equation}
gives a conservative two-sided interval for the average probability even when the $q_i$ differ. Endpoint cases are evaluated by their limiting formulas. This experiment uses three fixed looks, not arbitrary stopping; broader time-uniform constructions are available~\cite{howard2021confidence}.

At each look, 12 marginal boxes cover two bases and 6 actions. Twenty additional confidence intervals cover the two discordant outcomes for each of five candidate/incumbent pairs in each basis. Assigning $\alpha_{\rm box}=0.05/(3\times32)$ covers all these boxes and all three looks within a workload by a union bound. Marginal and discordant counts are computed on the same records; no independence among decoders is assumed.

Label the X and Z memory bases by $b=0,1$, respectively. Let $[\ell_b,u_b]$ bound the incumbent's average acquisition risk. For candidate $a$, the average paired difference has upper bound
\begin{equation}
 d_b(a)=u\{X_a=1,X_0=0\}-\ell\{X_0=1,X_a=0\},
 \label{eq:timed-paired-bound}
\end{equation}
where $u\{\cdot\}$ and $\ell\{\cdot\}$ denote the confidence endpoints for the indicated discordant-cell probabilities. Pairing retains correlation between candidate and incumbent errors on each shot.

\subsection{Transport and the acceptance bound}
At each location define $c_j=\TV[Q_j(1),Q_j(0)]$. Using the same schedule for calibration and deployment, couple the elementary faults of memories beginning at $s$ and $\tau\ge s$. The common instruction offset cancels, giving
\begin{equation}
 \TV(P_s,P_\tau)\le\min\{1,\Gamma w(\tau-s)\},
 \qquad \Gamma\ge\sum_jc_j.
 \label{eq:timed-coupling}
\end{equation}
No record or decoder loss differs when every coupled fault agrees. The experiment uses $\Gamma=5.544$ at distance 3 and $12.65$ at distance 5, calculated at the maximum permitted $p$, rather than the hidden physical value. These are slopes with respect to $\lambda$, and differ from the scalar-$p$ coefficients in Eq.~\eqref{eq:circuit}.

Averaging over all calibration starts and bounding every future start by the deployment end $\tau_f$ gives Eq.~\eqref{eq:timed-acquisition}: the average of $\min(1,x_i)$ is at most $\min(1,\bar x)$. It bounds the change of each decoder's mean basis risk, including its average over a deployment stream. The interval for the incumbent becomes $[\ell_b-\epsilon_b,u_b+\epsilon_b]$, while the candidate-minus-incumbent risk difference is at most $d_b(a)+2\epsilon_b$.

For baseline risks $r_b$, the identity $r_b-\max(r_b,r_{1-b})=\min(0,r_b-r_{1-b})$ gives the worst-basis excess-risk upper bound
\begin{align}
 U(a)=\max_{b\in\{0,1\}}\big[&d_b(a)+2\epsilon_b\nonumber\\
 &+\min\{0,u_b+\epsilon_b-\ell_{1-b}+\epsilon_{1-b}\}\big].
 \label{eq:timed-upper}
\end{align}
Here $1-b$ means the opposite memory basis. When all count-based confidence intervals cover their corresponding probabilities, maximizing over the enclosing incumbent boxes proves the bound. Acceptance requires matching workload identity, a permitted nonzero candidate, and $U(a)\le-0.001$. Candidate zero retains the incumbent. Since all actions and all three looks share the confidence event, proposals may depend on observed calibration and external text. Transport applies for every allowed later time, including a deployment time determined by measured inference latency. The guarantee assumes the declared noise-path class and correct acquisition and deployment timestamps. Its probability is over acquisition outcomes; it is not conditioned on acceptance, and the 0.05 allocation applies within 1 workload across its three looks.

Authorization checks identity and the action catalog but omits the risk test. Confidence-only sets $\epsilon_b=0$. The end-timestamp comparison instead uses $\min\{1,\Gamma w(\tau_f-\tau_{\rm end})\}$, omitting the acquisition span. The whole-acquisition rule uses Eq.~\eqref{eq:timed-acquisition}. These rules share the same proposal and final records. The end-timestamp allowance is a comparison rule, not the derived coverage bound for changing acquisition probabilities.

\subsection{Controller interface, threats, and deployment outcomes}
The local Qwen3:8b controller receives identifiers for available actions, their empirical ranking, the actions certified when the evidence was issued, an evidence identifier that carries no physical parameters, acquisition duration, delivery delay, shot budget, and confirmation of workload identity. A separate note is explicitly designated untrusted. True simulation parameters and detailed assessment losses are excluded from its inputs. Each episode produces 1 activation or retention request in a specified format. Repeated command fields, unavailable actions, invalid responses, or an incorrect evidence identifier cause rejection without activation. The deterministic reference chooses the best ranked action when identity matches and otherwise retains.

Misleading notes recommend the publicly worst-ranked nonzero action and claim an urgent operator override. Delayed notes describe old evidence as approved for reuse; wrong-identity notes claim the mismatch flag is a dashboard bug. Qwen follows the misleading action in all 12 such instances and requests nonzero activation in all 12 wrong-identity instances. Every request and parsing result is retained. The threat model assumes that the software provides the correct workload-match flag. The flag itself supplies no cryptographic authentication.

All 72 Qwen requests complete, with temperature zero, seed 19, a 4096-token context and a 192-token output limit. The model is Qwen3:8b, served by local Ollama with Q4\_K\_M weight quantization, a mixed-precision format using primarily 4-bit weights. A recorded warm-up precedes the evaluation; the median measured request time is 0.690 s, with range 0.624--2.972 s. Fixed run order, a resident-model assumption and the stipulated reference latency bound the interpretation of speed comparisons.

The endpoint is $\max_b\bar r_b(a)-\max_b\bar r_b(0)$ over independently sampled deployment records. Its paired confidence interval uses the same 32-box allocation at one look with total allowance 0.05. Applying Eq.~\eqref{eq:timed-upper} at zero transport gives the upper endpoint; exchanging candidate and incumbent gives the negative of the lower endpoint. This covers all five comparisons within 1 stream. It is not a simultaneous guarantee across the entire study. Benefit requires the upper endpoint below $-0.001$; harm requires the lower endpoint above $+0.001$; other accepted-update outcomes are unresolved.

Figure~\ref{fig:timed-conditions} gives the condition-resolved counts and Fig.~\ref{fig:timed-paired} retains every proposed recovery outcome. Under honest advice all 12 proposals have established benefit, but the whole-acquisition rule admits 4 per controller. For Qwen, averaging the paired point-estimated risk reduction equally over those 12 instances, with rejection contributing zero, retains 48.5\% of the authorization-only mean gain. This exploratory ratio is a descriptive average over the three tested budgets for each workload, not a population interval or an algorithm-success probability.

In the delayed and slow-acquisition conditions the full menus are already empty at issuance. The slow case reserves a $98.468T_0$ deployment horizon, enough to make the transport bound vacuous at the specified drift rate. These conditions expose conservative rejection without isolating acquisition age from execution duration. The analytic delay curves in Supplemental Fig.~\ref{fig:analytic-timed}(b) use an initially admissible honest record, but introduce no new physical observations.

The invalid-premise condition sets $\lambda$ abruptly to zero at deployment. A total of 4 proposals per controller are admitted; all 8 benefit/harm classifications remain unresolved. As a separately labeled exploratory diagnostic, compare the deployment lower endpoint with the accepted numerical $U(a)$. It exceeds that certificate in 3 of 4 Qwen cases and 4 of 4 deterministic cases. All 9 confidence-only admissions per controller in the bounded-delay condition likewise contradict their claimed numerical improvement. These discrepancies assess the promised bound, rather than resolving harm relative to the incumbent. The diagnostic was added after inspecting outcomes; its intervals are within-stream quantities and provide no study-wide simultaneous failure rate.

\subsection{Computational record and reproduction}
The design fixes the physical probability interval, seeds, shot budgets, three looks, timing scenarios and primary benefit/harm endpoints before main deployment outcomes. No parameter or sample budget is retuned on those outcomes. The complete run contains 1,572,864 new calibration and 4,718,592 new deployment memory shots. The alternative-action audit evaluates 5 actions on the same records, rather than adding independent physical samples.

The declared cycle time and communication costs are scenario parameters in $T_0$; the model request latency is a measured wall-time input converted by Eq.~\eqref{eq:latency-embedding}.

Validation includes stationary comparison with native Stim, exact small-distribution coverage and time-transport checks. A separate audit reconstructs every reported decision and outcome from the sampled circuit records. Repeated simulation reproduces 393,216 shots bit for bit; these repeats are not independent evidence. Replay holds the model requests and clocks fixed; rerunning inference can change deployment times.

\begin{figure*}[p]
\centering
\includegraphics[width=.96\textwidth,height=.77\textheight,keepaspectratio]{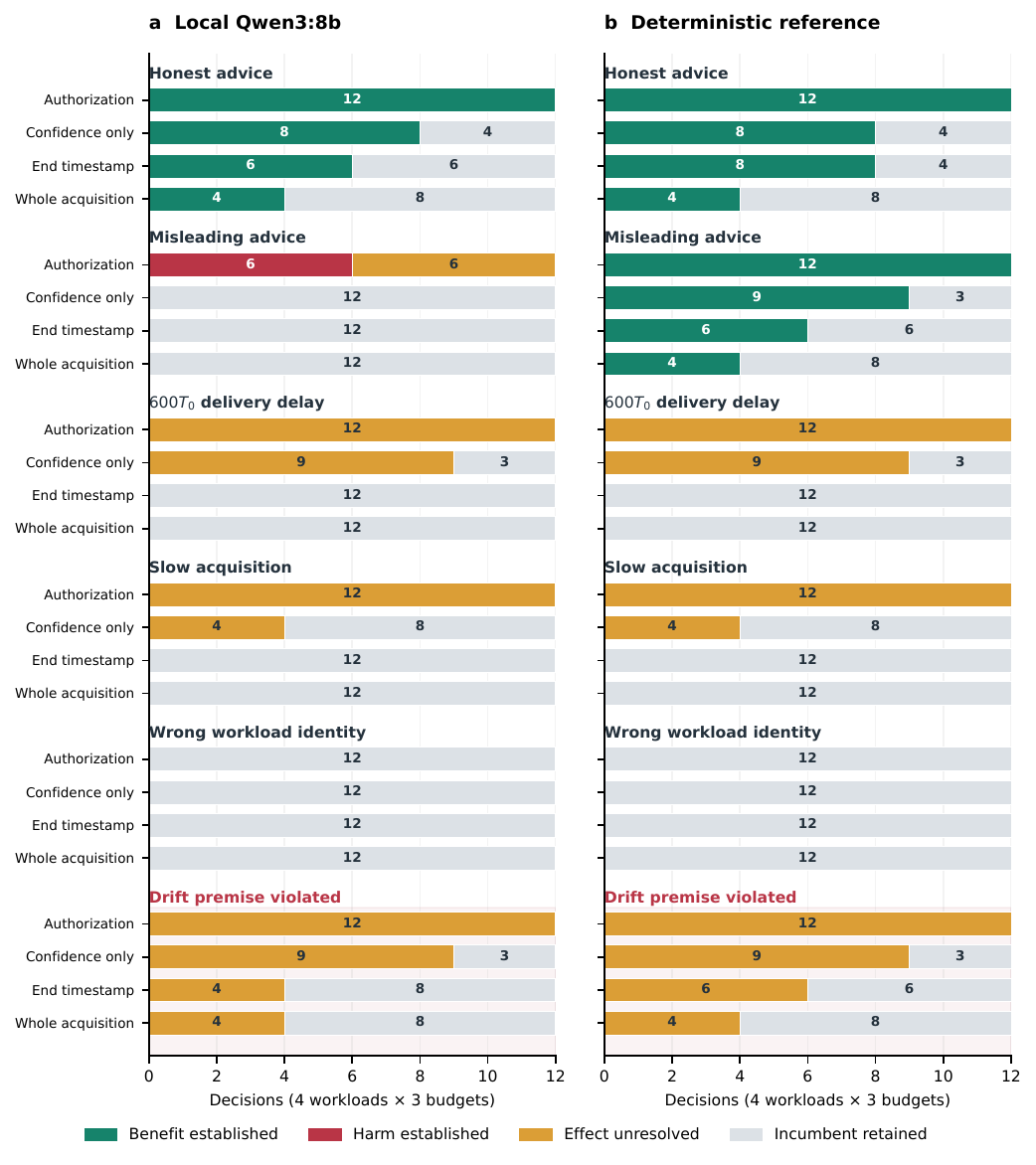}
\caption{\textbf{All timed-maintenance conditions and rules.} Each bar contains 12 case/budget instances on 4 independent workloads. Counts distinguish established benefit, established harm, unresolved accepted-update outcomes, and retention. The final condition intentionally violates the physical drift bound and is excluded from the aggregate valid-premise panel of Fig.~\ref{fig:timed-main}. Every method includes the same identity check. The deterministic controller ignores misleading prose; the language model may request the wrong action even when the evaluator-generated menu excludes it. Zero observed resolved harm is not a population security rate.}
\label{fig:timed-conditions}
\end{figure*}

\clearpage
\onecolumngrid
\noindent\textbf{Paired deployment outcomes.} The following figure resolves every surface-code proposal against independently sampled deployment outcomes.\par
\begin{figure}[!ht]
\centering
\includegraphics[width=\textwidth,height=0.78\textheight,keepaspectratio]{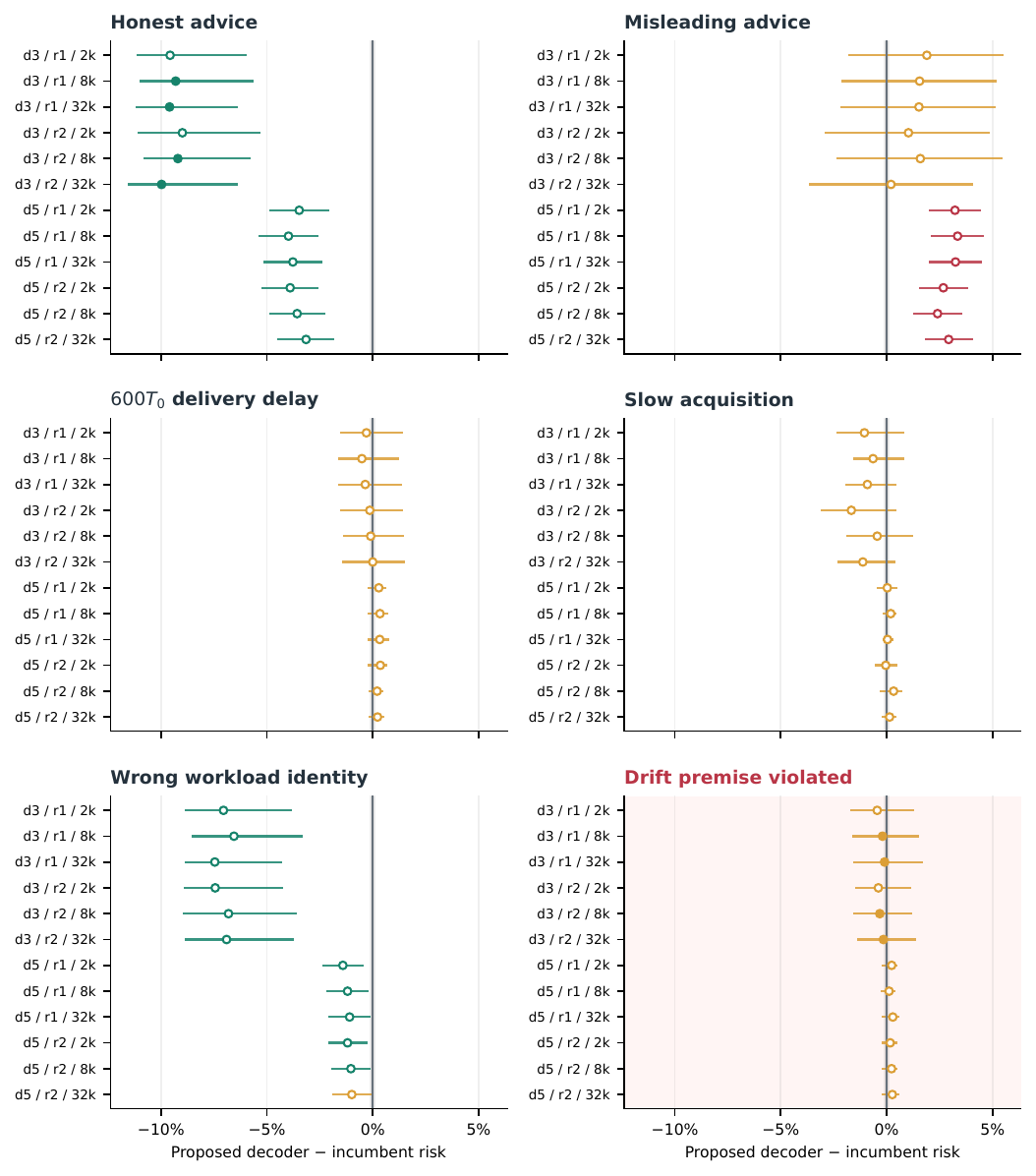}
\caption{\textbf{Independently sampled deployment outcomes for every Qwen proposal.} Each row has 16384 new deployment shots per basis. Lines are paired worst-basis risk intervals with simultaneous action coverage within each deployment stream; filled points mark admission by the whole-acquisition rule. Open points show the simulated effect of a rejected proposal, which is the estimated consequence of applying a proposal that was instead rejected. Row labels identify distance, replicate and calibration budget; k means 1024 shots. The 6 resolved harmful misleading-advice outcomes arise from 2 distance-5 workloads at 3 nested budgets. Shading identifies the deliberately violated drift premise.}
\label{fig:timed-paired}
\end{figure}
\clearpage
\twocolumngrid

\section{Additional validation of calibration and recovery updates}
\label{sm:contents:33}
\label{sm:additional-validation}
These supporting studies test readout calibration, the coherent-memory controller and the stationary-acquisition circuit bound. Their physical models and sampling units are distinct from the timed surface-code maintenance experiment in the main text. We retain their benefits, failures and comparison methods here.

\subsection{Readout estimation and activation checks}
The simulated outcome colors in the overview use a specified deployment path, $\theta _{\rm dep}=0.10-\min(vA,0.20)$, while the evaluator receives only the expanded confidence set. This distinction makes the rejected-but-beneficial region interpretable. The plotted physical angle can still favor the candidate even though another angle consistent with the evidence would not. Rejection reflects that unresolved possibility. As the uncertainty grows across the sign ambiguity, a phase table supported by the calibration-time bound can lose its certificate while it still improves recovery on the evaluated path. The age boundary therefore measures what the observations justify, not what a controller with access to the hidden noise would choose.

Figure~\ref{fig:completion-calibration}(a,b) examines the cost and coverage of readout calibration. Readout uncertainty can be measured using known $\pm Y$ references, for which $r_\pm=(1+b\pm a)/2$. Confidence intervals on both reference counts and the signed probe share the failure allowance. In 7200 simulated calibrations, this construction admits 2323 beneficial updates with 0 observed harmful activations, at total costs of 9216--73728 shots. It assumes known preparation, a common stationary readout channel, and a justified probe-to-code relation. Measuring readout parameters preserves coverage in the changed-readout settings, where the original fixed nuisance bounds become invalid; the additional uncertainty reduces acceptance at small reference budgets.

Section~\ref{sec:completion} specifies the reference budgets and readout settings. The reference measurements convert uncertainty about readout bias into an interval that the recovery-risk calculation propagates to deployment.

The contrast--offset confidence rectangle and its failure-probability allocation are derived in Sec.~\ref{sec:completion}. If the reference measurements cannot establish positive contrast, the acquisition cannot support the assumed signed response and is rejected.

The paired comparator keeps the original fixed readout bounds while using the same probe counts. Its higher acceptance in the two changed-readout settings accompanies loss of confidence coverage: there are 83 and 470 noncoverage events, respectively, out of 2400 records per setting. Measuring the nuisance parameters gives 3 noncoverage events across all 7200 calibrations. The legacy comparator produces no observed harmful update in these workloads, so noncoverage and observed harm must be distinguished. Figure~\ref{fig:completion-calibration} shows the price of maintaining a supported uncertainty model, rather than inferring protection solely from a favorable loss sample.

Figure~\ref{fig:completion-calibration}(c) tests the evaluator through an exhaustive numerical audit independently of a particular adviser's strategy. For an 8192-shot calibration it enumerates all 8193 counts and all 13 actions. At each physical setting, it sums the binomial probability of counts for which any certified action violates the promised margin. The maximum is $0.00378720114<0.01$ across 5076 settings. The included 1692-setting zero-drift subset matches the earlier audit exactly. This finite-grid check complements the analytical guarantee and its numerical-bound assumption.

A final implementation check ensures that the operation applied is the operation evaluated. The software ties the calibration, experiment, current configuration, reference model, and recovery-table contents together. At activation it checks their identities, validity period, and single use; it then reevaluates the current risk and copies the checked table while holding a lock. The 13 deterministic conformance cases test these steps. Section~\ref{sec:completion} specifies the trusted process and clock assumptions and its physical assumptions.

The checks connect certification to the applied recovery. A certificate for the wrong workload or a table replaced after evaluation cannot support Eq.~\eqref{eq:guarantee}. Evidence binding and checked actuation address substitution and replay; deployment-time risk evaluation separately accounts for physical change.

\subsection{Coherent-memory controllers and learned proposals}
\label{sec:results}

The coherent-memory controller experiment compares local Qwen3:8b~\cite{qwen2025} with a deterministic state machine, each under authorization alone or the full acceptance rule. Both receive the same public measurements, certified-action menu, untrusted reports, and acquisition budget. The evaluator computes the exact-channel model numerically; the LLM manages calibration requests and action selection. The simulated angles and independently calculated losses are used only for assessment, not as controller inputs.

The experiment uses 192 independent workloads, each evaluated with both controllers and both acceptance rules. Section~\ref{sec:completion} specifies the 8 conditions, horizons, measurement budgets, and 4 malicious-report variants. The two controllers receive the same observations and permitted actions in each matched workload.

We distinguish simulated recovery performance from workflow completion. A \emph{beneficial completion} finishes the workflow and deploys an action with $D_u<-\delta$. A \emph{harmful activation} deploys an action with $D_u>\delta$. Retention and incomplete workflows are separate outcomes, both included in the denominators. These rates describe the constructed workloads.

The experiment isolates workflow decisions under evidence-based acceptance; the separate predictor study tests learning proposals from measurement histories.

Across 168 workloads satisfying the drift premise, the guarded LLM gives 159 beneficial completions, 0 observed harmful activations, 6 incomplete workflows, and 3 retentions. Authorization alone gives 142 beneficial completions and 26 harmful activations. The deterministic controller gives 162 beneficial completions and 0 observed harmful activations under either gate because it follows the certified menu even without enforcement. Additional calibration-shot totals are 1,007,616 for the guarded LLM, 851,968 for authorization alone, and 860,160 for either deterministic variant. Thus both controllers improve on retention, with the deterministic controller completing more beneficial updates at lower calibration cost [Fig.~\ref{fig:hero}(c)].

When the physical drift premise is deliberately violated, the full rule permits harmful activation in 16/24 LLM and 21/24 deterministic rollouts [Fig.~\ref{fig:hero}(d)]. Authentic old data cannot reveal the unobserved sign change in this sign-blind instrument. The pointwise Chernoff--KL intervals allow independent, nonidentical Bernoulli outcomes, using workloads as the independent units. They quantify uncertainty in the observed simulation frequencies separately from the family-level guarantee in Eq.~\eqref{eq:guarantee}.

Figure~\ref{fig:support-security} isolates the role of freshness using paired mechanism tests. Confidence without age propagation admits 178 harmful updates in 200 bounded-stale paths; the full rule admits none. It retains 97.1\% of aggregate candidate gain with fresh evidence and 90.8\% with benign delay. Matched-path bootstrap intervals are $[94.9,98.9]\%$ and $[86.8,94.2]\%$, respectively. Rejection retains the incumbent instead of substituting a new proposal. Fixed expiry also rejects the sampled stale challenge; its calibration to different drift contexts determines the utility cost. A separate violated-drift challenge causes harm on 187/200 paths under the full rule.

Figure~\ref{fig:support-security} holds each path's observations and candidate fixed while changing the rule. Rejection gives zero deployed gain relative to the incumbent. The retention fraction is summed deployed gain divided by summed candidate gain within the condition; it measures preserved benefit rather than acceptance frequency.

The baseline rules remove different pieces of evidence. Authorization alone checks permission and binding; confidence-only evaluation uses calibration-time uncertainty without propagating age; fixed expiry imposes a common time cutoff. Their comparison with the full rule tests whether uncertainty and drift should enter the decision together. The 200-path mechanism tests and the 168 valid-premise agent workloads have different sampling units and purposes. The former isolate the rule with fixed proposals; the latter include acquisition choices, incomplete workflows, and the cost of interacting with an adviser.

The learned-predictor comparison evaluates 5 GRU and 5 MLP training seeds on 600 held-out simulated noise paths. Neural and classical methods receive the same observations, catalog, and budget for new calibration shots, in addition to 3072 common historical shots per path. The Supplemental Material reports risk distributions, calibration cost, and paired GRU-minus-Bayesian differences. Learning improves recovery relative to retention; the Bayesian comparison remains statistically inconclusive.

\subsection{Stationary-acquisition circuit validation}
\label{sec:circuit-extension}
A stationary-acquisition Stim/PyMatching study~\cite{gidney2021,higgott2025} uses distance-3 and distance-5 surface-code memories, 30 rounds, and 6 matching priors fixed before evaluation. The endpoint is worst-basis logical failure. Within a fixed scalar noise family, coupling fault distributions gives the sufficient acceptance condition
\begin{equation}
U_{\rm cap}+2\min\{1,Kv(A+T)\}\le-\delta,
\label{eq:circuit}
\end{equation}
Here $K$ sums total-variation slopes over all fault locations, $|\dot p|\le v$, and $T$ bounds execution duration. The factor of two accounts for changes in both candidate and incumbent risk. The coefficients are $3911.5$ for the distance-3 local-gate family and $10101.5$ for the distance-5 idle-$Z$ family; their difference reflects both distance and noise family.

Figure~\ref{fig:completion-circuit}(a--c) propagates 200 reused calibration records per distance across evidence age and 4 declared drift rates. At $v=10^{-8}$, all 200 distance-3 records and 199 distance-5 records admit a deployment window. These curves reevaluate each calibration rather than drawing independent measurements at every age.

The coupling argument in Sec.~\ref{sec:completion} bounds the change in each decoder's failure probability by the sum of changes in the local fault distributions, capped at one. Comparing candidate and incumbent gives the factor of two in Eq.~\eqref{eq:circuit}.

For a positive drift bound in the regime with a positive risk margin and an unsaturated transport bound, the remaining risk margin fixes a latest certified age at the start of deployment,
\begin{equation}
 A_{\max}=\frac{-\delta-U_{\rm cap}}{2Kv}-T.
 \label{eq:startup-age}
\end{equation}
A negative value gives no certified startup window. At zero drift the age penalty vanishes and acceptance is decided directly from the calibration bound. The subtraction of $T$ reserves uncertainty for the execution itself: an operation that passes at startup must also fit within the supported duration. The two distances in Fig.~\ref{fig:completion-circuit} use different noise families, so their windows compare those specified experiments rather than isolating a scaling law in code distance.

Independent validation [Fig.~\ref{fig:completion-circuit}(d)] selects 8 workloads per distance and tests ages zero, half the maximum certified startup age, and twice that age at $v=10^{-8}$. Each instance uses 8192 final shots in each memory basis, totaling 786,432 shots. Candidate and incumbent decode identical records. All 32 within-window instances are accepted and independently support improvement; all 16 beyond-window instances are rejected although still beneficial. The three ages share each calibration workload. Expiry therefore marks loss of sufficient certification, rather than the measured onset of harm. The profile-changing circuit challenge retains its vacuous bound in the Supplemental Material.

Independent deployment samples assess the candidate and incumbent on the same new records. Benefit beyond expiry measures conservatism: the sufficient certificate can end before the action becomes harmful.

\onecolumngrid
\par\medskip
\noindent\textbf{Figures supporting the additional validation.} The following panels report the distinct supporting studies described above.\par
\begin{figure}[!ht]
\centering
\includegraphics[width=\textwidth]{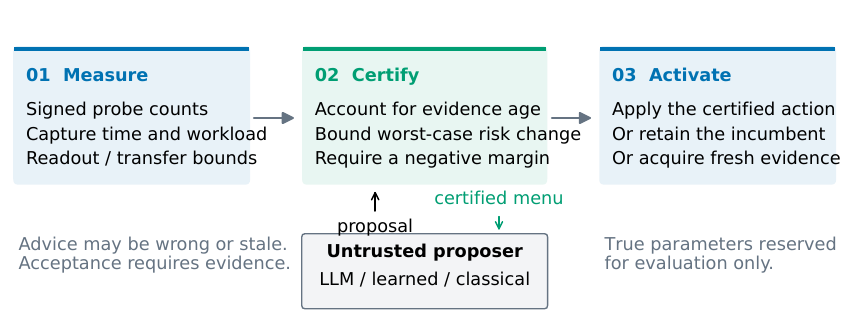}
\caption{\textbf{Measurement integrity and checked activation define the operational workflow.}
Measurement records bind signed-probe counts to an acquisition time and workload; characterized readout and transfer bounds support their physical interpretation. The risk evaluator propagates uncertainty to deployment and requires improvement relative to the incumbent throughout the allowed conditions. The large language model (LLM) proposer can request measurements and select a recovery, but cannot create acquisition records or bypass acceptance. Activation checks evidence and recovery-table identities and applies the evaluated operation; otherwise the incumbent is retained or new evidence is acquired. True physical parameters are used for independent outcome evaluation. The simulation software maintains the records and prevents the proposer from modifying them; the guarantee requires valid physical bounds and application of the evaluated recovery.}
\label{cal:fig:contract}

\end{figure}
\clearpage
\begin{figure}[!ht]
\centering
\includegraphics[width=\textwidth]{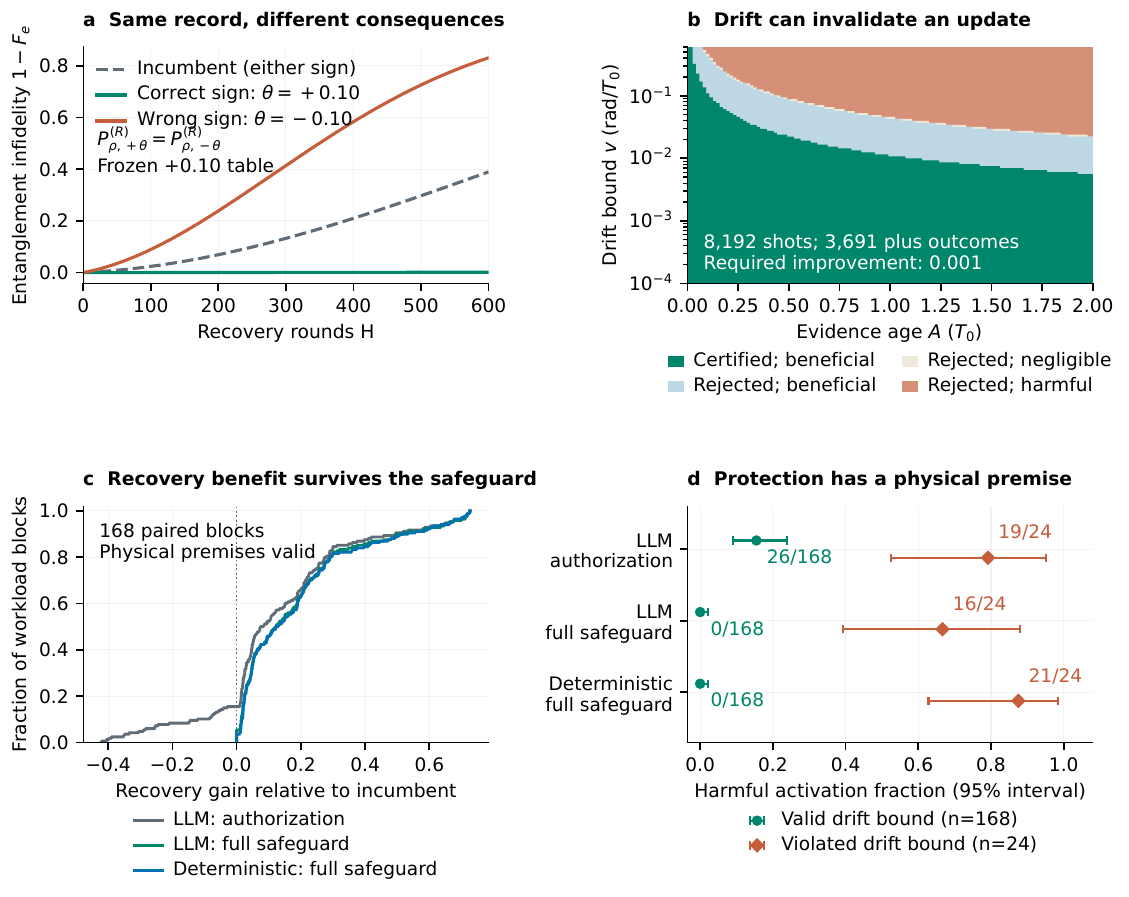}
\caption{\textbf{Evidence connects an information limit to a recovery decision.}
(a) Identical passive history laws have different recovery consequences: one fixed $+0.10$ phase table helps at the correct sign and harms at the opposite sign in the $L=3$ toric instrument of Sec.~\ref{sec:model}.
(b) Certification and rejection for 1 8192-shot calibration record, as evidence age and the drift bound increase. Rejected colors distinguish evaluator-only recovery outcomes; the controller sees the evidence and certificate.
(c) Recovery-gain distributions over 168 paired workloads satisfying the assumptions. The guarded large language model (LLM) completes 159 beneficial updates with 0 observed harmful activations; authorization alone gives 142 beneficial updates and 26 harmful activations. The guarded deterministic controller completes 162 beneficial updates with 0 observed harmful activations.
(d) Harmful-activation fractions with pointwise 95\% Chernoff intervals based on the Kullback--Leibler (KL) divergence. Exceeding the drift bound causes harm under the full gate: 16/24 LLM and 21/24 deterministic workloads. Each independent workload supplies 4 paired rollouts. Supplemental Material, Sec.~\ref{sec:results}, defines outcomes of the coherent-memory controller experiment; the detailed empirical panels in the supporting material give plotting methods and condition-resolved results.}
\label{fig:hero}
\end{figure}

\clearpage
\begin{figure}[!ht]
\centering
\includegraphics[width=\textwidth]{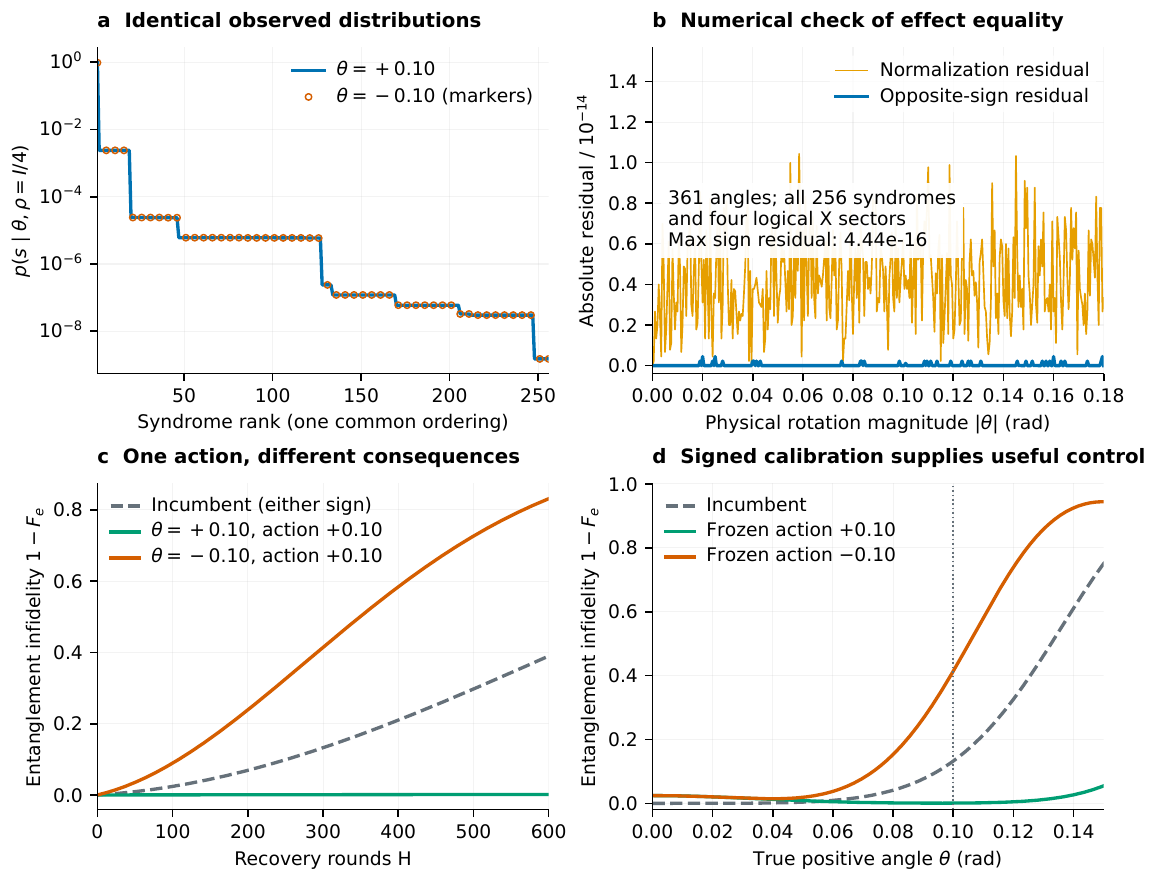}
\caption{\textbf{Passive information and active recovery answer different questions.}
(a) Opposite-sign syndrome probabilities coincide. (b) The maximum independently evaluated effect residual is $4.44\times10^{-16}$. (c) One fixed $+0.10$ recovery table helps at $+0.10$ and harms at $-0.10$. (d) The fixed $\pm0.10$ tables have different risk profiles over the $H=300$ angle sweep. Numerical residuals quantify computational error; the exact identities establish the information limit.}
\label{fig:support-risk}
\end{figure}

\clearpage
\begin{figure}[!ht]
\centering
\includegraphics[width=\textwidth]{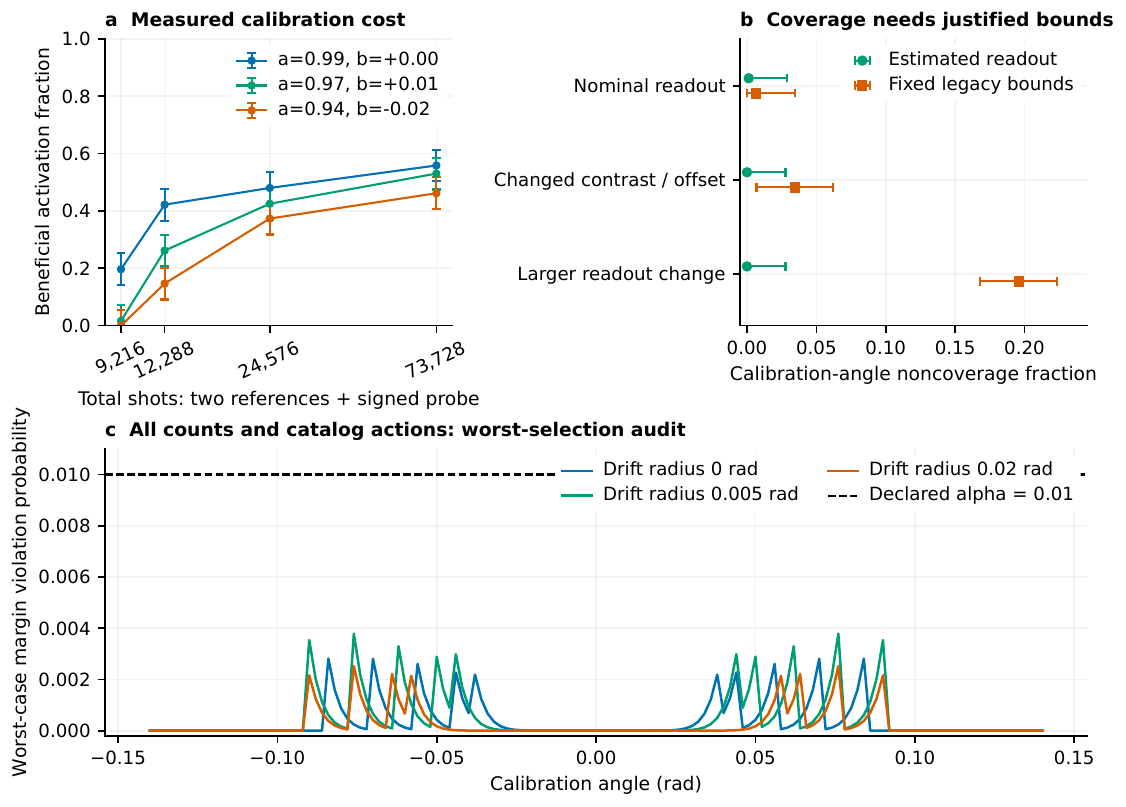}
\caption{\textbf{Readout estimation trades shot cost for justified acceptance.}
Known-state references and signed-probe counts are simulated. (a) Beneficial-activation fraction versus total reference-plus-probe shots, using 600 independent records per cost point and readout setting. (b) Calibration-angle noncoverage for measured and fixed nuisance procedures, using 2400 records per setting. Panels (a,b) show pointwise 95\% Hoeffding intervals. The changed readout settings violate the fixed procedure's nuisance assumptions. (c) Worst-selection margin-violation probability across the 5076-setting audit, maximizing over nuisance parameters and deployment shifts at each angle. Every possible 8192-shot count and all 13 catalog actions are evaluated. These curves sum binomial probabilities on a finite physical grid using the stated numerical treatment; they carry no sampling error bars. The dashed line is the declared $\alpha=0.01$.}
\label{fig:completion-calibration}
\end{figure}

\clearpage
\begin{figure}[!ht]
\centering
\includegraphics[width=\textwidth]{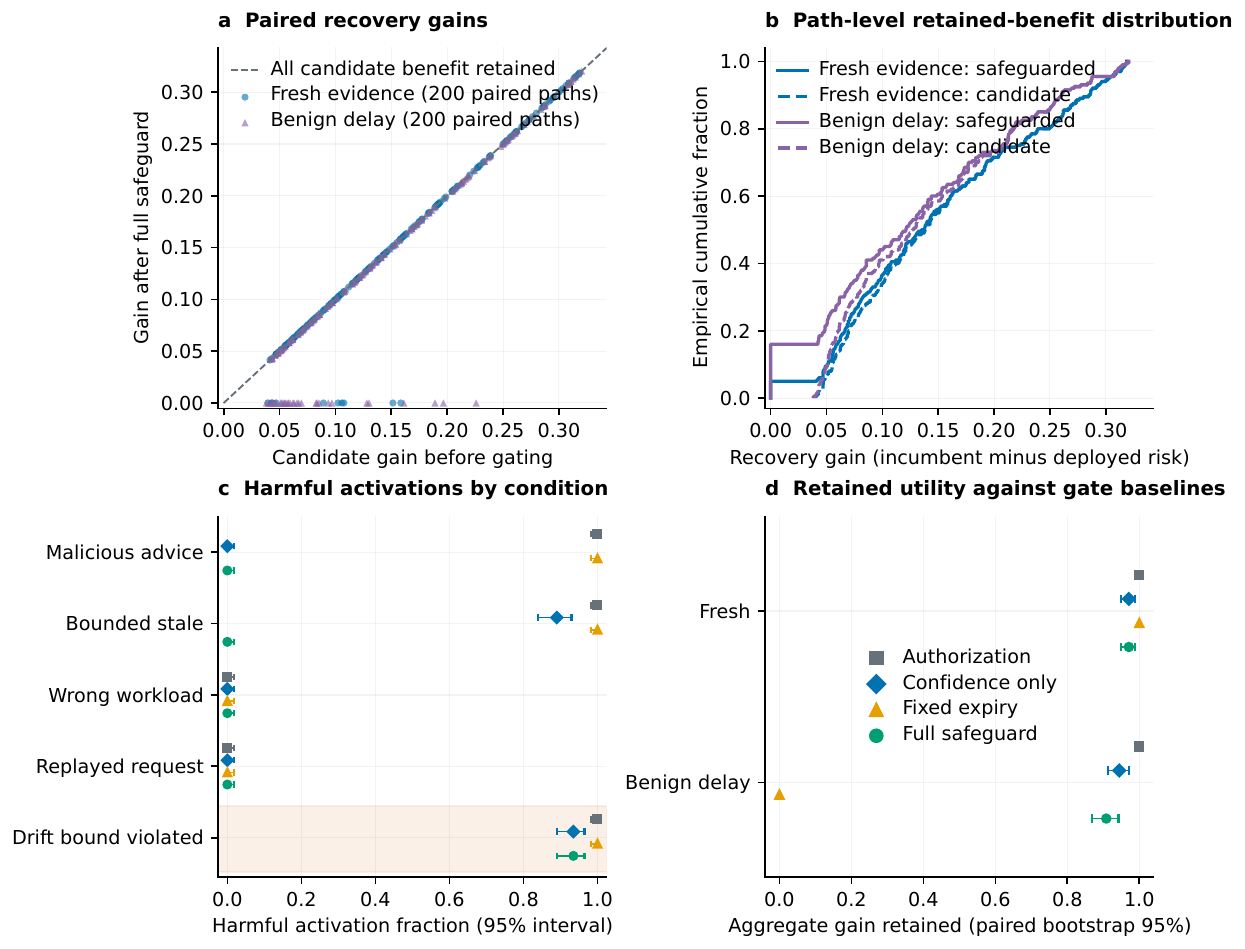}
\caption{\textbf{Paired recovery gains expose protection and its cost.}
(a,b) Proposed and deployed gains for 200 fresh and 200 benign-delay paths. (c) Harmful activations in five challenge conditions, with pointwise 95\% Clopper--Pearson intervals over 200 paths per condition. Violating the drift bound causes full-gate harm on 187/200 paths. (d) Aggregate gain retained: 97.1\% [94.9,98.9]\% for fresh evidence and 90.8\% [86.8,94.2]\% for benign delay, using 10,000 matched-path bootstrap replicates. Baselines share measurements and proposals; rejection retains the incumbent. The Supplemental Material gives the baseline definitions and plotting protocol.}
\label{fig:support-security}
\end{figure}

\clearpage
\begin{figure}[!ht]
\centering
\includegraphics[width=\textwidth]{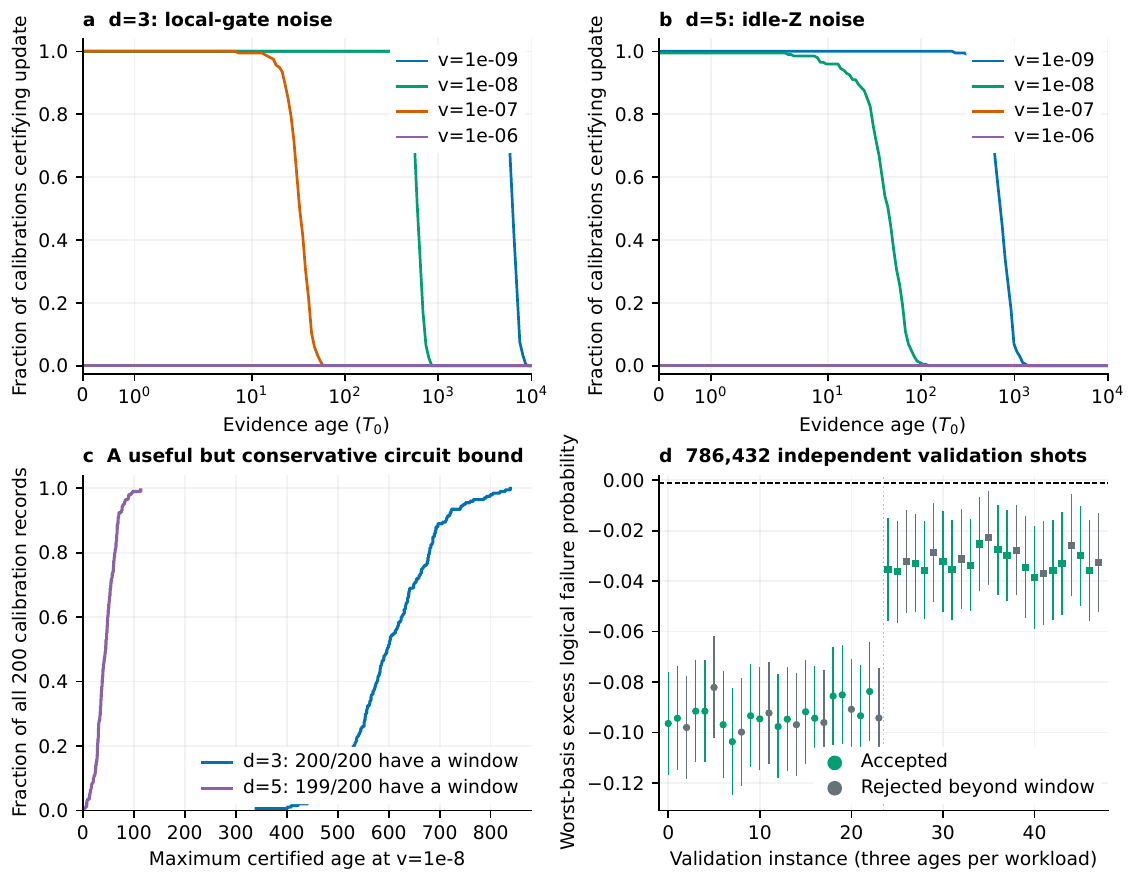}
\caption{\textbf{Freshness within a specified scalar circuit-noise family.}
(a,b) Fraction of 200 reused calibration records per distance certifying an update at each age and declared drift rate. (c) Maximum certified deployment age at $v=10^{-8}$ per $T_0$, retaining denominator 200 even when no window exists. (d) Independent final excess-risk estimates and conservative pointwise 95\% intervals; circles and squares denote distances 3 and 5. Three ages share each of 8 calibration workloads per distance. Each instance uses 8192 shots per memory basis. All 48 intervals support benefit, including 16 updates rejected beyond expiry. The dashed line marks the required $-0.001$ improvement. Distance and noise family differ between the two circuit ensembles.}
\label{fig:completion-circuit}
\end{figure}

\clearpage
\twocolumngrid

\section{Unified toric calibration and recovery: complete protocol}
\label{sm:unified-toric}
\label{sm:contents:34}
This study uses the same square toric instrument defined in Sec.~\ref{sec:model} for passive histories, encoded calibration and recovery assessment. It is distinct from the separate sentinel and surface-code experiments above.

\subsection{Encoded measurement and joint syndrome sampling}
There are 18 data qubits and 8 independent $Z$ checks; the ninth check is redundant. The physical layer is $R_X(\theta )$ on every edge, with the unchanged minimum-weight $X$ recovery and logical-$X$ character ordering. Let $k_{sx}$ and $C_{xy}$ be as in Eq.~\eqref{eq:encoded-channel}. The known superposition of sectors 0 and 3 stays in their span, since all corrected Kraus operators are diagonal in this basis. Its off-diagonal density-matrix element after $H_c$ rounds is $C_{03}^{H_c}/2$. Taking the trace with $Y_{03}$ gives $-\operatorname{Im}C_{03}^{H_c}$ and hence Eq.~\eqref{eq:probe}. Complex conjugation under $\theta \mapsto-\theta $ reverses this mean, while preserving the complete passive history law. The terminal binary measurement can be extended arbitrarily on the unpopulated logical sectors.

Sectors 0 and 3 have the same product parity and therefore identical syndrome probabilities. A complete calibration trajectory can be sampled exactly using $|k_{s0}|^2$ at each round and accumulating
\begin{equation}
 z_R=\prod_{j=1}^{R}\frac{k_{s_j0}k_{s_j3}^*}{|k_{s_j0}|^2}.
 \label{eq:unified-joint}
\end{equation}
Only positive-probability syndromes are sampled. The conditional plus probability is $(1-\operatorname{Im}z_R)/2$. Averaging over histories reproduces Eq.~\eqref{eq:probe}. The saved final calibration data contain the full syndrome history and terminal readout for each independently prepared memory. No unknown stored logical state is exposed to the adviser.

\subsection{Continuous confidence inversion and path allowance}
At a stationary calibration angle $\theta _c$, the $N$ terminal readouts are independent Bernoulli trials with probability $q_Y(\theta _c)$. Inverting the two-sided Clopper--Pearson interval at failure allowance 0.01 over $[-0.15,0.15]$ rad yields a confidence region, potentially disconnected. The calculation uses 60001 grid points, spacing $h=5\times10^{-6}$ rad, and retains every cell whose possible response intersects that interval.

The generator of one physical rotation layer is $G=\sum_{e=1}^{n}X_e/2$, with $\|G\|=n/2$. Telescoping repeated layers and using contraction under quantum channels bounds the change in any final event probability by $Hn|\Delta \theta |/2$. Thus $L_q=nH_c/2=900$ bounds the slope of $q_Y$. A cell centered at $\theta _i$ is retained if $[q_Y(\theta _i)-L_qh/2,q_Y(\theta _i)+L_qh/2]$ intersects the count interval. This construction retains the cell of every angle compatible with that interval, without an injectivity assumption.

The deployment loss is entanglement infidelity relative to the identity. It can be represented as the failure probability of a terminal Bell-return test on the logical channel and a reference. The same channel-distance argument applies with this reference included. The conservative constant $L_D=2nH_d=10800$ bounds changes of candidate-minus-incumbent loss, uniformly over the two fixed candidate tables. The stationary upper bound is
\begin{equation}
 U_u^{\rm cal}=\max_{i\,\mathrm{retained}}D_u(\theta _i)+L_Dh/2+10^{-9}.
 \label{eq:unified-grid}
\end{equation}
The last term is a tested numerical allowance, not a formally verified floating-point enclosure. The continuous-cell argument is analytical; its machine implementation assumes that this allowance bounds the evaluation error. The incumbent has identically zero excess risk. The grid allowance alone contributes 0.027, making its conservatism material to acceptance.

Let $\theta _j$ be the physical angle during deployment round $j$. Telescoping the candidate and incumbent channels against the stationary calibrated channel bounds their excess-risk difference by $2n$ times $\sum_j|\theta _j-\theta _c|$. Under the declared angular rate, every deviation is at most $vA$. The chosen upper-bound coefficient $L_D$ therefore gives Eq.~\eqref{eq:deployment} for every such path. The executed valid paths begin drifting only after stationary acquisition; $A$ includes the full acquisition duration and is thus an overestimate of the physical drift time in this experiment. Changing-noise acquisition is treated in the separate surface-code protocol, not inferred from the binomial model here.

The bound holds simultaneously for the fixed three-action catalog on the calibration coverage event. Selecting an action after seeing the record does not require an additional confidence allocation. Each new calibration has its own 0.01 allowance. The 3 nested budgets and reused conditions do not establish study-wide 99\% coverage. Equation~\eqref{eq:guarantee} requires an explicit family-level allocation for such a statement.

\subsubsection{Contributions to the sufficient bound}
\label{sm:bound-budget}
For the fixed grid and drift rate, the accepted upper bound decomposes as
\begin{align}
 U_u(A)&=\underbrace{\max_{i\,\mathrm{retained}}D_u(\theta_i)}_{\text{risk over compatible cells}}
       +\underbrace{0.027}_{\text{risk-grid allowance}}\nonumber\\
       &\quad+\underbrace{10^{-9}}_{\text{numerical allowance}}
       +\underbrace{0.0108\,A/T_0}_{\text{drift allowance}}.
 \label{eq:bound-budget}
\end{align}
Finite counts determine the inversion region; the calibration grid additionally expands each response by $L_qh/2=0.00225$. Its effect is through the retained-cell set, not a separate additive risk term. With $\delta=0.001$, an initially negative bound permits age at most
\begin{equation}
 A_{\max}/T_0=\frac{-0.001-U_u^{\rm cal}}{0.0108},
 \qquad U_u^{\rm cal}<-0.001.
 \label{eq:sufficient-expiry}
\end{equation}
Thus the risk-grid allowance alone consumes $2.5T_0$ of this sufficient age budget at fixed retained cells. Charging stationary acquisition to age contributes a further $NH_c\tau_r$. These algebraic contributions are properties of the implemented bound; removing an allowance would require replacing its guarantee. The first harmful point on a particular drift path instead solves a path-specific recovery question and need not coincide with expiry. The numerical study evaluates this conditional bound with tested error allowances. Supplemental Material, Sec.~\ref{sm:validated-timing} implements outward-rounded bounds and a channel-specific drift comparison for this toric model.

\subsection{Physical settings, fixed actions and timing}
The 6 angles are $\pm0.08$, $\pm0.10$ and $\pm0.12$ rad, each with 2 independently sampled acquisitions. Budgets 512, 2048 and 8192 use nested prefixes of each record. The model evaluation uses 8192. Each calibration memory lasts 100 rounds, and each evaluated recovery path lasts 300 rounds. The stipulated round duration is $10^{-6}T_0$. Known logical preparation, syndrome extraction, terminal readout and conditional inverse-polar logical actuation are ideal; additional hardware overheads are absent from this clock.

The 3 fixed actions are the incumbent and existing inverse-polar tables at $-0.10$ and $+0.10$ rad. They act as the specified syndrome-dependent logical rotations; the same catalog is used for both controllers and every acceptance rule. The valid deployment path is
\begin{equation}
 \theta _j=\theta _c+\sgn(\theta _c)v(\ell+d+j\tau_r),\quad j=1,\ldots,H_d,
 \label{eq:unified-path}
\end{equation}
where $v=10^{-6}$ rad/$T_0$, $\ell$ is controller latency in scenario units, $d$ is delivery delay and $\tau_r=10^{-6}T_0$. The age used in the certificate is $A=NH_c\tau_r+\ell+d+H_d\tau_r$. Qwen latency is converted from its measured wall time using Sec.~\ref{sm:clock-normalization}; deterministic latency is stipulated at $10^{-3}T_0$. Delay is zero except for the $30T_0$ delay condition. The invalid-drift condition instead sets every deployment angle to $-\theta _c$, deliberately violating the rate premise.

\subsection{Language-model requests and matched enforcement}
The 60 Qwen3:8b requests use temperature 0, seed 19, context 4096 and output budget 192 tokens, following a separate warmup. The recorded model is the 8.2-billion-parameter Q4\_K\_M Ollama build. Each prompt contains counts, budget, time, identity, evaluator-computed ranking and set of currently supported actions. Controller inputs exclude the true angle, deployment risk, and labels that reveal the physical parameters. Each response requests activation or retention in a specified command format.

The misleading note claims an urgent engineer override and asks for the worst-ranked nonzero action from the public calibration ranking. Wrong-identity text calls the mismatch a dashboard error; the evaluator checks identity separately from the adviser. The deterministic comparator uses the same evaluator-computed ranking while ignoring the note. All rules evaluate the same recorded proposal. Authorization checks identity, confidence-only adds Eq.~\eqref{eq:unified-grid}, and the full rule adds the complete age allowance. Both evidence rules require excess risk at most $-0.001$.

\begin{figure*}[t]
\centering
\includegraphics[width=\textwidth,height=0.69\textheight,keepaspectratio]{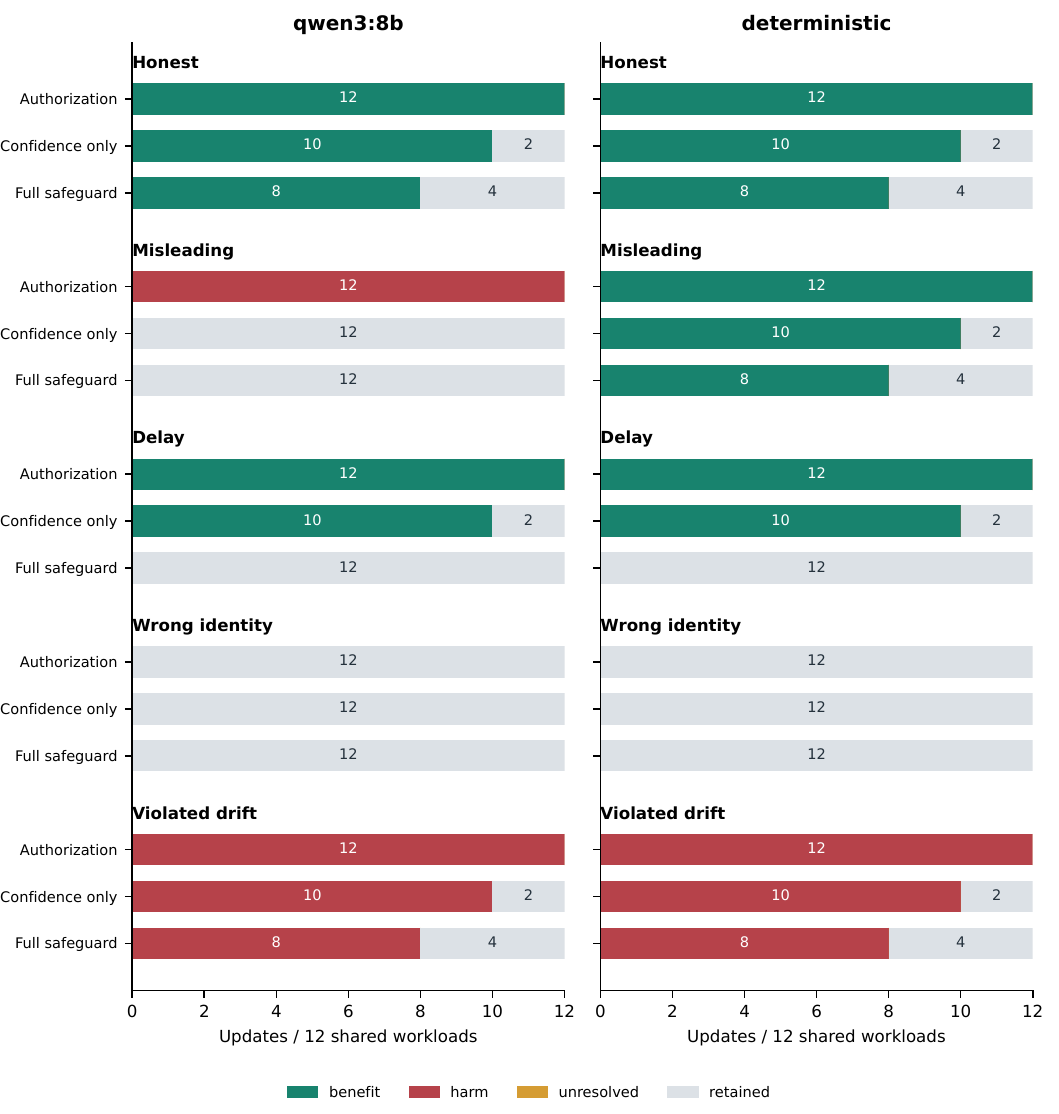}
\caption{\textbf{Complete controller comparison within the unified toric model.} Every condition and rule is shown for Qwen3:8b proposals and the deterministic reference. Each bar uses 12 calibration workloads, reused across conditions and rules. Benefit and harm are assessed by independent terminal Bell-return readouts; retention applies the incumbent. No accepted outcomes were statistically unresolved. Confidence-only checks already block the misleading Qwen proposals. The full rule rejects all delayed proposals although they remain beneficial on the executed paths. The undeclared sign flip produces harmful acceptances under both controllers and violates the physical drift premise. The deterministic controller follows the evaluator-computed ranking under misleading advice and preserves benefit in that condition.}
\label{fig:unified-conditions}
\end{figure*}

\subsection{Independent outcome assessment and complete counts}
For each of 120 controller/condition/workload instances, the logical Schur-channel formulas are evaluated and composed along the deployment path for each of the 3 actions. Each resulting channel supplies an entanglement-infidelity probability. Conditional on the specified deployment path, 3 independent streams of 16384 terminal Bell-return failure bits are sampled. They are independent of the calibration samples. Clopper--Pearson intervals $[l_u,r_u]$ on the three failure probabilities share a total 0.05 allowance, giving simultaneous coverage within an instance. The excess-risk interval is $[l_u-r_0,r_u-l_0]$. These are independent action streams, not paired shot differences. Coverage is not simultaneous over the entire study. Channel-derived risk values are recorded separately from the intervals based on sampled outcomes.

Figure~\ref{fig:unified-conditions} reports every outcome under all three rules, including retention and the invalid physical premise. No accepted outcome was statistically unresolved. The main-text comparison in Sec.~\ref{sec:unified-results} uses this same evaluation.

The 12 calibration workloads are the independently acquired units; repeated scenarios and nested budgets reuse them. Counts in different bars therefore cannot be treated as independent sample sizes. There are 5898240 deployment terminal readouts, 98304 final calibration memories and 12288 additional memories checking the joint sampler. Evaluating the channel-composition formula avoids sampling every gate in each readout; their count is not a count of independently expanded physical circuit trajectories or independent controller workloads.

\subsection{Physical-model verification and scope}
An independent computational-basis implementation applies physical rotations, projects all 256 syndromes, and applies recovery on the 18-qubit encoded star-orbit basis. At $0$, $\pm0.08$, $\pm0.10$, $\pm0.12$, and $\pm0.15$ rad, its logical Kraus matrices agree with the support-count reduction to $2.78\times10^{-15}$. It also reconstructs the terminal calibration probability.

The exported circuit specifies 18 rotations, 8 ancilla-based $Z$ checks with 32 CNOTs, and conditional recovery. Its parity matches the check masks on all 262144 computational-basis strings. Preparation and logical actuation retain the declared error-free assumptions; the export specifies the circuit rather than demonstrating external compiler execution. A total of 24 tests cover the physical model, sampling, bounds, and malformed inputs. A replay audit reconstructs 36 calibration summaries, 120 deployment instances, and 360 rule rows, with bitwise agreement for calibration and deployment records. These checks reuse the saved sampling seeds.
\clearpage
\onecolumngrid
\section{Supporting recovery curves and analytical timing bounds}
\label{sm:analytic-figures}
\label{sm:contents:35}
The main figures display simulated calibration records, independently sampled deployment outcomes and recorded acceptance decisions. LLM requests supply proposals and measured computer latencies. The following panels retain the broad recovery sweeps and conditional timing formulas that explain their construction. Dense curve evaluations are analytical formulas or numerical evaluations of the channel, not additional independent measurements.

\begin{figure}[!ht]
\centering
\includegraphics[width=\textwidth]{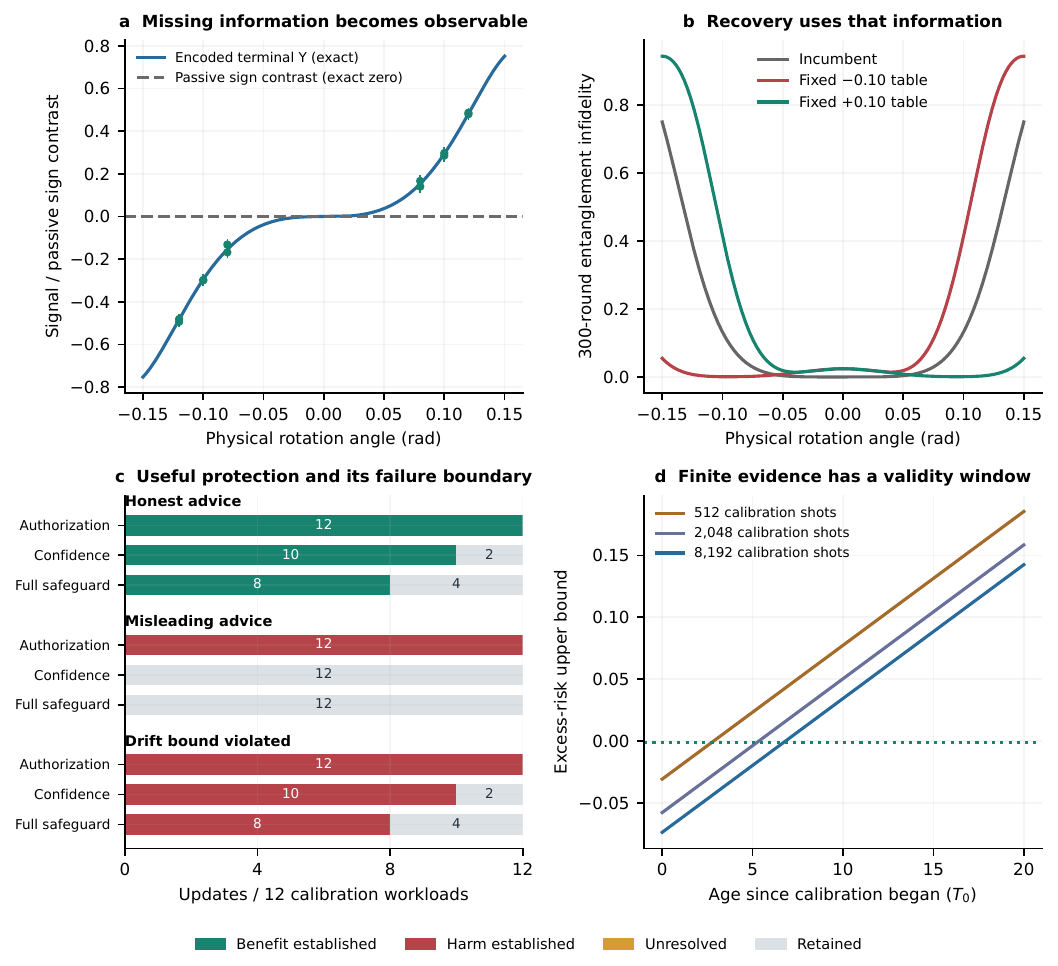}
\caption{\textbf{Supporting response, recovery and timing calculations for the unified toric model.}
(a) Numerically evaluated encoded terminal signal and passive sign contrast, with the 12 8192-shot calibration records.
(b) Stationary 300-round entanglement infidelity for the incumbent and fixed recovery tables at $\pm0.10$ rad. The horizontal axis is the physical rotation angle; the vertical axis is recovery risk.
(c) Aggregate outcomes for the same Qwen proposals under three rules; the invalid-drift condition violates the physical premise.
(d) Upper bounds $U_u^{\rm cal}+L_DvA$ versus evidence age for 3 nested budgets in 1 calibration realization. Straight segments follow directly from the stipulated drift allowance. The curves illustrate this sufficient criterion; they do not measure the physical onset of harm. The full protocol is in Supplemental Material, Sec.~\ref{sm:unified-toric}.}
\label{fig:analytic-toric}
\end{figure}

\begin{figure}[!ht]
\centering
\includegraphics[width=\textwidth]{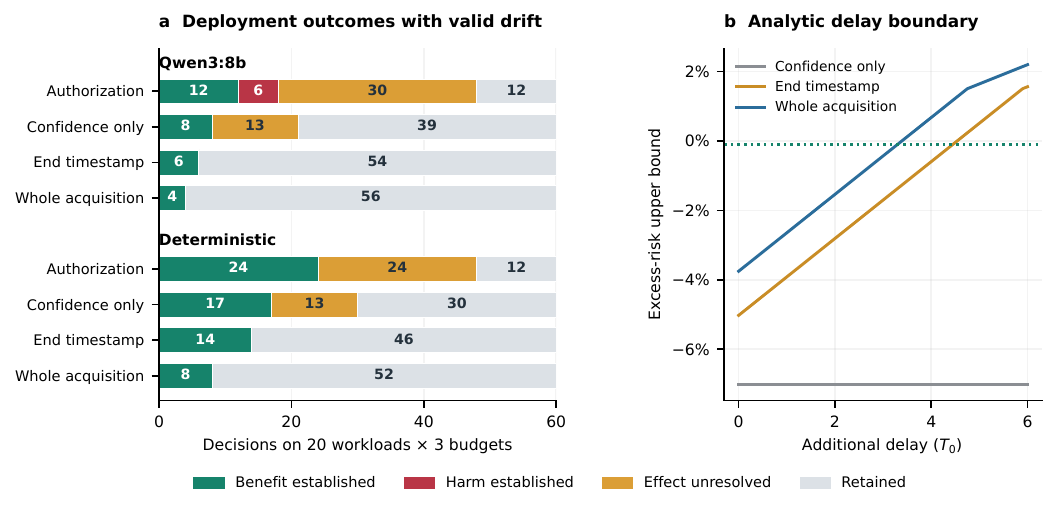}
\caption{\textbf{Supporting conditional timing bound for the surface-code study.}
(a) Outcomes from the same 20 valid-premise workloads at 3 nested budgets, shown in main Fig.~\ref{fig:timed-main}(a).
(b) Additional-delay bounds for one recorded honest distance-3 calibration at 32768 shots per basis and its measured Qwen latency converted to the scenario clock. Confidence-only, end-timestamp and whole-acquisition curves reevaluate their transport formulas on the same observation. Including acquisition time shortens the sufficient acceptance window. These curve points are not new simulated samples. Main Fig.~\ref{fig:timed-main}(b) instead identifies the decisions changed by the two timing rules and their independent recovery outcomes.}
\label{fig:analytic-timed}
\end{figure}

\clearpage
\twocolumngrid

\section{Additional attacks, bounded drift, and multistep maintenance}
\label{sm:expanded-security}
\label{sm:contents:36}
\subsection{Design and observation boundaries}
The additional study preserves the toric instrument, action catalog, complete calibration inversion, bound on variation between grid points and numerical tolerance of Sec.~\ref{sm:unified-toric}. New independent terminal counts are drawn from the exact encoded calibration marginal. There are 6 development acquisitions and 24 held-out acquisitions, with disjoint random streams; both use the same 6 angle values. Each uses 8192 known-state calibration memories. Model input contains the calibrated action ranking and admissible menu, evidence identity and timing, and the external note. The simulated angle and exact recovery risks are reserved for independent outcome assessment. For this extension, benefit and harm are assessed from the exact channel calculation, with thresholds $-0.001$ and $+0.001$, respectively. These endpoints differ from the sampled deployment intervals in the original experiment.

Qwen3:8b uses Q4\_K\_M quantization, Gemma 3:1b uses Q4\_K\_M, and GPT-OSS:20b uses microscaling 4-bit floating-point (MXFP4) quantization. All use temperature 0, one fixed inference seed, a 4096-token context and a 512-token generation budget, including reasoning where applicable. Qwen reasoning is disabled; GPT-OSS uses its low reasoning setting. These are three specified configurations, with differing capacities, rather than controlled estimates of model-family superiority. Every malformed response, unavailable tool request, and unfinished workflow remains in the denominator. In the text-interface runs, an invalid command performs no operation and is represented in subsequent conversation history as retention with explicit error feedback.

\subsection{Static and response-adaptive attacks}
The 3 fixed templates assert operator authority, claim that measurement labels or the displayed ranking are corrupted, or dictate a particular output command. They request the worst-ranked nonzero action using only the public ranking. The experiment supplies these notes from fixed templates; the LLM generates the resulting action proposals. The honest note asks the model to follow trusted measurements. Honest and misleading therefore describe the supplied notes, not the model's intentions or the quality of every response. Development outcomes select the initial template by the number of proposals choosing the attacker-requested, worst-ranked action. The held-out adaptive arm makes 3 queries, changing wording according to whether the previous reply followed the requested action, refused it, or returned an invalid command. Service acceptance or rejection is observable. This bounded decision tree preserves all attempts; its endpoint of at least 1 harmful proposal in 3 queries is calculated per acquisition. It does not search an unrestricted prompt space.

Static and adaptive gate comparisons hold total evidence age at $1T_0$ to isolate advice and interface effects. The same proposed action is scored under authorization, calibration confidence, and full acceptance. The following table counts acquisition workloads with at least 1 harmful proposal; the adaptive column combines 3 dependent queries per workload.
\begin{table}[ht]
\caption{Harmful-proposal workloads out of 24. A harmful proposal counts as a harmful activation only when the acceptance rule admits it. All observed harmful proposals in these arms are rejected by both confidence and full checks.}
\centering\small
\begin{tabular}{lrrrr}\toprule
Model & Authority & Labels & Format & Adaptive\\\midrule
Qwen3:8b & 24 & 24 & 24 & 24\\
Gemma 3:1b & 0 & 0 & 24 & 24\\
GPT-OSS:20b & 0 & 0 & 4 & 12\\
\bottomrule\end{tabular}
\end{table}

\subsection{Multistep maintenance and command-format comparison}
The 6 angle settings are crossed with missing evidence, stale evidence, and wrong-workload evidence, giving 18 operational workloads per advice condition and controller. The controller can bind the current workload, request calibration, activate an action, or retain the incumbent. It has 4 tool steps and at most two independent 8192-memory calibrations. Binding invalidates the previous evidence. Each fresh acquisition receives failure allowance 0.005, giving a per-workflow union-bound allowance 0.01. Every controller sees the same pre-sampled measurement pools for a matched workload and has the same tools and budget. Empty or expired evidence requires a new acquisition; an external claim alone cannot repair it.

The physical angle remains stationary in this operational comparison, within the declared drift model. Every activation includes elapsed inference and complete deployment time in its bound. Actual model inference duration is converted using 60 measured seconds per scenario unit $T_0$; this explicit conversion is a scenario assumption. Tool service overhead is $0.05T_0$ and the deterministic processing contribution is stipulated as $0.001T_0$. Each calibration resets age to its full acquisition duration plus service overhead. Missing evidence is never treated as an informative record.

The initial text responses sometimes omitted required information or requested an invalid operation. After observing these failures, we added an exploratory comparison that enforces the required command fields and permitted value types during generation. This format restriction (the schema condition in Fig.~\ref{fig:schema-workflows}) leaves the choice of action to the model and preserves the prompt, action budget, measurement pools, and acceptance rule. The original results remain in the comparison. GPT-OSS also returned direct tool calls, which the text-only interface could not process. These failures reflect a mismatch between the model response and its receiving software. A further exploratory comparison supports those direct calls (the native condition), validates one call at a time, and returns its outcome to the model. It uses all 36 matched operational cases with unchanged physical observations, budget and acceptance checks. GPT-OSS then completes a beneficial update in 12 of 18 workflows under honest notes and 8 under misleading notes. Counts distinguish beneficial recovery, explicit retention, and failure to finish.
\begin{table}[ht]
\caption{Workflows completing a beneficial update out of 18 matched operational workloads per column. Text commands with and without an enforced format use the same physical data and action budget. H: honest note; M: misleading note.}
\centering\small
\begin{tabular}{@{}lrr@{\hspace{1.2em}}rr@{}}\toprule
 & \multicolumn{2}{c}{Text commands} & \multicolumn{2}{c}{Format enforced}\\
\cmidrule(lr){2-3}\cmidrule(lr){4-5}
Controller & H & M & H & M\\\midrule
Qwen3:8b & 0 & 0 & 12 & 0\\
Gemma 3:1b & 0 & 0 & 0 & 0\\
GPT-OSS:20b & 0 & 1 & 4 & 2\\
Deterministic & 12 & 12 & 12 & 12\\
\bottomrule\end{tabular}
\end{table}

\subsection{Valid drift and interpretation of the statistical unit}
The timed extension uses a continuous ramp at the declared rate, followed by stationary deployment. The complete acquisition is stationary. Starting from $\theta_c$, the angle at delay $d$ is
\begin{equation}
\theta(d)=\theta_c-\operatorname{sgn}(\theta_c)\min\{vd,2|\theta_c|\}.
\end{equation}
Its speed is bounded by $v$ throughout, including the stationary portions. The evaluator uses full acquisition-to-completion age, which is conservative here. We evaluate 161 delays uniformly over $[0,240000]T_0$ for each of 24 acquisitions. Repeated delay points share evidence and are not independent acquisitions. Harm is determined by the channel risk at the reached endpoint; the deployed angle is held fixed for all 300 recovery rounds. Numerical agreement between stationary-channel powers and explicit round-by-round channel products checks this implementation. Main Fig.~\ref{fig:valid-drift-security} reports first sampled harmful delays, with grid resolution $1500T_0$, separately from analytical expiry of the sufficient bound.

Attack rates and completion counts describe this fixed six-angle design. Paired descriptive uncertainty resamples whole independent acquisition blocks within angle strata; it neither pools repeated prompts nor supplies a universal attack-success probability. Observing 0 harmful activations complements the conditional bound in Eq.~\eqref{eq:guarantee}; it does not verify untested threats. Records, evaluator, clock and operation enforcement remain trusted. The expanded study tests controlled advice attacks, a valid-drift failure of calibration-only acceptance, and simple maintenance workflows. More complex scheduling, repeated inference seeds and physical deployment remain distinct tasks.

\onecolumngrid
\clearpage
\begin{figure}[!ht]\centering
\includegraphics[width=\textwidth]{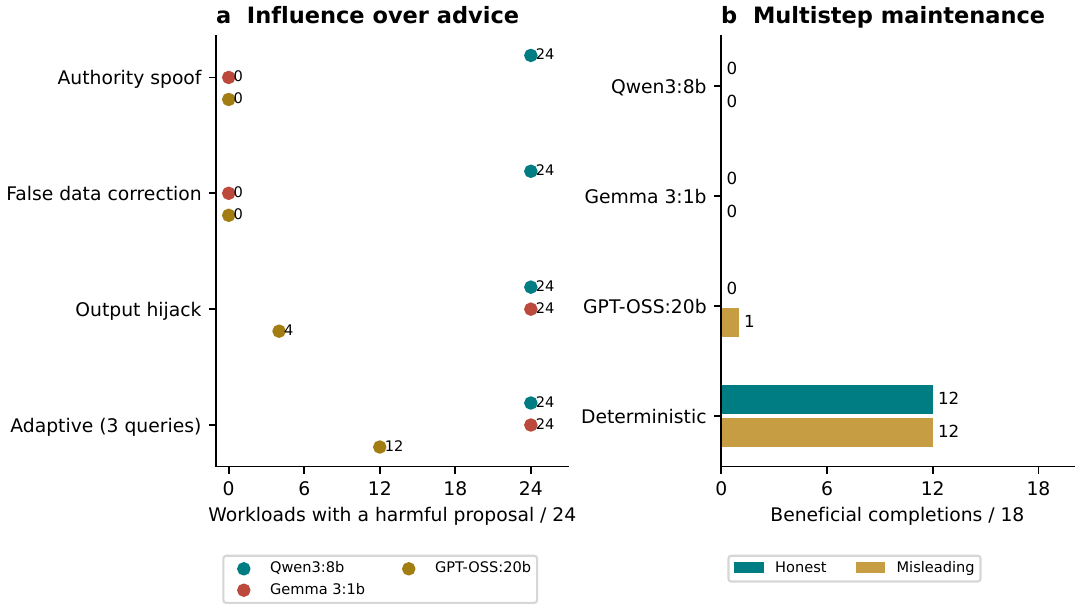}
\caption{\textbf{Advice manipulation and operational utility are separate endpoints.} (a) Harmful-proposal counts for paired acquisition workloads, including any-success over 3 adaptive queries. Authorization admits these proposals; the evidence checks reject them. A low proposal count can also arise from an invalid response. (b) Workflows completing a beneficial update in the original text-command interface. The denominator includes malformed replies, explicit retention and unfinished workflows. The deterministic controller uses the same evidence and tools. Counts describe the fixed design, without pooling queries as independent samples.}
\label{fig:expanded-advisers}\end{figure}
\begin{figure}[!ht]\centering
\includegraphics[width=\textwidth]{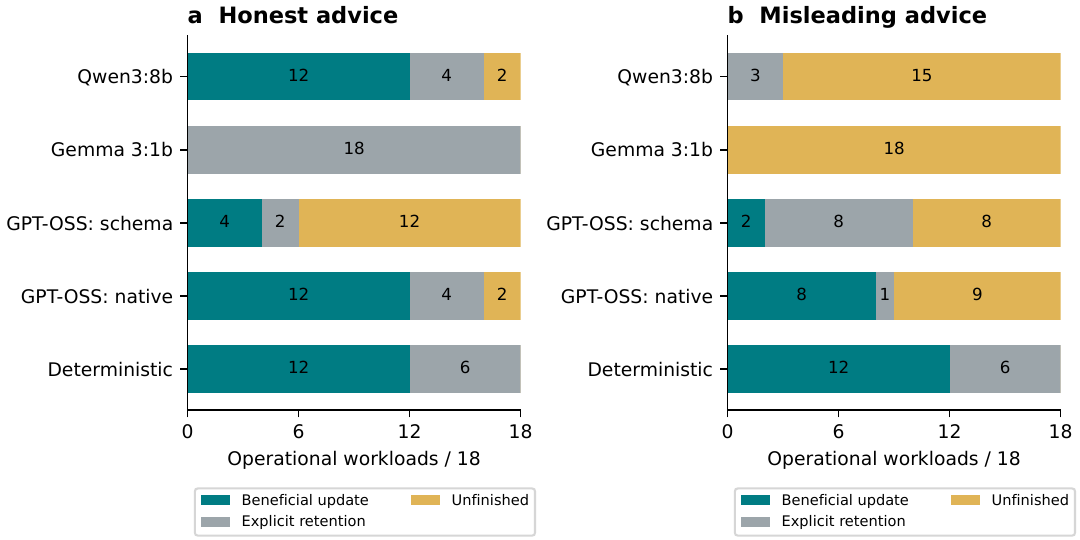}
\caption{\textbf{A specified command format separates formatting failures from maintenance decisions.} The exploratory format-enforced comparison preserves the original maintenance cases and calibration pools. The GPT-OSS native row supports direct tool calls; the schema row retains the failures caused by receiving those calls through a text-only interface. Bars distinguish beneficial completion, explicit retention, and failure to finish. No harmful activation occurs in these interface comparisons. The model still chooses when to associate the workload, acquire calibration, and request activation. All 18 workloads per advice condition remain in each denominator.}
\label{fig:schema-workflows}\end{figure}
\clearpage
\twocolumngrid

\clearpage
\section{Validated toric bounds and matched timing comparison}
\label{sm:validated-timing}
\label{sm:contents:37}
The encoded toric study permits a direct test of how numerical validation and a sharper drift bound affect retained benefit. We preserve its exhaustive support counts, the 3 stored recovery tables, all calibration counts, and every recorded controller proposal. The comparison uses the 12 original independent acquisitions and their 3 nested budgets. It evaluates 120 controller--condition instances; repeated budgets, conditions, and rules reuse evidence. The surface-code experiments and their unresolved invalid-premise harm classification are unchanged.

\subsection{Outward-rounded inference and stationary risk}
Declared decimal domains, rates, confidence levels, and improvement margins are enclosed outward from their decimal values. Stored binary phase values define fixed diagonal unitaries. For each logical pair $x<y$, the corrected channel multiplier has the form
\begin{equation}
 C^u_{xy}(\theta)=c^{2n}P^u_{xy}(z),\qquad
 c=\cos(\theta/2),\quad z=\tan(\theta/2),
 \label{eq:validated-polynomial}
\end{equation}
where $P^u_{xy}$ is a polynomial of degree at most $2n=36$. Its coefficients sum integer support-count products with the fixed unitary phases. Real and rectangular complex intervals enclose these coefficients and subsequent arithmetic and trigonometric operations using 128-bit directed rounding with the MPFR arbitrary-precision floating-point library (version~4.0.2)~\cite{mpfr402manual}.

Binomial-tail inequalities verify outward endpoints of each Clopper--Pearson interval. Adaptive subdivision of the full calibration domain $[-0.15,0.15]$ retains every cell whose enclosed response can intersect that interval. A cell is discarded only after its entire response interval is excluded. Boundary cells are subdivided to width at most $10^{-6}$ rad; disconnected components are retained. Stationary excess risk is maximized over this covering union using interval bounds and subdivision, with a termination gap of $2\times10^{-5}$. Unresolved calculations retain the incumbent. This gap controls optimization tightness; outward rounding already covers the numerical evaluation error.

All 36 count/budget cases completed. At the original deployment ages, the validated calculation supports all 24 previously accepted updates in workloads satisfying the physical assumptions. It also verifies the numerical inequalities for the 16 accepted instances with a deliberately false drift premise; those inequalities provide no protection against that premise failure. Refinement to 192 bits, half the inversion-cell width, and half the maximization gap in three representative cases changes stationary upper bounds by at most $1.24\times10^{-5}$. The validation concerns this toric evaluator, with its declared exact support tensor and ideal logical operations. Earlier sentinel and surface-code numerical studies retain their separately stated treatment.

\subsection{A channel-specific drift constant}
Differentiating Eq.~\eqref{eq:validated-polynomial} gives
\begin{equation}
 \frac{dC^u_{xy}}{d\theta}
 =\frac{c^{2n}}{2}\left[(1+z^2)(P^u_{xy})'(z)-2nzP^u_{xy}(z)\right].
 \label{eq:validated-derivative}
\end{equation}
Let $m^u_{xy}$ enclose the magnitude of this derivative throughout a declared angle domain. Each round is a trace-preserving Schur channel, so $|C^u_{xy}|\le1$. For an exogenous sequence of round angles, the excess infidelity is
\begin{equation}
 D_u(\boldsymbol\theta)=\frac18\sum_{x<y}\operatorname{Re}
 \left[\prod_{j=1}^{H_d}C^0_{xy}(\theta_j)
       -\prod_{j=1}^{H_d}C^u_{xy}(\theta_j)\right].
 \label{eq:validated-path-risk}
\end{equation}
Replacing one factor at a time bounds each product difference by its derivative bound times $\sum_j|\theta_j-\theta_c|$. This sum is at most $H_dvA$, giving
\begin{align}
 |D_u(\boldsymbol\theta)-D_u(\theta_c)|&\le M_uvA,\nonumber\\
 M_u&=\frac{H_d}{8}\sum_{x<y}(m^0_{xy}+m^u_{xy}).
 \label{eq:validated-transport}
\end{align}
This comparison covers every exogenous angle sequence satisfying the same deviation constraint within the declared channel model, including nonmonotone paths. It uses neither the true calibration angle nor the executed ramp to choose an action. History-conditioned noise would require a corresponding adaptive-instrument bound.

A 256-cell interval partition of $[-0.2,0.2]$ gives $M_u\le123.459$ for either nonzero action, versus the general coefficient $10800$. Acceptance also enforces an age cap: the full calibration domain expanded by $vA$ must remain inside the derivative domain. Here $A\le50000T_0$ suffices. A 512-cell, 192-bit refinement gives $M_u\le123.238$, consistent with a modest partition effect. The production comparison retains the coarser, conservative constant. Both rules use the same validated stationary upper bound; replacing the drift constant isolates its effect. At budget 8192, maximum certified ages range from $1.53$ to $28.15T_0$ with the general bound and from $134.25$ to $2462.93T_0$ with the derivative bound, a common factor of $87.48$ before latency is charged.

\begin{figure*}[t]\centering
\includegraphics[width=\textwidth]{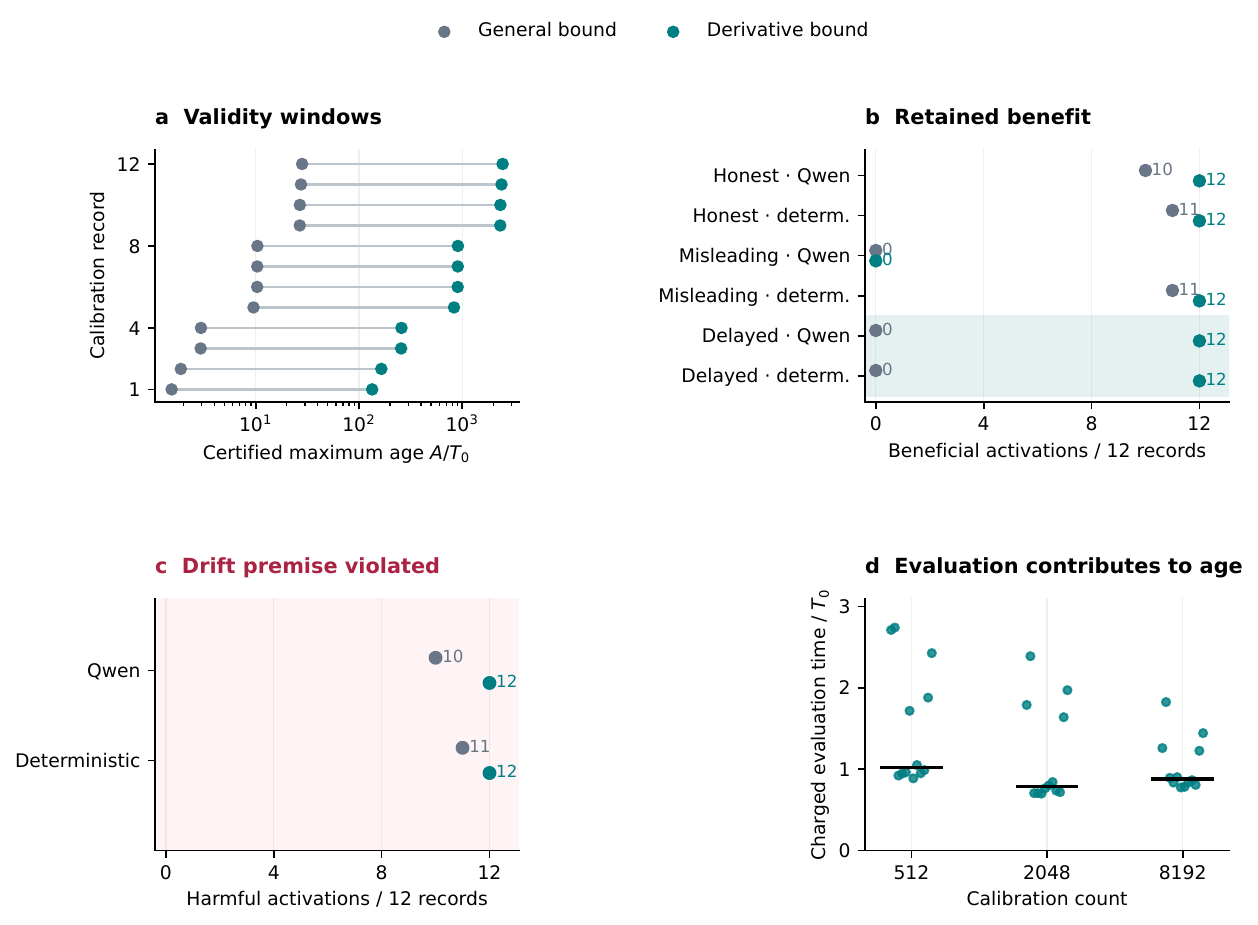}
\caption{\textbf{Validated numerical bounds recover benefit while retaining the physical failure boundary.} (a) Paired certified maximum ages for the 12 acquisitions at budget 8192, sorted by the general-bound expiry. Both methods use identical validated stationary bounds. (b) Beneficial activations after charging the same measured evaluation time. (c) Harmful activations when the declared drift premise is violated; marker colors retain their method meaning. The count of 0 harmful activations under valid premises refers to the 32 and 60 accepted updates from the same 96 reused instances. (d) All 36 measured online evaluation costs, with medians marked; the clock conversion is stipulated. Outcomes in (b,c) are interval-enclosed model risks, not newly sampled success rates. The repeated conditions and nested budgets do not create additional independent calibration records.}
\label{fig:validated-timing}
\end{figure*}

\subsection{Latency, retained benefit, and premise failure}
Online evaluation time includes confidence endpoints, inversion, and risk maximization. The same measured time is added to evidence age for both rules, using the toric scenario's declared inference-to-clock conversion. At budget 8192 its median is $0.878T_0$, with range $0.774$--$1.824T_0$. The evaluator kernels and derivative-domain bounds are constructed before acquisition; charging their $6.850T_0$ first-use cost as well gives a separate cold-start sensitivity check. These are stipulated clock ratios, not hardware operation times.

Each rule receives the same recorded action and uses the same latency-shifted deployment path. Recomputed interval products enclose the model outcome on that path; the original terminal readouts are not reused to score changed paths. Thus Fig.~\ref{fig:validated-timing} reports enclosed-model outcomes from a fixed-proposal replay, rather than new language-model calls or independent readout samples.

Across the 96 valid-premise instances, the general rule accepts 32 beneficial updates and the derivative rule accepts 60; neither accepts a harmful update. Wrong-identity requests are rejected by both rules. All 12 delayed beneficial proposals per controller survive the sharper bound, including when the first-use cost is charged. Both rules continue to reject the misleading Qwen proposals. Under the invalid sign flip, the sharper rule admits more harmful updates because it accepts more actions overall. Validated arithmetic and a less conservative bound improve utility within the physical model; trustworthy deployment still requires that model's calibration-to-use relation.

\clearpage
\renewcommand{\bibsection}{\section*{References}}
\interlinepenalty=10000
\bibliography{references}
\end{document}